\numberwithin{equation}{section}
	\def\pagestyle#1{}%
	\def\thispagestyle#1{}%
	\def\labelmarginpar#1{}%
\DeclareMathOperator{\Sub}{\mathrm{Sub}}
\providecommand{\catname}{\mathsf} 
\providecommand{\clsname}{\mathcal}
\providecommand{\oname}[1]{{\mathop{\mathsf{#1}}\xspace}}
\def\defcatname#1{\expandafter\def\csname B#1\endcsname{\catname{#1}}}
\def\defcatnames#1{\ifx#1\defcatnames\else\defcatname#1\expandafter\defcatnames\fi}
\def\defclsname#1{\expandafter\def\csname C#1\endcsname{\clsname{#1}}}
\def\defclsnames#1{\ifx#1\defclsnames\else\defclsname#1\expandafter\defclsnames\fi}
\def\Set{\catname{Set}}
\def\Nom{\catname{Nom}}
\def\Conv{\catname{Conv}}
\newcommand{\Konig}{Kőnig}
\DeclareOldFontCommand{\bf}{\normalfont\bfseries}{\mathbf}
\providecommand{\id}{\mathsf{id}}
\providecommand{\xto}[1]{\,\xrightarrow{#1}\,}
\providecommand{\flatxto}[1]{\xto{#1}}
\providecommand{\dar}{\kern-1.2pt\operatorname{\downarrow}}	
\providecommand{\uar}{\kern-1.2pt\operatorname{\uparrow}}	
\providecommand{\inl}{\oname{inl}}
\providecommand{\inr}{\oname{inr}}
\DeclareSymbolFont{Symbols}{OMS}{cmsy}{m}{n}
\DeclareMathSymbol{\iobj}{\mathord}{Symbols}{"3B}
\providecommand{\pacman}[1]{}					                     %
\newcommand{\undefine}[1]{\let #1\relax}					                       %
\providecommand{\mone}{{\text{\kern.5pt\rmfamily-}\mathsf{\kern-.5pt1}}}
\def\mfix#1{\oname{#1}\@ifnextchar\bgroup\@mfix{}}	       %
\def\@mfix#1{#1\@ifnextchar\bgroup\mfix{}}			           %
\providecommand{\case}[3]{\mfix{case}{\mathbin{}#1}{of}{#2}{\kern-1pt;}{\mathbin{}#3}}
\DeclareMathSymbol{\mathinvertedexclamationmark}{\mathord}{operators}{'074}
\DeclareMathSymbol{\mathexclamationmark}{\mathord}{operators}{'041}
\newcommand{\raisedmathinvertedexclamationmark}{%
  \mathord{\mathpalette\raised@mathinvertedexclamationmark\relax}%
}
\newcommand{\raised@mathinvertedexclamationmark}[2]{%
  \raisebox{\depth}{$\m@th#1\mathinvertedexclamationmark$}%
}
\newcommand{\lfplong}{locally finitely presentable\xspace}
\newcommand{\Lfplong}{Locally finitely presentable\xspace}
\newcommand{\lfp}{\lfplong}
\newcommand{\Lfp}{\Lfplong}
\newcommand{\Pow}{\mathcal{P}}
\DeclareMathOperator{\colim}{\mathsf{colim}}
\newcommand{\initial}{\raisedmathinvertedexclamationmark}
\newcommand{\qqand}{\qquad\text{and}\qquad}
\newcommand{\Powf}{\mathcal{P}_{\omega}}
\newcommand{\ar}{\mathsf{ar}}
\newcommand{\supp}{\mathsf{supp}}
\newcommand{\epito}{\twoheadrightarrow}
\renewcommand{\S}{{\mathcal{S}}}
\newcommand{\seq}{\subseteq}
\newcommand{\ol}{\overline}
\providecommand{\C}{}
\providecommand{\D}{}
\newcommand{\fp}{{\mathrm{fp}}}
\newcommand{\fg}{{\mathrm{fg}}}
\newcommand{\rec}{{\mathrm{rec}}}
\newcommand{\wf}{{\mathrm{wf}}}
\newcommand{\frc}{\Coalg_{\fp,\rec}(H)}
\newcommand{\fwf}{\Coalg_{\fp,\wf}(H)}
\renewcommand{\C}{{\mathbb{C}}}
\renewcommand{\D}{{\mathbb{D}}}
\newcommand{\A}{\mathcal{A}}
\newcommand{\iref}[2]{\autoref{#1}.\ref{#1:#2}}
\renewcommand{\id}{{\mathsf{id}}}
\newcommand{\Nat}{\mathds{N}}
\newcommand{\f}{\oname{f}}
\newcommand{\takeout}[1]{\empty}
\newcommand{\Dist}{\mathcal{D}}
\newcommand{\CPowf}{\Pow_{\mathsf{c},\omega}}
\DeclareMathOperator{\Coalg}{\mathsf{Coalg}}
\DeclareMathOperator{\Alg}{\mathsf{Alg}}
\renewcommand{\rho}{\varrho}
\newcommand{\pullbackangle}[2][]{\arrow[phantom,to path={
                     -- ($ (\tikztostart)!1cm!#2:([xshift=8cm]\tikztostart) $)
                        node[anchor=west,pos=0.0,rotate=#2,
                        inner xsep = 0]
                        {\begin{tikzpicture}[minimum
                        height=1mm,baseline=0,#1]
    \draw[-] (0,0) -- (.5em,.5em) -- (0,1em);
                        \end{tikzpicture}}}]{}}
\newsavebox{\@brx}
\newcommand{\llangle}[1][]{\savebox{\@brx}{\(\m@th{#1\langle}\)}%
  \mathopen{\copy\@brx\kern-0.5\wd\@brx\usebox{\@brx}}}
\newcommand{\rrangle}[1][]{\savebox{\@brx}{\(\m@th{#1\rangle}\)}%
  \mathclose{\copy\@brx\kern-0.5\wd\@brx\usebox{\@brx}}}
\renewcommand{\c}{\colon}
\newcommand{\xra}{\xrightarrow}
\renewcommand{\xto}{\xra}
\newcommand{\monoto}{\rightarrowtail}
\newcommand{\subto}{\hookrightarrow}
\renewcommand{\Nat}{\mathbb{N}}
\newcommand{\gcat}{\mathbb{C}}
\renewcommand{\epsilon}{\varepsilon}
\newcommand*\xbar[1]{%
  \kern.2em\hbox{%
    \vbox{%
      \hrule height 0.5pt %
      \kern0.5ex%
      \hbox{%
        \kern-0.2em%
        \ensuremath{#1}%
        \kern-0.4em%
      }%
    }%
  }\kern.4em %
} 
\newcommand{\monto}{\@ifstar{\@mtolifted}{\@mto}}
\newcommand{\@mto}{\multimapdot}
\newcommand{\@mtolifted}{\mathbin{\xbar{\multimapdot}}}
\newcommand{\app}[2]{\,}
\def\resettheorembrackets{
\def\theorembracketopen{(}
\def\theorembracketclose{)}
}
\def\@spopargbegintheorem#1#2#3#4#5{\trivlist
      \item[\hskip\labelsep{#4#1\ #2}]{#4{\theorembracketopen}#3{\theorembracketclose}\@thmcounterend\ }#5}
\newcommand{\resetCurThmBraces}{%
  \gdef\curThmBraceOpen{(}%
  \gdef\curThmBraceClose{)}}
\newcommand{\removeThmBraces}{%
  \gdef\curThmBraceOpen{}%
  \gdef\curThmBraceClose{}}
\newenvironment{notheorembrackets}{\removeThmBraces}{\resetCurThmBraces}
\newcommand{\proofappendixbegin}[2]{%
  \phantomsection%
  \subsection*{\textbf{#1~\autoref{#2}}}%
  \addcontentsline{toc}{subsection}{#1~\autoref{#2}}%
  \label{#2:proof}%
  \def\proofappendix@qedsymbolmissing{\qed}
}
\newcommand{\proofappendixend}{%
  \proofappendix@qedsymbolmissing%
}
\let\oldqedhere\qedhere
\def\qedhere{\global\def\proofappendix@qedsymbolmissing{}\oldqedhere}
\newenvironment{proofhere}[2][Proof of]{%
  \proof{#1~\autoref{#2}}%
}{%
  \endproof%
  \par%
}
  \newenvironment{proofappendix}[2][Proof of]{%
    \begin{proofhere}[#1]{#2}
  }{%
    \end{proofhere}
  }
\tikzstyle{shiftarr}=[
\tikzset{
    commutative diagrams/.cd,
    arrow style=tikz,
    diagrams={>={Straight Barb[length=1.75pt,width=3.85pt,inset=1.95pt]}}, %
    row sep=large,
    column sep = huge
}
\tikzset{cong/.style={draw=none,edge node={node [sloped, allow upside down, auto=false]{$\cong$}}},
         iso/.style={draw=none,every to/.append style={edge node={node [sloped, allow upside down, auto=false]{$\cong$}}}}}
\newcommand{\set}[2][]{%
  \ifthenelse{\equal{#2}{}}{%
    \ensuremath{{#1\emptyset}}%
  }{%
    \ensuremath{{#1\{#2#1\}}}%
  }%
}
\newcommand{\Z}{\ensuremath{\mathbb{Z}}}
\newcommand{\injmono}{\ensuremath{\overset{\smash{\raisebox{-1pt}{\ensuremath{\scriptstyle+}}}}{\rightarrowtail}}}%
\newcommand{\hpresint}{\ensuremath{H\cap}\xspace}
\patchcmd{\thmhead}{(#3)}{\curThmBraceOpen #3\curThmBraceClose }{}{}
\theoremstyle{plain}
\theoremstyle{definition}
\newcommand{\lipicsnewtheorem}[2]{%
  \newaliascnt{#1}{theorem}
  \newtheorem{#1}[#1]{#2}
  \aliascntresetthe{#1}
  \expandafter\newcommand\csname #1autorefname\endcsname{#2}
}
\newcommand{\Perm}{\mathfrak{S}}
\newcommand{\At}{\mathbb{A}}
\title{Well-Founded Coalgebras Meet \Konig's Lemma} %
\patchcmd{\author}{Anonymous author}{Anonymous author(s)}{}{}
\renewcommand\keywordsHeading{\def\@keywords{\vspace{-1.8\baselineskip}}}
\renewcommand\subjclassHeading{\def\@ccsdescString{\vspace{-1.8\baselineskip}}}
\author{Henning Urbat}{Friedrich-Alexander-Universität Erlangen-Nürnberg, Germany\and\url{https://www8.cs.fau.de/people/henning-urbat/}}{henning.urbat@fau.de}{https://orcid.org/0000-0002-3265-7168}{Deutsche Forschungsgemeinschaft (DFG, German
  Research Foundation) -- project numbers 470467389 and 569130867.}
\author{Thorsten Wißmann}{%
Friedrich-Alexander-Universität Erlangen-Nürnberg, Germany\and
\url{https://thorsten-wissmann.de}}{thorsten.wissmann@fau.de}{https://orcid.org/0000-0001-8993-6486}{%
}
\authorrunning{H. Urbat and T. Wißmann} %
\keywords{\Konig's Lemma, Well-Foundedness, Coalgebra}
\begin{document}

\maketitle

\begin{abstract}
\Konig's lemma is a fundamental result about trees with countless applications in mathematics and computer science. In contrapositive form, it states that if a tree is finitely branching and well-founded (i.e.\ has no infinite paths), then it is finite. We present a \emph{coalgebraic} version of \Konig's lemma featuring two dimensions of generalization: from finitely branching trees to coalgebras for a finitary endofunctor $H$, and from the base category of sets to a locally finitely presentable category $\C$, such as the category of posets, nominal sets, or convex sets. Our coalgebraic \Konig's lemma states that, under mild assumptions on $\C$ and $H$, every well-founded coalgebra for~$H$ is the directed join of its well-founded subcoalgebras with finitely generated state space -- in particular, the category of well-founded coalgebras is locally presentable. As applications, we derive versions of \Konig's lemma for graphs in a topos as well as for nominal and convex transition systems. Additionally, we show that the key construction underlying the proof gives rise to two simple constructions of the initial algebra (equivalently, the final recursive coalgebra) for the functor $H$: The initial algebra is both the colimit of all well-founded and of all recursive coalgebras with finitely presentable state space. Remarkably, this result holds even in settings where well-founded coalgebras form a proper subclass of recursive ones. The first construction of the initial algebra is entirely new, while for the second one our approach yields a short and transparent new correctness proof.
\end{abstract}

\section{Introduction}
In 1927, Dénes \Konig~\cite{konig1927} proved a result in graph theory, known as \emph{\Konig's lemma}, that turned out to be of fundamental importance. In its most popular formulation, the lemma asserts that every infinite, finitely branching tree contains an infinite path. While this statement looks almost obvious (with a classical, non-constructive mindset at least), it isolates an important selection principle that appears in different shapes and forms throughout mathematics and computer science, for instance, in proofs of compactness and completeness theorems in logic~\cite{teller89}, Ramsey theory~\cite{szekely23}, or the analysis of infinite runs and liveness properties of systems in verification~\cite{brv04} and automata theory~\cite{gmms14,isb16}. \Konig's lemma is moreover deeply tied with the foundations of set theory, specifically the axiom of countable choice~\cite{hr98,jech02}. Variants such as the \emph{weak \Konig's lemma} (the restriction to binary trees) and Brouwer's \emph{fan theorem}~\cite{brouwer27} have been extensively studied in (reverse) constructive mathematics and computability theory~\cite{di21,hirschfeldt14}.

\Konig's lemma applies to graphs, which are the simplest instance of \emph{state-based systems}, that is, systems with a notion of \emph{state space} and a notion of \emph{transition} between states. Such systems can be uniformly modelled at categorical generality using \emph{coalgebras}~\cite{jacobs16,rutten00}. A {coalgebra} is given by an endofunctor $H$ on a category $\C$ that specifies the {system type}, an object $C$ of $\C$ determining the state space, and a morphism $c\colon C\to HC$ determining the transition structure. By varying the base category $\C$ and the type functor $H$, countless different incarnations of state-based systems emerge in the coalgebraic framework. For example, graphs are precisely coalgebras for the power set functor $\Pow$ on the category of sets, non-deterministic register automata~\cite{KaminskiFrancez94} over an infinite alphabet $\mathds{A}$ of data values correspond to coalgebras for the functor $\{0,1\}\times \Pow^{\mathds{A}}$ on the topos of nominal sets~\cite{BojanczykEA14} (with power object functor $\Pow$), and (determinized) probabilistic automata are coalgebras for a functor that involves the convex power set functor on the category of convex sets~\cite{bss17}.

In the present work, we aim to address the following natural question:
\[ \emph{Which coalgebras (beyond graphs) feature a \Konig's lemma?} \]
To motivate our approach, let us revisit the case of graphs ($\Pow$-coalgebras). Note first that finitely branching graphs are precisely coalgebras for the \emph{finite} power set functor $\Powf$. We say that a graph is \emph{well-founded} if it contains no infinite paths. Then König's lemma can be restated, in contrapositive form and in coalgebraic terminology, as follows:
\begin{equation}\label{eq:konig-pow} \emph{Every state of a well-founded $\Powf$-coalgebra lies in some finite subcoalgebra}. \end{equation}
The initial observation towards a generalization of \eqref{eq:konig-pow} is that the notion of well-foundedness is available for \emph{arbitrary} coalgebras~\cite{taylor99,amm25}: A coalgebra $c\colon C\to HC$ is \emph{well-founded} if it has no proper {cartesian} subcoalgebras (see \autoref{sec:prelim} for details). Moreover, $\Powf$ is the prototypical \emph{finitary} functor, that is, a functor that preserves filtered colimits. The first coalgebraic extension of \Konig's lemma, recently proved by Ad\'amek, Milius, and Moss~\cite[Prop.~9.2.19]{amm25}, replaces the finite power set functor $\Powf$ with an arbitrary finitary set functor $H$:
\begin{equation}\label{eq:konig-fin-set} \emph{Every state of a well-founded $H$-coalgebra lies in some finite subcoalgebra}. \end{equation}
The proof from \emph{op.~cit.}\ exploits the fact that every coalgebra $c\colon C\to HC$ for a finitary set functor can be associated with a \emph{canonical graph} $\ol{c}\colon C\to \Powf C$ representing the transition structure of the coalgebra. In this way, \eqref{eq:konig-fin-set} reduces to the classical \Konig's lemma \eqref{eq:konig-pow}. 

The main contribution of our paper is a substantial generalization of \eqref{eq:konig-fin-set} to coalgebras over \emph{abstract} categories. This step involves two non-trivial challenges: First, canonical graphs are specific to coalgebras over the category of sets, so the reduction argument by Ad\'amek~et~al.~\cite{amm25} no longer works. Second, it is not clear what a `finite' coalgebra is in a general category. To address these challenges, we base our theory on \emph{\lfp} categories (which include, for instance, the categories of sets, nominal sets, and convex sets). The key feature of \lfplong categories is that they come with two well-behaved abstract notions of `finite' object, namely \emph{finitely presentable} and \emph{finitely generated} objects. Finite coalgebras thus naturally generalize to coalgebras with either finitely presentable or finitely generated state space, or \emph{fp/fg-carried coalgebras} for short. Our main result (\autoref{thm:koenig}) establishes \Konig's lemma at this level of generality:

\medskip\noindent \textsf{\bfseries Coalgebraic \Konig's Lemma.} \emph{Let $\C$ be a \lfplong category with monic coproduct injections, and let $H$ be a finitary functor on $\C$ preserving binary intersections. Then every well-founded $H$-coalgebra is the directed join of its fg-carried well-founded subcoalgebras.}

\medskip\noindent In \autoref{sec:applications}, we explore several instantiations of this result beyond the category of sets. Notably, we demonstrate that every \lfplong topos features a \Konig's lemma for graphs, and we derive versions of \Konig's lemma for nominal and convex transition systems. The latter systems are \emph{infinitely} branching, but their branching is finitely generated in an algebraic sense, which turns out to be enough to guarantee existence of infinite paths.

The core ingredient for our proof of the Coalgebraic \Konig's Lemma is a general construction on coalgebras, \emph{coproduct extension}~\cite{amv06,urbat17,lffiac,mpw16}, which allows to add new states to a coalgebra in a restricted way. The key observation is that coproduct extension preserves well-foundedness. This fact leads us to a further fundamental result on well-founded coalgebras, namely a new construction of the \emph{initial algebra} for the functor $H$ (\autoref{thm:wf-initial-algebra}):

\medskip\noindent \textsf{\bfseries Initial Algebra Theorem.} \emph{Under the assumptions of the Coalgebraic \Konig's Lemma, the initial algebra for the functor $H$ is the colimit of all fp-carried well-founded coalgebras.}

\medskip\noindent This theorem relates to a recent result by Wißmann and Milius~\cite{wm24}, who showed that the initial algebra is the colimit of all fp-carried \emph{recursive} coalgebras~\cite{taylor99}. Under the assumptions of our Initial Algebra Theorem, well-founded coalgebras form a (in some instances proper) subclass of recursive ones. Thus,
the Initial Algebra Theorem can be seen as an improvement of the corresponding result for recursive coalgebras because it operates with a smaller diagram, and more importantly, well-foundedness of a given coalgebra is typically much easier to prove than recursivity in settings where the two notions do not coincide. In addition, the idea underlying the proof of the Initial Algebra Theorem carries over to its recursive version and yields a transparent new proof of the latter that greatly simplifies the original argument~\cite{wm24}.

\section{Preliminaries}\label{sec:prelim}
We start with some background on \lfplong categories~\cite{adamek_rosicky_1994} and on well-founded and recursive coalgebras~\cite{amm25,taylor99}; see the cited textbooks for more details. Readers should be familiar with basic category theory~\cite{mac2013categories}, such as functors and (co)limits.

\subparagraph*{Locally Finitely Presentable Categories}
The results of our paper apply to coalgebras over \lfplong categories, which are categories where every object can be approximated from below by suitably `finite' objects. For example, the category of sets is \lfplong because every set is a union of finite sets, and the category of vector spaces is \lfplong because every vector space is a  union of finite-dimensional spaces. The categorical notion of finiteness is expressed abstractly using filtered colimits.

A category $I$ is \emph{filtered} if (i) $I$ is non-empty, (ii) for any two objects $i,j\in I$ there exists a cospan $i\to k \leftarrow j$ in $I$, and (iii) for every parallel pair $f,g\colon i\to j$ there exists a morphism $h\colon j\to k$ such that $h\circ f = h\circ g$. For example, finitely cocomplete categories are filtered. A poset (viewed as a category) is filtered iff it is directed, i.e.\ every finite subset has an upper bound. A \emph{filtered diagram} $D\colon I\to \C$ is a diagram whose scheme $I$ is filtered, and a \emph{filtered colimit} is a colimit of a filtered diagram. A functor is \emph{finitary} if it preserves filtered colimits.

\begin{example}[Filtered Colimits in $\Set$]\label{ex:filtered-set}
In the category $\Set$ of sets and functions, the colimit of a filtered diagram $D\colon I\to \Set$ is given by $(\coprod_i D_i)/{\approx}$ where $\approx$ is the equivalence relation with  $(x,i)\approx (y,j)$ iff there exist morphisms $f\colon i\to k$ and $g\colon j\to k$ in $I$ such that $Df(x)=Dg(y)$. The colimit injection $c_i\colon D_i\to (\coprod_i D_i)/{\approx}$ sends $x\in D_i$ to the equivalence class of $(x,i)$. An endofunctor $H$ on $\Set$ is finitary iff for every set $X$ and every $x\in HX$, there exists a finite subset $m\colon M\subto X$ such that $x\in Hm[HM]$~\cite[Rem.\ 3.14]{amsw19}.
\end{example}

An object $X$ of a category $\C$ is \emph{finitely presentable} if its hom-functor $\C(X,-)\colon \C\to \Set$ is finitary. More explicitly, this means that given a filtered diagram $D\colon I\to \C$ with colimit cocone $(c_i\colon D_i\to C)_{i\in I}$, every morphism from $X$ to $C$ \emph{factorizes essentially uniquely through the cocone $(c_i)$}, that is, the following two statements hold:
\begin{enumerate}
\item every morphism $f\colon X\to C$ factorizes as $f=c_i\circ g$ for some $i\in I$ and $g\colon X\to D_i$;
\item for any two morphisms $g,g'\colon X\to D_i$ with $i\in I$ and $c_i\circ g = c_i\circ g'$, there exists $k\colon i\to j$ in $I$ such that $Dk\circ g = Dk\circ g'$.
\end{enumerate}
Besides finitely presentable objects, there is also the weaker notion of \emph{finitely generated} object. An object $X$ of $\C$ is \emph{finitely generated} if for every filtered diagram $D\colon I\to \C$ whose colimit cocone $(c_i\colon D_i\monoto C)_{i\in I}$ consists of monomorphisms, every morphism from $X$ to $C$ factorizes through some $c_i$. (Here the essential uniqueness of the factorization comes for free because $c_i$ is monic.) We let $\C_\fp,\C_\fg\hookrightarrow \C$ denote the full subcategories given by finitely presentable and finitely generated objects, respectively. Note that $\C_\fp\seq \C_\fg$, but generally finitely presentable and finitely generated objects do not coincide. 

\begin{notheorembrackets}
\begin{lemma}\label{lem:fp-fp-fin-colimits} The subcategories $\C_\fp$, $\C_\fg \subto \C$ have the following closure properties:
\begin{enumerate}
\item Both $\C_\fp$ and $\C_\fg$ are closed under finite colimits~\cite[Prop.~1.3]{adamek_rosicky_1994}. 
\item $\C_\fg$ is closed under strong quotients (represented by strong epimorphisms)~\cite[Prop.~1.69]{adamek_rosicky_1994}.
\end{enumerate}
\end{lemma}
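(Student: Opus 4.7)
The plan is to treat the two parts separately by direct arguments from the definitions of finite presentability, finite generation, and filtered colimit.

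For part (1), I would take a finite diagram $D\colon J\to \C_\fp$ with colimit cocone $(d_j\colon D_j\to X)_{j\in J}$ and test $X$ against an arbitrary filtered colimit $(e_i\colon E_i\to C)_{i\in I}$. Given $f\colon X\to C$, each composite $f\cdot d_j$ factors essentially uniquely as $e_{i_j}\cdot g_j$ for some $g_j\colon D_j\to E_{i_j}$ because $D_j$ is finitely presentable. Since $J$ has only finitely many objects and $I$ is filtered, I would raise all these factorizations to a common index $i\in I$, yielding maps $g'_j\colon D_j\to E_i$ satisfying $e_i\cdot g'_j=f\cdot d_j$. For each morphism $h\colon j\to j'$ of $J$ the two candidates $g'_{j'}\cdot D(h)$ and $g'_j$ compose equally with $e_i$, so by the essential-uniqueness clause for $D_j$ they become equal after a further step in $I$; since $J$ has finitely many morphisms and $I$ is filtered, all these equalizations can be performed at a single $i'\in I$. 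This produces an honest cocone under $D$ and hence the desired factorization of $f$ through $e_{i'}$. Essential uniqueness of the factorization is analogous, reducing an equation out of $X$ to equations out of each $D_j$ and merging them over $J$. The same argument specializes to $\C_\fg$: there the legs $e_i$ of the test colimit are monic, which makes the factorizations genuinely unique and removes the second round of equalizations entirely.

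For part (2), let $e\colon X\epito Y$ be a strong epimorphism with $X\in\C_\fg$. Given a filtered colimit $(c_i\colon E_i\monoto C)_{i\in I}$ with monic legs and a morphism $f\colon Y\to C$, the composite $f\cdot e\colon X\to C$ factors through some $c_i$ via $g\colon X\to E_i$ because $X$ is finitely generated. The resulting commutative square $c_i\cdot g = f\cdot e$ has a strong epi on one side and a monomorphism on the other, so the orthogonality of strong epis to monos delivers a diagonal fill-in $d\colon Y\to E_i$ with $c_i\cdot d=f$, witnessing $Y\in\C_\fg$.

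The hard part will be the bookkeeping in (1): one must combine two successive filtered-colimit arguments---first raising every leg of $D$ to a common index in $I$, then enforcing compatibility across each of the finitely many morphisms of $J$---while checking that the output is independent of the choices made, and then adapt the same pattern to handle essential uniqueness. Part (2) by contrast is a one-line application of the strong-epi/mono diagonal fill-in.
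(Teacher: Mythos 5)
Your proof is correct, and it is the standard argument for this fact; the paper itself offers no proof, delegating both parts to Ad\'amek--Rosick\'y, whose proofs proceed exactly as you do (part (1) being an unwinding of the isomorphism $\C(\colim_j D_j,-)\cong\lim_j\C(D_j,-)$ together with the fact that finite limits commute with filtered colimits in $\Set$, and part (2) being the strong-epi/mono diagonal fill-in). The bookkeeping you flag in (1) goes through as you describe, and your observation that monic legs make the second equalization round unnecessary for $\C_\fg$ is exactly the right simplification.
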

\end{notheorembrackets}

A category $\C$ is \emph{\lfplong} if it is cocomplete, and there exists a (small) set~$\A$ of finitely presentable objects such that every object of $\C$ is a filtered colimit of objects in $\A$, that is, a colimit of some filtered diagram $D\colon I\to \C$ such that $D_i\in \A$ for all $i\in I$. Informally, this says that every object can be constructed from `finite' objects.

\begin{example}[Locally Finitely Presentable Categories]\label{ex:categories} 
\begin{figure*}[t]
\centering
\def\arraystretch{1.0}
\begin{tabular}{@{}p{1.5cm}p{2.5cm}p{3.2cm}p{5.5cm}@{}}
\toprule
$\C$ & Objects & Morphisms & $\C_\fp$ ($=\C_\fg$) \\
\midrule
$\Set$ & sets & functions & finite sets~ \\
$\Nom$ & nominal sets &  equivariant maps & orbit-finite sets \\
$\Conv$ & convex sets & affine maps & finitely generated convex sets \\
\bottomrule
\end{tabular}
\caption{\Lfplong categories.}\label{fig:categories}
\end{figure*}
In our applications (\autoref{sec:applications}) we consider the \lfp categories of \autoref{fig:categories}. Their finitely presentable and finitely generated objects coincide and are described in the last column. The categories $\Nom$ of nominal sets and $\Conv$ of convex sets are discussed in more detail in \autoref{sec:nom} and \autoref{sec:convex}. Let us mention that there are many more examples of \lfp categories, including all categories of algebras for an equational theory (such as monoids, rings, vector spaces), all categories of relational structures axiomatized by a finitary relational Horn theory (such as posets), and all presheaf categories $\Set^\D$ over a small category $\D$~\cite{adamek_rosicky_1994}. In all these categories, filtered colimits are formed like in $\Set$ (\autoref{ex:filtered-set}).
\end{example}

\begin{rem}\label{rem:loc-pres}
There is a more general notion of \emph{locally $\lambda$-presentable category}, for a regular cardinal $\lambda$, where the role of filtered colimits is taken over by \emph{$\lambda$-filtered colimits}. Here a diagram scheme is \emph{$\lambda$-filtered} if condition (ii) in the above definition of a filtered category is strengthened to ($\text{ii}_\lambda$): Every non-empty set of objects of cardinality less than $\lambda$ has a cospan over it. \Lfp categories correspond to the case $\lambda=\omega$. A category is \emph{locally presentable} if it is locally $\lambda$-presentable for some $\lambda$. An example of a locally ($\omega_1$-)presentable category that is not locally \emph{finitely} presentable is the category of metric spaces and non-expansive maps. The results of our paper currently do not extend to arbitrary locally presentable categories; in fact, we use several results on well-founded coalgebras (see below) which rest on properties specific to ($\omega$-)filtered colimits.
\end{rem}

\Lfp categories are, in many respects, convenient and well-behaved structures. Let us mention a few core results, whose proofs can be found in Ad\'amek and Rosick\'y~\cite{adamek_rosicky_1994} (Remark 1.56, Proposition 1.61, Proposition 1.62, Exercise~1.o).

\begin{lemma}\label{lem:lfp-complete-wp}
Every \lfp category $\C$ is complete and well-powered, and every morphism in $\C$ has a (strong epi, mono)-factorization.
\end{lemma}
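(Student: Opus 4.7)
The plan is to establish the three properties in sequence, exploiting the well-known representation of an \lfp category $\C$ as (equivalent to) a reflective subcategory of the presheaf category $\Set^{\C_\fp^\op}$, with inclusion given by restricted Yoneda $X\mapsto \C(\argument,X)|_{\C_\fp}$ and a reflector preserving finite limits. This is essentially Gabriel--Ulmer-style duality, whose input is precisely the existence of a small strong generator of finitely presentable objects together with cocompleteness.

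First, for completeness, observe that $\Set^{\C_\fp^\op}$ is complete with limits computed pointwise in $\Set$. The fully faithful inclusion $\C\hookrightarrow \Set^{\C_\fp^\op}$ is a right adjoint, hence preserves and reflects limits; since a reflective subcategory of a complete category is complete (one computes the limit in the ambient category and applies the reflector), $\C$ inherits all small limits.

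Second, for well-poweredness, note that every presheaf $F\in\Set^{\C_\fp^\op}$ has only a set of subobjects: a subfunctor $G\subto F$ is determined by the choice of a subset $G(X)\seq F(X)$ for each $X\in\C_\fp$ closed under the functorial action, and as $\C_\fp$ is essentially small, these form a set. Full faithfulness of the inclusion then transfers this bound to $\C$, for subobjects of an object $C\in\C$ embed into the subobjects of its image.

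Third, for (strong epi, mono)-factorizations, given a morphism $f\colon A\to B$ in $\C$, consider the class of factorizations $f=m\circ g$ with $m$ monic. By well-poweredness this class is essentially small; by completeness its wide pullback over $B$ exists, yielding a least monomorphism $m\colon M\monoto B$ through which $f$ factors, say $f=m\circ e$. To verify that $e$ is strong, take any mono $n\colon N\monoto Y$ and a commutative square $n\circ u=v\circ e$; pulling $n$ back along $v$ produces a monic extension of $f$ whose mono-component, by minimality of $m$, factors through $m$, and the resulting diagonal fills the square. The main subtlety is the initial invocation of the representation theorem; once that is granted, the three statements fall out uniformly. A more elementary alternative for the factorization step is to bypass the representation and carry out the "intersection of monic extensions" construction directly, relying only on the already-established completeness and well-poweredness.
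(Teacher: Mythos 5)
Your argument is correct, and it is essentially the standard textbook route: the paper itself gives no proof of this lemma but cites Ad\'amek--Rosick\'y (Remark~1.56, Propositions~1.61--1.62, Exercise~1.o), where completeness is obtained via the reflective embedding into $\Set^{\C_\fp^{\op}}$, well-poweredness from essential smallness of $\C_\fp$, and the factorization as the intersection of all monos through which a morphism factors --- exactly as you do. Two harmless glosses: preservation of monos along the inclusion needs its right-adjointness rather than full faithfulness alone (full faithfulness only gives injectivity on subobjects), and the orthogonality argument for the strong-epi part should be supplemented by the observation that, in the presence of equalizers, the lifting property against all monos already forces $e$ to be an epimorphism.
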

Since a \lfp category $\C$ is cocomplete and has (strong epi, mono)-factorizations, the poset $\Sub(X)$ of subobjects of an object $X$ form a complete lattice: The join of a family of subobjects $m_i\colon X_i\monoto X$ ($i\in I$) is the subobject $m\colon \bigvee_{i\in I} X_i \monoto X$ given by the (strong epi, mono)-factorization of the morphism $[m_i]_{i\in I}$:
\[ 
\begin{tikzcd}[outer sep=0pt]
\coprod_{i\in I} X_i
  \ar[shiftarr={yshift=15}]{rr}{[m_i]_{i\in I}}
  \ar[two heads]{r}
  & \bigvee_{i\in I} X_i \ar[tail]{r}{m}
  & X
   \end{tikzcd}
\]
\emph{Directed} joins of subobjects can be alternatively computed by just forming their colimit. In more detail, a directed family $m_i\colon X_i\monoto X$ ($i\in I$) of subobjects can be regarded as a cocone over the directed diagram $D\colon I\to \C$, $i\mapsto X_i$, where $I$ is ordered by $i\leq j$ iff $m_i\leq m_j$ in the usual partial order of subobjects. Then the following statement holds:

\begin{lemma}\label{lem:lfp-directed-unions}
Let $\C$ be \lfp. For any directed family $m_i\colon X_i\monoto X$ ($i\in I$) of subobjects of an object $X$, its join in $\Sub(X)$ is the colimit of the induced diagram.
\end{lemma}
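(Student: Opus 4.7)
The plan is to show that the mediating morphism $u\colon \colim D \to X$ arising from the cocone $(m_i)_{i\in I}$ over the induced directed diagram $D\colon I \to \C$ is itself a monomorphism, and then to verify that this monomorphism represents the join $\bigvee_{i\in I} X_i$ in $\Sub(X)$.

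First I would set up the diagram precisely. Because the family $(m_i)$ is directed in $\Sub(X)$, for each $i \leq j$ in $I$ there is a unique morphism $d_{ij}\colon X_i \to X_j$ satisfying $m_j \circ d_{ij} = m_i$, and this $d_{ij}$ is automatically monic. Thus $D$ is a directed diagram of monomorphisms with colimit cocone $(c_i\colon X_i \to \colim D)_{i\in I}$, and the cocone $(m_i\colon X_i \to X)_{i\in I}$ on $D$ induces a unique $u\colon \colim D \to X$ satisfying $u \circ c_i = m_i$ for all $i$.

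The main obstacle is showing that $u$ is mono, which is exactly where the \lfp hypothesis enters. I would verify monicity by testing with finitely presentable domains: given a parallel pair $f, g\colon A \to \colim D$ with $u \circ f = u \circ g$, the fact that $A$ is itself a filtered colimit of finitely presentable objects reduces the problem to the case $A \in \C_\fp$. Then $f$ and $g$ each factor through some colimit injection, and by directedness of $I$ one may choose a common index $k$ with $f = c_k \circ f'$ and $g = c_k \circ g'$. The hypothesis $u \circ f = u \circ g$ translates to $m_k \circ f' = m_k \circ g'$, and monicity of $m_k$ yields $f' = g'$, hence $f = g$.

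Once $u$ is known to be mono, identifying $(\colim D, u)$ with $\bigvee_{i\in I} X_i$ in $\Sub(X)$ is purely formal. Each $m_i = u \circ c_i$ factors through $u$, so the join is at most $(\colim D, u)$. For the reverse inequality, I would take the defining $(\mathsf{strong\ epi}, \mathsf{mono})$-factorization $[m_i]_{i\in I} = j \circ e$ of the join as $(J, j)$. The composites $e \circ \inj_i\colon X_i \to J$ form a cocone on $D$, as checked by post-composing with the mono $j$, and the induced mediating morphism $v\colon \colim D \to J$ satisfies $j \circ v = u$ by the uniqueness part of the universal property of $\colim D$ applied to the cocone $(m_i)$. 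This gives $(\colim D, u) \leq (J, j)$, completing the identification.
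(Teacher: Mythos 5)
Your proof is correct, and since the paper does not prove this lemma itself but cites it from Ad\'amek--Rosick\'y (Propositions 1.61--1.62), your argument is essentially the standard one underlying that reference: monicity of the mediating morphism $u$ is tested against finitely presentable objects (which form a generator, justifying the reduction via the jointly epic colimit injections of $A = \colim A_k$), and the identification with the $(\mathsf{strong\ epi},\mathsf{mono})$-factorization of $[m_i]_{i\in I}$ is the formal two-inequality comparison you give. No gaps.
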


We also have a useful characterization of filtered colimits in general:

\begin{lemma}\label{lem:filtered-colimits}
Let $D\colon I\to \C$ be a filtered diagram in a \lfp category~$\C$. A cocone $(c_i\colon D_i\to C)_{i\in I}$ over $D$ forms a colimit cocone iff for every $X\in \C_\fp$, every morphism from $X$ to $C$ factorizes essentially uniquely through $(c_i)$.
\end{lemma}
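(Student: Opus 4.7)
The plan is to prove both implications separately, exploiting the density of finitely presentable objects in an \lfplong category.

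For the forward direction ($\Rightarrow$), suppose $(c_i \colon D_i \to C)_{i\in I}$ is the colimit cocone of $D$. Given $X \in \C_\fp$, the hom-functor $\C(X,-)$ preserves this filtered colimit by definition of finite presentability, so $\C(X,C) \cong \colim_{i \in I} \C(X, D_i)$ in $\Set$. Invoking the explicit description of filtered colimits in $\Set$ from \autoref{ex:filtered-set}, this isomorphism translates verbatim into the essentially unique factorization property: every $f \colon X \to C$ is of the form $c_i \circ g$ for some $i$ and $g$, and two such representatives $g,g'$ over the same $i$ yield the same element of $\C(X,C)$ precisely when a morphism $k \colon i \to j$ in $I$ merges them.

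For the converse ($\Leftarrow$), form the actual filtered colimit $(c'_i \colon D_i \to C')_{i \in I}$ of $D$ in $\C$, and let $h \colon C' \to C$ be the unique mediating morphism satisfying $h \circ c'_i = c_i$ for all $i$. It suffices to show that $h$ is an isomorphism. The key fact from the theory of \lfplong categories is that $\C_\fp$ is a dense strong generator, so a morphism in $\C$ is an isomorphism iff its image under $\C(X,-)$ is a bijection for every $X \in \C_\fp$. Thus we reduce to checking that the map $h_* \colon \C(X, C') \to \C(X, C)$ of post-composition with $h$ is bijective for each $X \in \C_\fp$. Surjectivity is immediate: given $f \colon X \to C$, the hypothesis supplies $i \in I$ and $g \colon X \to D_i$ with $f = c_i \circ g = h \circ c'_i \circ g$, so $f$ lies in the image of $h_*$.

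For injectivity, suppose $f'_1, f'_2 \colon X \to C'$ with $h \circ f'_1 = h \circ f'_2$. Applying the already-established forward direction to $(c'_i)$, each $f'_\ell$ factorizes through $(c'_i)$; using filteredness of $I$ we may pass to a common index and obtain $f'_\ell = c'_i \circ g_\ell$ for some $i \in I$ and $g_\ell \colon X \to D_i$, $\ell \in \{1,2\}$. Then $c_i \circ g_1 = h \circ c'_i \circ g_1 = h \circ c'_i \circ g_2 = c_i \circ g_2$, and the essential uniqueness clause of the hypothesis on $(c_i)$ furnishes $k \colon i \to j$ in $I$ with $Dk \circ g_1 = Dk \circ g_2$. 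Post-composing with $c'_j$ and using the cocone equation $c'_j \circ Dk = c'_i$ yields $f'_1 = c'_i \circ g_1 = c'_i \circ g_2 = f'_2$, as desired. The main technical obstacle is simply the bookkeeping of indices in filtered diagrams; the key conceptual input is the density of $\C_\fp$ in any \lfplong category, without which the reduction to hom-bijections is not available.
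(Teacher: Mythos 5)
Your proof is correct. The paper does not actually prove this lemma --- it is quoted as a standard fact of \lfplong category theory, with only the remark that the forward direction is immediate from the definition of a finitely presentable object --- so there is no in-paper argument to compare against; your argument (compare the given cocone with the genuine colimit via the mediating morphism $h$, then use that $\C_\fp$ is a dense generator so that the restricted hom-functors $\C(X,-)$, $X\in\C_\fp$, jointly reflect isomorphisms) is exactly the standard textbook route, and both directions check out. Two small points worth tightening: in the forward direction, the description of filtered colimits in $\Set$ from \autoref{ex:filtered-set} gives, for two representatives $g,g'$ over the same index $i$, only a cospan $f,f'\colon i\to k$ with $Df\circ g = Df'\circ g'$; you need filteredness condition (iii) to coequalize $f,f'$ and obtain a \emph{single} merging morphism, so the translation is not quite ``verbatim''. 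And the reduction to hom-bijections rests on density of $\C_\fp$ (full faithfulness of the restricted Yoneda embedding, which reflects isomorphisms), not on the strong-generator property alone; since $\C_\fp$ is indeed dense in any \lfplong category, this is fine, but it is the density that does the work.
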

Note that the left-to-right implication holds by the definition of a finitely presentable object. The reverse implication is the interesting part of the lemma. 

Regarding finitary functors, the following result generalizing \autoref{ex:filtered-set} applies:

\begin{notheorembrackets}
\begin{lemma}[{\cite[Thm.~3.4]{amsw19}}]\label{lem:finitary-criterion}
Let $\C$ be a \lfp category with $\C_\fp = \C_\fg$, and let $H$ be a functor that preserves monomorphisms. Then $H$ is finitary iff for every object~$X$ of $\C$, every finitely generated subobject $s\colon S\monoto HX$ factorizes through $Hm\colon HM\monoto HX$ for some finitely generated subobject $m\colon M\monoto X$.  
\end{lemma}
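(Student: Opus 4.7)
The plan is to handle both directions using \autoref{lem:filtered-colimits}, which reduces preservation of filtered colimits to an existence-and-essential-uniqueness condition for factorizations of morphisms from finitely presentable objects through the colimit cocone. Throughout I will exploit $\C_\fp = \C_\fg$, so that fg subobjects are fp and vice versa.

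For $(\Rightarrow)$, suppose $H$ is finitary. I would first observe that in the \lfp category $\C$ every object $X$ is the directed union $X = \bigvee_{M} M$ of its finitely presentable subobjects $m\colon M \monoto X$: take image factorizations of the canonical filtered diagram of fp objects mapping to $X$, note that each image is fg (by \autoref{lem:fp-fp-fin-colimits}) and hence fp since $\C_\fp=\C_\fg$, and invoke \autoref{lem:lfp-directed-unions}. Applying the finitary functor $H$, which additionally preserves monomorphisms by assumption, yields $HX = \bigvee_M HM$ as a directed union with inclusions $Hm\colon HM\monoto HX$. Any fg subobject $s\colon S\monoto HX$ is then a mono from a finitely presentable object into a directed union of monos, so by the definition of finite generatedness $s$ factors through some $Hm$.

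For $(\Leftarrow)$, let $D\colon I\to \C$ be a filtered diagram with colimit cocone $(c_i\colon D_i\to C)_{i\in I}$. By \autoref{lem:filtered-colimits}, it suffices to verify that every morphism $f\colon X\to HC$ from a finitely presentable~$X$ factors essentially uniquely through $(Hc_i)$. \emph{Existence} is straightforward: image-factor $f$ as $f = s\circ e$ to obtain an fg image $s\colon S\monoto HC$, apply the hypothesis to get $s = Hm\circ s'$ through some fg (hence fp) subobject $m\colon M\monoto C$, and then factor $m$ through some cocone leg as $m = c_i\circ \hat m$ using that $M$ is fp; combining gives $f = Hc_i\circ H\hat m\circ s'\circ e$.

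The hard part will be \emph{essential uniqueness}: given $g_1, g_2\colon X\to HD_i$ with $Hc_i\circ g_1 = Hc_i\circ g_2$, I must produce $\kappa\colon i\to \ell$ in $I$ with $HD\kappa\circ g_1 = HD\kappa\circ g_2$. My strategy is to first apply the hypothesis to the fg image of $[g_1, g_2]\colon X+X\to HD_i$ (using that $X+X$ is fp by \autoref{lem:fp-fp-fin-colimits}), obtaining an fg $m_0\colon M_0\monoto D_i$ together with $t_1, t_2\colon X\to HM_0$ such that $g_k = Hm_0\circ t_k$; the assumption rewrites as $H(c_i\circ m_0)\circ t_1 = H(c_i\circ m_0)\circ t_2$. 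Next, take the (strong epi, mono)-factorization $c_i\circ m_0 = n\circ p$ with $n\colon N\monoto C$ and $N$ fg $=$ fp; since $H$ preserves monos, $Hn$ is monic and may be cancelled on the left to yield $Hp\circ t_1 = Hp\circ t_2$. Because $N$ is fp, $n$ factors through some cocone leg as $n = c_j\circ \hat n$. The equation $c_i\circ m_0 = c_j\circ(\hat n\circ p)$ then exhibits two factorizations of the same map from the fp object $M_0$ through the cocone, so essential uniqueness of such factorizations furnishes $\kappa\colon i\to \ell$ and $\kappa'\colon j\to \ell$ in $I$ with $D\kappa\circ m_0 = D\kappa'\circ \hat n\circ p$. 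Finally, for $k\in\{1,2\}$ I would compute
\[
HD\kappa\circ g_k
\;=\; H(D\kappa\circ m_0)\circ t_k
\;=\; HD\kappa'\circ H\hat n\circ (Hp\circ t_k),
\]
and conclude by $Hp\circ t_1 = Hp\circ t_2$ that the right-hand side is the same for $k=1,2$, yielding $HD\kappa\circ g_1 = HD\kappa\circ g_2$ as required.
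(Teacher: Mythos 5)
The paper does not prove this lemma at all: it is imported verbatim from Ad\'amek, Milius, Sousa and Wi{\ss}mann \cite[Thm.~3.4]{amsw19}, so there is no in-paper argument to compare against. Your reconstruction is correct and complete. The forward direction correctly combines the facts that every object of an \lfp category is the directed union of its finitely generated subobjects, that $H$ finitary and mono-preserving turns this into a directed union $HX=\bigvee_M HM$, and the definition of finite generatedness applied to $S\monoto HX$. The backward direction correctly reduces, via \autoref{lem:filtered-colimits}, to existence and essential uniqueness of factorizations of morphisms $X\to HC$ with $X$ finitely presentable; the existence step (image, hypothesis, fp-factorization of $m$ through a cocone leg) is exactly right, and your essential-uniqueness argument --- packaging $g_1,g_2$ into $[g_1,g_2]\colon X+X\to HD_i$, pulling both back to $HM_0$, taking the image $N$ of $c_i\circ m_0$ and cancelling the mono $Hn$ to get $Hp\circ t_1=Hp\circ t_2$ --- is sound; the hypothesis $\C_\fp=\C_\fg$ is used in precisely the right places (to know that $M_0$ and $N$ are finitely presentable). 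The only elisions are routine: that the images of the canonical cocone legs form a directed family with join $X$, and that \textqt{essential uniqueness} for two factorizations through distinct indices $i$ and $j$ requires first passing to a common cospan via filteredness. Neither affects correctness.
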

\end{notheorembrackets}
This result applies, in particular, to mono-preserving functors on the categories of \autoref{fig:categories}.

Lastly, if a functor is not finitary, it can be canonically restricted to a finitary functor. In the following, let $[\C,\C]$ denote the category of endofunctors on $\C$ and natural transformations, $[\C,\C]_{\mathsf{fin}}$ its full subcategory given by finitary endofunctors, and $\C_\fp/X$ the full subcategory of the slice category $\C/X$ whose objects are morphisms $(f\colon P\to X)$ of $\C$ with $P\in \C_\fp$.

\begin{notheorembrackets}
\begin{lemma}[{\cite[Cor.~2.8]{amv03}}]\label{lem:finitary-coreflection}
If $\C$ is \lfp, then $[\C,\C]_{\mathsf{fin}}$ is a coreflective subcategory of $[\C,\C]$. The coreflection of $H\colon \C\to \C$ is the functor $H_\omega\colon \C\to\C$ given by $H_\omega X=\colim D_X$ for the diagram $D_X\colon \C_\fp/X\to \C$ mapping $(f\colon P\to X)\in \C_\fp/X$ to $HP$.   
\end{lemma}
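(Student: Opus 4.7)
The plan is to recognize $H_\omega$ as a finitary coreflection obtained essentially by left Kan extension along the inclusion $\C_\fp\hookrightarrow \C$, and then to verify the two requirements: that $H_\omega$ is finitary, and that it is equipped with a natural transformation $\pi\colon H_\omega \Rightarrow H$ that is couniversal from $[\C,\C]_{\mathsf{fin}}$.

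First I would make the construction precise. Since $\C$ is \lfp, the slice category $\C_\fp/X$ is essentially small and filtered, and the forgetful functor $U_X\colon \C_\fp/X\to \C$, $(f\colon P\to X)\mapsto P$, has colimit cocone $(f)\colon P\to X$; this is the \textbf{canonical presentation} of $X$ by finitely presentable objects. Hence the colimit $H_\omega X=\colim D_X$ exists in~$\C$, and the family $(Hf\colon HP\to HX)_{(f\colon P\to X)}$ is a cocone over $D_X$, inducing a canonical morphism $\pi_X\colon H_\omega X\to HX$. For a morphism $g\colon X\to Y$, post-composition with $g$ yields a functor $g_\star\colon \C_\fp/X\to \C_\fp/Y$, which gives a natural transformation $D_X\Rightarrow D_Y\circ g_\star$ and thus a canonical morphism $H_\omega g\colon H_\omega X\to H_\omega Y$; routine diagram chases show that $H_\omega$ is a functor and $\pi\colon H_\omega \Rightarrow H$ is natural.

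Next I would verify that $H_\omega$ is finitary. Let $D\colon I\to \C$ be a filtered diagram with colimit cocone $(c_i\colon D_i\to X)_{i\in I}$. To identify $H_\omega X$ with $\colim_{i\in I} H_\omega D_i$ I would apply \autoref{lem:filtered-colimits}: it suffices to show that every morphism $P'\to H_\omega X$ with $P'\in \C_\fp$ factorizes essentially uniquely through some $H_\omega D_i$. The key observation is a \emph{covering property} for slice categories: because $P$ is finitely presentable, every object $(f\colon P\to X)\in \C_\fp/X$ arises as $c_i\circ g$ for some $i\in I$ and $g\colon P\to D_i$, so $(f)$ lies in the image of $(c_i)_\star\colon \C_\fp/D_i\to \C_\fp/X$. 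Combining this with filteredness of both $I$ and each $\C_\fp/D_i$, and using the canonical presentation $P'=\colim U_{P'}$, one verifies the required factorization and uniqueness, yielding $H_\omega X\cong \colim_i H_\omega D_i$.

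Finally I would establish the couniversal property of $\pi$. Given a finitary $F\colon \C\to \C$ and $\alpha\colon F\Rightarrow H$, I need a unique $\widetilde\alpha\colon F\Rightarrow H_\omega$ with $\pi\circ \widetilde\alpha=\alpha$. For each~$X$, the components $\alpha_P\colon FP\to HP$ composed with the colimit injections $HP\to H_\omega X$ assemble into a cocone over $F\circ U_X$; since $F$ is finitary, $FX=\colim (F\circ U_X)$, so this cocone induces a unique morphism $\widetilde\alpha_X\colon FX\to H_\omega X$, and the equation $\pi_X\circ \widetilde\alpha_X=\alpha_X$ holds by construction. Naturality in $X$ and uniqueness follow from universality of the colimit together with the universal property of the canonical presentation. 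The main obstacle is the finitariness step, where the informal Fubini-style exchange of colimits must be turned into a clean application of \autoref{lem:filtered-colimits} via the covering property above; everything else is bookkeeping around universal properties.
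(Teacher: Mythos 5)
The paper does not prove this lemma; it is imported verbatim from Ad\'amek, Milius and Velebil \cite[Cor.~2.8]{amv03}, so there is no in-paper proof to compare against. Your argument is the standard one behind that reference: $H_\omega$ is the pointwise left Kan extension of $H$ restricted to $\C_\fp$ along the inclusion $\C_\fp\hookrightarrow\C$, computed as a filtered colimit over the canonical slice $\C_\fp/X$, with finitariness obtained from the covering of $\C_\fp/X$ by the $\C_\fp/D_i$ and couniversality from $FX=\colim(F\circ U_X)$ for finitary $F$. The outline is correct. One detail worth making explicit in the uniqueness half of the couniversal property: you need that $\pi_P\colon H_\omega P\to HP$ is an isomorphism for $P\in\C_\fp$ (because $(\id_P)$ is a final object of $\C_\fp/P$), since that is what forces any $\beta$ with $\pi\circ\beta=\alpha$ to agree with $\widetilde\alpha$ on finitely presentable objects and hence, both functors being finitary, everywhere; your phrase \emph{universal property of the canonical presentation} gestures at this but does not name it.
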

\end{notheorembrackets}
The functor $H_\omega$ is called the \emph{finitary coreflection} of $H$. For example, the finitary coreflection of the power set functor $\Pow\colon \Set\to \Set$ is the finite power set functor $\Powf\colon \Set\to \Set$.

\subparagraph*{Coalgebras and Algebras} 

Coalgebras~\cite{jacobs16,rutten00} form a uniform categorical abstraction of state-based transition systems, such as graphs, labelled transition systems, Markov chains, and various kinds of automata. The idea is to model the state space of a system by an object of a suitable category $\C$, and its transition type by an endofunctor on that category. Formally, a \emph{coalgebra} for an endofunctor $H\colon \C\to \C$ is a pair $(C,c)$ consisting of an object $C$ (the
\emph{state space}) and a morphism $c\colon C\to HC$ (its
\emph{structure}). A \emph{morphism} $h\colon (C,c)\to (D,d)$ of coalgebras is a morphism $h\colon C\to D$ of $\C$ such that $Hh\circ c = d\circ h$. We let $\Coalg(H)$ denote the category of coalgebras for $H$ and their morphisms. 

\emph{Subcoalgebras} of a coalgebra $(C,c)$ are represented by coalgebra morphisms  $m\colon (S,s)\monoto (C,c)$ with $m$ monic in $\C$, and \emph{strong quotients} of $(C,c)$ are represented by coalgebra morphisms $e\colon (C,c)\epito (Q,q)$ with $e$ strongly epic in $\C$. If $H$ preserves monomorphisms, a subcoalgebra is uniquely determined by the object $S$. If moreover $\C$ has (strong epi, mono)-factorizations, then every coalgebra morphism $h\colon (C,c)\to (D,d)$ factorizes uniquely as a strong quotient followed by a subcoalgebra: $h = (\begin{tikzcd} (C,c) \ar[two heads]{r}{e} & (S,s) \ar[tail]{r}{m} & (D,d) \end{tikzcd})$. This factorization is called the \emph{image factorization} of $h$, and $m$ is the \emph{image} of $h$.

\begin{example}[Graphs]\label{ex:coalgebras}
 A coalgebra for the power set functor $\Pow$ on $\Set$ is precisely a (directed) graph. A coalgebra for the finite power set functor $\Powf$ (the finitary coreflection of $\Pow$) corresponds to a \emph{finitely branching} graph, i.e.\ one where every node has only finitely many outgoing edges. Given a graph $(C,c)$, we write $x\to y$ iff $y\in c(x)$. 
A subcoalgebra of $(C,c)$ is a subset $S\seq C$ closed under successors: $x\in S$ and $x\to y$ implies $y\in S$.
\end{example}

\begin{example}[Automata and Transition Systems]
Various important types of automata and transition systems can be modelled as coalgebras for a set functor $H$, including deterministic automata ($HX=\{0,1\}\times X^A$ for a fixed finite input alphabet $A$), non-deterministic automata ($HX=\{0,1\}\times (\Pow X)^A$), labelled transition systems ($HX = (\Pow X)^A$), discrete labelled Markov chains ($HX=(\mathcal{D}X)^A$ where $\mathcal{D}$ is the finite distribution functor), and many more~\cite{jacobs16,rutten00}.
\end{example}

Examples of coalgebras beyond the category of sets are discussed later, e.g.\ graphs in a topos (\autoref{sec:topos}), coalgebras in nominal sets which model automata and transition systems over
  infinite alphabets (\autoref{sec:nom}), and 
  coalgebras in convex sets, modelling systems that combine probabilistic and non-deterministic branching
  (\autoref{sec:convex}).

We refer the reader to Jacobs~\cite{jacobs16} or Rutten~\cite{rutten00} for general introductions to the rich theory of coalgebras. For our purposes, we only need to recall two basic results. The first one asserts that colimits in $\Coalg(H)$ are formed in the underlying category:

\begin{notheorembrackets}
\begin{lemma}[{\cite[Prop.~4.7]{rutten00}}]\label{lem:colimits-coalg}
The forgetful functor from $\Coalg(H)$ to $\C$ creates colimits.
\end{lemma}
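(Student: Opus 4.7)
The strategy is to verify the definition of creation of limits directly. Given a small diagram $D\colon J \to \Coalg(H)$ with $D_j = (C_j, c_j)$, suppose that the composite $U\comp D\colon J \to \C$ with the forgetful functor $U$ admits a limit $(\pi_j\colon L \to C_j)_{j\in J}$ in $\C$. The plan is to produce a unique coalgebra structure $\ell\colon L \to HL$ making every $\pi_j$ a coalgebra morphism, and then to check that the resulting cone is a limit in $\Coalg(H)$.

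First I would exhibit the coalgebra structure. The family $(c_j\comp \pi_j\colon L \to HC_j)_{j\in J}$ forms a cone on the diagram $H\comp U\comp D$: for any morphism $f\colon j\to j'$ in $J$, the coalgebra morphism condition $HDf\comp c_j = c_{j'}\comp Df$ combined with the cone equation $Df\comp \pi_j = \pi_{j'}$ gives $HDf\comp(c_j\comp \pi_j) = c_{j'}\comp \pi_{j'}$. Using that $H$ preserves this particular limit, so that $(H\pi_j\colon HL \to HC_j)_j$ is a limit cone for $H\comp U\comp D$, the universal property yields a unique $\ell\colon L \to HL$ with $H\pi_j \comp \ell = c_j\comp \pi_j$ for all $j$, which is exactly the requirement that each $\pi_j$ be a coalgebra morphism $(L,\ell) \to (C_j, c_j)$.

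Next I would verify the universal property in $\Coalg(H)$. Any competing cone $(h_j\colon (E,e)\to (C_j, c_j))_{j}$ has underlying cone $(h_j\colon E \to C_j)_j$ in $\C$ which factors uniquely through $L$ as some $h\colon E \to L$ with $\pi_j\comp h = h_j$. To show that $h$ is a coalgebra morphism, i.e.\ $\ell\comp h = Hh\comp e$, I would post-compose each side with $H\pi_j$: the left side gives $H\pi_j\comp \ell\comp h = c_j\comp \pi_j\comp h = c_j\comp h_j$, while the right side gives $H\pi_j\comp Hh\comp e = H h_j\comp e = c_j\comp h_j$ by the fact that $h_j$ is a coalgebra morphism. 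Since the $(H\pi_j)_j$ form a jointly monic limit cone, the two sides coincide. The same uniqueness argument shows that $\ell$ is the only coalgebra structure on $L$ compatible with the $\pi_j$.

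The main subtle point is the preservation hypothesis: the argument goes through for exactly those limits that $H$ preserves. So strictly speaking, $U$ creates precisely those limits in $\C$ that $H$ preserves, and this is the content one invokes later when a specific limit (e.g.\ an intersection, or a filtered limit of monomorphisms) is known to be preserved by the functor $H$ in question. In particular, the proof makes no use of properties of $\C$ beyond the existence of the limit and the cone manipulations above, so it applies uniformly in the \lfp setting of the paper.
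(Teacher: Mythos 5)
The statement you were handed contains a typo that has sent you in the wrong direction: the lemma labelled \texttt{lem:colimits-coalg} is introduced in the paper by the sentence ``colimits in $\Coalg(H)$ are formed in the underlying category'', it is invoked later only for colimits (the directed join of subcoalgebras in \autoref{thm:koenig}, the colimit $(T,t)$ in \autoref{not:T}), and the appendix explicitly calls it ``the proof that $U$ creates colimits''; Rutten's Prop.~4.7 is likewise about colimits. So the intended claim is that the forgetful functor $U\colon \Coalg(H)\to\C$ creates \emph{colimits}, with no hypothesis on $H$. Your proof addresses the literal reading (limits), and you correctly observe that it only goes through for limits that $H$ preserves. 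That caveat is to your credit, but it means you have established neither the literal unconditional statement (which is in fact false: for $H=\Pow$ on $\Set$, the forgetful functor cannot create the empty limit, since $\Coalg(\Pow)$ has no terminal object) nor the statement the paper actually uses.

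The intended proof is the exact dual of yours, and the dualization is where the asymmetry between algebras and coalgebras lies. Given $D\colon J\to\Coalg(H)$ with $D_j=(C_j,c_j)$ and a colimit cocone $(\kappa_j\colon C_j\to L)_{j}$ of $UD$ in $\C$, the family $(H\kappa_j\circ c_j\colon C_j\to HL)_j$ is a cocone over $UD$ itself (by the same naturality computation you carried out for cones), so the universal property of the colimit $L$ --- not of $HL$ --- yields a unique $\ell\colon L\to HL$ with $\ell\circ\kappa_j=H\kappa_j\circ c_j$. No preservation hypothesis on $H$ enters, because the mediating morphism is induced \emph{out of} the colimit $L$ rather than \emph{into} an object $HL$ that would need to be a limit. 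The verification that $(L,\ell)$ with the $\kappa_j$ is a colimit cocone in $\Coalg(H)$, and that $\ell$ is the unique such structure, then proceeds just as in your second and third paragraphs with all arrows reversed (joint epicness of the $\kappa_j$ replacing joint monicity of the $H\pi_j$). Your cone manipulations are sound; the fix is purely a matter of dualizing and dropping the preservation caveat.
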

\end{notheorembrackets}

The second result concerns coalgebras with `finite' state space. A coalgebra is \emph{fp-carried} if its state space is finitely presentable, and \emph{fg-carried} if its state space is finitely generated. We let $\Coalg_\fp(H), \Coalg_\fg(H)\subto \Coalg(H)$ denote the corresponding full subcategories. 

\begin{notheorembrackets}
\begin{lemma}[{\cite[Lem.~3.2]{ap04}}]\label{lem:fp-fg-coalg} Let $H\colon \C\to \C$ be a finitary endofunctor.
\begin{enumerate}
\item Every $fp$-carried coalgebra is a finitely presentable object of $\Coalg(H)$.
\item If $H$ preserves monos, every fg-carried coalgebra is a finitely generated object of $\Coalg(H)$.
\end{enumerate}
\end{lemma}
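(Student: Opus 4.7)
The plan is to reduce factorization and essential-uniqueness questions in $\Coalg(H)$ to the corresponding ones in $\C$, exploiting that the forgetful functor $U\colon \Coalg(H)\to \C$ creates colimits (\autoref{lem:colimits-coalg}) and that $H$, being finitary, preserves filtered colimits.

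For part (1), let $(C,c)$ be fp-carried. For any filtered diagram $D\colon I\to \Coalg(H)$ with colimit cocone $c_i\colon(D_i,d_i)\to (L,\ell)$, both $(c_i)_{i\in I}$ and $(Hc_i)_{i\in I}$ are filtered colimit cocones in $\C$. Given a coalgebra morphism $f\colon (C,c)\to (L,\ell)$, I would first factor the underlying map as $f=c_i\circ g$ in $\C$, using that $C$ is finitely presentable. The map $g$ need not itself be a coalgebra morphism, but the coalgebra law for $f$ yields $Hc_i\circ Hg\circ c = Hc_i\circ d_i\circ g$ as parallel maps $C\to HL$. Applying essential uniqueness of factorizations of $C$ through the filtered colimit $(Hc_i)$ then produces a connecting morphism $Dk\colon D_i\to D_j$ with $HDk\circ Hg\circ c = HDk\circ d_i\circ g$, which by naturality of the coalgebra structure maps rewrites as $H(Dk\circ g)\circ c = d_j\circ (Dk\circ g)$. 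Thus $Dk\circ g\colon (C,c)\to (D_j,d_j)$ is a coalgebra morphism factoring $f$ through $c_j$. Essential uniqueness of this factorization in $\Coalg(H)$ follows by the same device applied to the underlying factorization in $\C$.

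For part (2), assume $H$ preserves monomorphisms and let $(C,c)$ be fg-carried. Consider a filtered diagram whose colimit cocone $(c_i\colon (D_i,d_i)\monoto (L,\ell))$ consists of monomorphisms in $\Coalg(H)$. A preliminary step transfers monicity to $\C$: using the $(\text{strong epi},\text{mono})$-factorization system on $\C$ (\autoref{lem:lfp-complete-wp}) together with $H$-preservation of monos, the image factorization in $\C$ of each $c_i$ lifts uniquely to a factorization in $\Coalg(H)$, and monicity of $c_i$ in $\Coalg(H)$ forces the strong-epi part of the lift to be an isomorphism, so $c_i$ is monic in $\C$ as well. Given a coalgebra morphism $f\colon (C,c)\to (L,\ell)$, finite generation of $C$ then yields a factorization $f=c_i\circ g$ in $\C$. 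Since $Hc_i$ is monic (as $H$ preserves monos), the equation $Hc_i\circ Hg\circ c = Hc_i\circ d_i\circ g$ collapses to $Hg\circ c = d_i\circ g$, so $g$ is already a coalgebra morphism; no connecting refinement is needed.

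The main obstacle I anticipate is the underlying-monicity step in part (2): monomorphisms in $\Coalg(H)$ need not in general have monic underlying maps, and the hypothesis that $H$ preserves monomorphisms is precisely what makes the lifted image-factorization argument succeed. Once underlying monicity is established, the proof of part (2) is actually more direct than part (1), since monicity of $Hc_i$ eliminates the essential-uniqueness step needed there.
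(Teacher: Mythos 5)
The paper does not prove this lemma itself --- it is quoted from Ad\'amek--Porst --- so your argument can only be measured against the standard proof of the cited result. Part~(1) is correct and is exactly that standard argument: factor the underlying map through some $c_i$ using finite presentability of $C$, observe that $Hg\circ c$ and $d_i\circ g$ are merged by $Hc_i$, and use essential uniqueness for the filtered colimit cocone $(Hc_i)$ (a colimit cocone because $H$ is finitary and the forgetful functor creates filtered colimits) to obtain a connecting morphism $Dk$ after which $Dk\circ g$ is an honest coalgebra morphism. The only quibble is that the final rewriting is justified by $Dk$ being a coalgebra morphism, i.e.\ $d_j\circ Dk=H(Dk)\circ d_i$, not by \textqt{naturality}; the essential-uniqueness claim in $\Coalg(H)$ is likewise fine.

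Part~(2) has a genuine gap at exactly the step you single out as the main obstacle. From $c_i=m_i\circ e_i$ (lifted image factorization) and $c_i$ monic in $\Coalg(H)$ you may conclude that $e_i$ is monic \emph{in $\Coalg(H)$} and, since $e_i$ is epic in $\C$ and the forgetful functor is faithful, epic in $\Coalg(H)$; but \textqt{mono $+$ epi $\Rightarrow$ iso} is not valid, and \textqt{strong epi $+$ mono $\Rightarrow$ iso} requires both properties in the \emph{same} category. To close the argument you would need either that $e_i$ is monic in $\C$ --- which is precisely the statement you are trying to establish, so the reasoning is circular --- or that $e_i$ is a strong epimorphism in $\Coalg(H)$, i.e.\ has the diagonal fill-in against \emph{all} monomorphisms of $\Coalg(H)$, which again presupposes that those are carried by monos of $\C$. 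Moreover, the intermediate claim is false in general: even for finitary, mono-preserving set functors, monomorphisms of $\Coalg(H)$ need not be carried by injections (by results of Gumm and Schr\"oder this holds precisely when $H$ weakly preserves kernel pairs, a condition strictly stronger than preservation of monos). So the reduction of the given cocone of $\Coalg(H)$-monos to a cocone of $\C$-monos cannot be done componentwise. The natural repair --- replacing each $c_i$ by its image $m_i\colon (S_i,s_i)\monoto(L,\ell)$, which is carried by a $\C$-mono and still yields a filtered colimit cocone by \autoref{lem:lfp-directed-unions}, then factoring $f$ through some $m_i$ via fg-ness of $C$ and monicity of $Hm_i$ --- produces a factorization through $(S_i,s_i)$ rather than through $(D_i,d_i)$, and relating the two is the genuinely delicate point that your proposal does not address. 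Part~(2) therefore remains unproved as written.
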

\end{notheorembrackets}

Dually to the notion of coalgebra, \emph{algebras} for an endofunctor yield a categorical abstraction of algebraic structures. An \emph{algebra} $(A,a)$ for $H\colon \C\to \C$ is given by an object~$A$ (its \emph{carrier}) and a morphism $a\colon H A\to A$ (its \emph{structure}). A
\emph{morphism} $h\colon (A,a) \to (B,b)$ of algebras is a morphism
$h\colon A\to B$ of~$\gcat$ such that $h\circ a = b\circ H h$.  Algebras
for $H$ and their morphisms form a category $\Alg(H)$. An \emph{initial algebra} $(I,i)$ is an initial object of $\Alg(H)$. By Lambek's lemma~\cite{lambek68}, its structure $i\colon HI\to I$ is an isomorphism in $\C$.

\begin{example}[$\Sigma$-Algebras]
The prime example of functor algebras are algebras for a
signature. An \emph{(algebraic) signature} is given by a set~$\Sigma$
of \emph{operation symbols} and a map $\ar\colon \Sigma\to \Nat$
associating to every $\f\in \Sigma$ its \emph{arity}. Every
signature~$\Sigma$ induces the polynomial endofunctor $H_\Sigma X = \coprod_{\f\in \Sigma} X^{\ar(\f)}$ on $\Set$. An algebra for $H_\Sigma$ is
precisely an algebra for the signature~$\Sigma$: a set $A$ with an operation $\f^A\colon A^n\to A$ for every $n$-ary operation symbol $\f\in \Sigma$. The initial algebra for $H_\Sigma$ is the algebra of closed terms formed over the signature $\Sigma$.
\end{example}

\subparagraph*{Well-Founded and Recursive Coalgebras}\label{sec:prelim:wf-rec}
The concept of `having no infinite paths' can be extended from graphs to arbitrary coalgebras~\cite{taylor99}. A subcoalgebra $m\colon (S,s)\monoto (C,c)$ of a coalgebra $(C,c)$ is \emph{cartesian} if the commutative diagram in \autoref{fig:cartesian} is a pullback. A coalgebra $(C,c)$ is \emph{well-founded} if it is has no proper cartesian subcoalgebras: for every cartesian subcoalgebra $m\colon (S,s)\monoto (C,c)$, the monomorphism $m$ is an isomorphism.
\begin{figure}
  \begin{minipage}{.45\textwidth}\centering
  \begin{tikzcd}[row sep=5mm]
  S \pullbackangle{-45} \ar{r}{s} \ar[tail]{d}[swap]{m}  & HS \ar{d}{Hm} \\
  C \ar{r}{c} & HC 
  \end{tikzcd}
  \caption{A cartesian subcoalgebra.}
  \label{fig:cartesian}
  \end{minipage}\hfill%
  \begin{minipage}{.5\textwidth}\centering
\begin{tikzcd}[row sep=5mm]
C \ar{r}{h} \ar{d}[swap]{c} & A \\
HC \ar{r}{Hh} & HA \ar{u}[swap]{a}
\end{tikzcd}
  \caption{A coalgebra-to-algebra morphism.}
  \label{fig:coalg2alg}
  \end{minipage}
\end{figure}

\begin{example}[Graphs]\label{ex:wfCoalg}
Let $(C,c)$ be a $\Pow$-coalgebra, viewed as a graph. We note first that a subset $S\seq C$ carries a cartesian subcoalgebra iff, for all $x\in C$,
\begin{equation}\label{eq:cartesian-pow}
x\in S \iff \text{all successors of $x$ lie in $S$.}
\end{equation}
Indeed, the left-to-right implication says that the square in \autoref{fig:cartesian} commutes ($S$ is a subcoalgebra), and the right-to-left implication says that it is a pullback. From this, it follows that
\[ \text{$(C,c)$ is well-founded} \iff \text{$(C,c)$ has no infinite paths}. \]
 For the left-to-right implication, suppose that $(C,c)$ is well-founded. Then the set $S\seq C$ of all states that lie on no infinite path is a cartesian subcoalgebra by \eqref{eq:cartesian-pow}, and so $S=C$. For the right-to-left implication, we argue contrapositively. Suppose that $(C,c)$ is not well-founded, and let $S\subsetneq C$ be a proper cartesian subcoalgebra. Pick $x_0\in C\setminus S$ arbitrarily. By \eqref{eq:cartesian-pow}, some successor $x_1$ of $x_0$ lies in $C\setminus S$. By \eqref{eq:cartesian-pow} again, some successor $x_2$ of $x_1$ lies in $C\setminus S$. Repeating this argument yields an infinite path $x_0\to x_1\to x_2\to \cdots$.
\end{example}

\begin{example}[Coalgebras for Set Functors]
For every functor $H\colon \Set\to \Set$ that preserves wide intersections and every set $X$, there is map $\supp_X\colon HX\to \Pow X$ that sends an element $t\in HX$ to its \emph{support}, the least subset $m\colon M\subto X$ such that $t\in Hm[HM]$. Every coalgebra $(C,c)$ for $H$ thus induces a coalgebra $(C,\supp_C\circ c)$ for~$\Pow$, the \emph{canonical graph} of $(C,c)$. A coalgebra for $H$ is well-founded iff its canonical graph is well-founded~\cite[Rem.~6.3.4]{taylor99}.
\end{example}
One important application of well-founded graphs is the principle of \emph{well-founded induction} and \emph{well-founded recursion}. This principle can be understood at the level of coalgebras~\cite{taylor99}. 

Given a coalgebra $(C,c)$ and an algebra $(A,a)$ for the same functor $H$, a \emph{coalgebra-to-algebra morphism} from $(C,c)$ to $(A,a)$ is a morphism $h\colon C\to A$ making the square in \autoref{fig:coalg2alg} commute. A coalgebra $(C,c)$ is \emph{recursive} if for every algebra $(A,a)$ there exists a unique coalgebra-to-algebra morphism from $(C,c)$ to $(A,a)$.

\begin{example}[Graphs]
Every well-founded $\Pow$-coalgebra $(C,c)$ is recursive. Indeed, this amounts to saying that for every $a\colon \Pow A \to A$, there is unique map $h\colon C\to A$ such that $h(x)=a(h[c(x)])$ for all $x\in C$. This is precisely the principle of \emph{well-founded recursion}. The special case where $a\colon \Pow\{0,1\}\to \{0,1\}$ sends $\emptyset$, $\{1\}$ to $1$ and $\{0\}$, $\{0,1\}$ to $0$ corresponds to \emph{well-founded induction}: given a predicate $P\colon C\to \{0,1\}$, one has
\[ \big(\,\forall x\in C.\, (\forall y\in c(x).\, P(y) \implies P(x))\,\big) \quad\implies\quad \big(\,\forall x\in C.\, P(x)\,\big). \] 
\end{example}
\noindent
In the following, we denote by
\[\Coalg_{\fp,\wf}(H),\, \Coalg_{\fg,\wf}(H),\, \Coalg_\wf(H),\, \Coalg_{\fp,\rec}(H),\, \Coalg_{\fg,\rec}(H),\, \Coalg_\rec(H) \] 
the full subcategories of $\Coalg(H)$ given by all (fp-carried, fg-carried)  well-founded coalgebras, and analogously for recursive coalgebras. 

We conclude with some general properties of well-founded and recursive coalgebras. First, colimits of recursive coalgebras are computed in the underlying category~\cite[Prop.~7.1.9]{amm25}:
\begin{lemma}\label{lem:rec-coalg-colimits}
The subcategory $\Coalg_\rec(H)\subto \Coalg(H)$ is closed under colimits.
\end{lemma}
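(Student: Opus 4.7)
The plan is to prove the closure under colimits directly from the universal property of recursive coalgebras, making essential use of the fact (\autoref{lem:colimits-coalg}) that the forgetful functor $\Coalg(H) \to \C$ creates colimits, so colimits of a diagram of recursive coalgebras can be computed at the level of carriers.

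Fix a diagram $D\colon I \to \Coalg_\rec(H)$, $i \mapsto (D_i,d_i)$, and let $(C,c)$ be its colimit in $\Coalg(H)$, with colimit injections $c_i\colon (D_i,d_i) \to (C,c)$. I would prove recursiveness of $(C,c)$ by showing existence and uniqueness of coalgebra-to-algebra morphisms into an arbitrary algebra $(A,a)$ for $H$.

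For \emph{existence}: since every $(D_i,d_i)$ is recursive, there exists a unique coalgebra-to-algebra morphism $h_i\colon D_i \to A$ for each $i \in I$. I would first observe that the family $(h_i)_{i\in I}$ is a cocone on $D$ in $\C$: for any connecting morphism $f\colon i \to j$ in $I$, both $h_j \circ Df$ and $h_i$ are coalgebra-to-algebra morphisms from $(D_i,d_i)$ to $(A,a)$ (the former because $Df$ is a coalgebra morphism and composition with $h_j$ preserves the defining square), so they coincide by uniqueness. Hence there is a unique mediating morphism $h\colon C \to A$ with $h \circ c_i = h_i$ for all~$i$. To see that $h$ is a coalgebra-to-algebra morphism, I would verify that the two morphisms $h$ and $a \circ Hh \circ c$ agree after precomposition with every $c_i$:
\[
a \circ Hh \circ c \circ c_i
\;=\; a \circ Hh \circ Hc_i \circ d_i
\;=\; a \circ Hh_i \circ d_i
\;=\; h_i
\;=\; h \circ c_i,
\]
using in turn that $c_i$ is a coalgebra morphism, $h \circ c_i = h_i$, and recursiveness of $(D_i,d_i)$. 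The uniqueness clause of the colimit in $\C$ then forces $a \circ Hh \circ c = h$.

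For \emph{uniqueness}: if $h,h'\colon (C,c) \to (A,a)$ are two coalgebra-to-algebra morphisms, then $h \circ c_i$ and $h' \circ c_i$ are both coalgebra-to-algebra morphisms from $(D_i,d_i)$ to $(A,a)$. Recursiveness of each $(D_i,d_i)$ yields $h \circ c_i = h' \circ c_i$ for every $i \in I$, and uniqueness of the mediator gives $h = h'$.

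I do not anticipate any serious obstacle: the only mildly subtle step is recognizing that $(h_i)_{i\in I}$ really forms a cocone, which requires invoking uniqueness of coalgebra-to-algebra morphisms out of $(D_i,d_i)$ once before invoking it a second time in the existence argument. Everything else is a routine cocone chase, and the argument never leaves generalities that hold in any category $\C$.
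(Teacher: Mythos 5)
Your proof is correct, and it is the standard argument for this fact; the paper itself does not prove \autoref{lem:rec-coalg-colimits} but cites it from the literature, where exactly this cocone-of-unique-solutions argument is used (together with the fact that the forgetful functor creates colimits, so the $c_i$ are jointly epic in $\C$). No gaps.
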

We also mention that recursive coalgebras bear a close connection with initial algebras: If $(I,i)$ is an initial algebra for $H$, then $(I,i^{-1})$ is a final recursive coalgebra, and vice versa~\cite{cuv06}.

Regarding well-founded coalgebras, we need the results of the proposition below. We state them for a \lfp base category $\C$, our setting of interest; however, they hold under slightly weaker assumptions~\cite[Cor.~8.4.3, Thm.~8.5.3]{amm25}. In the following, an initial object $0$ is called \emph{simple} if for every object $X$ the unique morphism $0\to X$ is monic.

\begin{proposition}\label{prop:wf} 
Let $\C$ be a \lfp category with a simple initial object, and let $H\colon \C\to \C$ be an endofunctor that preserves monomorphisms.
\begin{enumerate}
\item\label{prop:wf:closed-colim-quotients} The subcategory $\Coalg_\wf(H)\subto \Coalg(H)$ is closed under colimits and strong quotients.
\item\label{prop:wf:wf-to-rec} Every well-founded coalgebra is recursive, that is, $\Coalg_\wf(H)\seq \Coalg_\rec(H)$.
\end{enumerate}
\end{proposition}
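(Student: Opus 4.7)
My strategy is to treat parts (1) and (2) separately, using in both cases pullback-pasting arguments made possible by the mono-preservation hypothesis on $H$.

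For part (1), I would tackle strong quotients as follows: given a strong quotient $e\colon (C,c)\epito (Q,q)$ with $(C,c)$ well-founded and a cartesian subcoalgebra $m\colon (S,s)\monoto (Q,q)$, pull back $m$ along $e$ in $\C$ to obtain $m'\colon S'\monoto C$ with comparison map $e'\colon S'\to S$. Since $H$ preserves the monomorphism $m$, pasting the freshly-constructed pullback square with the defining pullback of $(S,s)$ shows that $m'$ carries a cartesian subcoalgebra of $(C,c)$. Well-foundedness makes $m'$ iso, whence $e$ factors through the mono $m$ via $e'\circ m'^{-1}$; the diagonalization property of strong epimorphisms then forces $m$ itself to be iso. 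The colimit case is the evident variant: given a colimit $(C,c)=\colim D$ of a diagram $D\colon I\to \Coalg_\wf(H)$ (computed underlying in $\C$ by \autoref{lem:colimits-coalg}) and a cartesian subcoalgebra $m$ of $(C,c)$, pulling $m$ back along each colimit injection $\iota_i$ yields, by the same pasting argument, a cartesian subcoalgebra of $D_i$, iso by well-foundedness. Hence every $\iota_i$ factors through $m$ (the factorization being automatically a coalgebra morphism because $Hm$ is monic), and the universal property of the colimit produces a splitting of $m$, making it iso.

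For part (2), uniqueness and existence of a coalgebra-to-algebra morphism $(C,c)\to (A,a)$ would be handled separately. For uniqueness, given two such morphisms $h_1,h_2$, the plan is to build a cartesian subcoalgebra $e\colon E\monoto C$ on which $h_1$ and $h_2$ agree and then conclude $E\iso C$ from well-foundedness. A direct equalizer argument does not immediately go through because $H$ is not assumed to preserve equalizers; I would instead define $E$ as the largest cartesian subcoalgebra of $(C,c)$ included in the equalizer of $h_1$ and $h_2$, obtained via an appropriate intersection of cartesian subobjects. For existence, I would use a partial-morphism gluing argument: consider the poset of pairs $(S,h_S)$ with $S\monoto C$ a cartesian subcoalgebra and $h_S\colon S\to A$ a coalgebra-to-algebra morphism, ordered by extension. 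This poset has a least element given by the empty cartesian subcoalgebra $0\monoto C$ (which exists thanks to simplicity of the initial object), and directed joins can be formed using \autoref{lem:lfp-directed-unions}, with the joined morphism obtained from the cocone property of the colimit; one then shows that any maximal element must equal $C$, since otherwise well-foundedness would yield a strictly larger cartesian subcoalgebra that inherits an extended partial morphism from the algebra structure $a$.

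The hardest step will be the uniqueness half of part (2): producing the right cartesian subcoalgebra in the absence of equalizer preservation by $H$ requires careful use of how the algebra structure $a$ mediates between $Hh_1$ and $Hh_2$. Verifying that directed joins of pairs $(S,h_S)$ really do remain cartesian in the existence argument is also technically involved. The simple initial object ensures the existence argument has a starting point, while mono-preservation of $H$ is the technical engine behind every pullback-pasting step in both parts.
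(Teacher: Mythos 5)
A preliminary remark: the paper does not prove this proposition at all --- it is imported from Ad\'amek, Milius and Moss~\cite{amm25} (Cor.~8.4.3 and Thm.~8.5.3) --- so there is no in-paper argument to compare yours against; I can only assess your sketch on its own terms, and it has two genuine gaps.

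The first gap is the central step of part (1): pasting the pullback defining $S'=e^{-1}(S)$ with the cartesian square of $(S,s)$ does \emph{not} show that $S'$ is a cartesian subcoalgebra of $(C,c)$. The pasted rectangle exhibits $S'$ as the pullback of $Hm$ along $He\circ c$, i.e.\ $S'=c^{-1}\bigl((He)^{-1}(HS)\bigr)$; identifying this with $c^{-1}(HS')$ would require $H$ to preserve the pullback defining $S'$, and only mono-preservation is assumed. What you actually get is the inequality $c^{-1}(HS')\le S'$ in $\Sub(C)$, i.e.\ $S'$ is a \emph{pre-fixed point} of the next-time operator $\bigcirc(U)=c^{-1}(HU)$. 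You then need the additional (standard but non-trivial) step that a well-founded coalgebra admits no proper pre-fixed points of $\bigcirc$: since $\Sub(C)$ is a complete lattice (\autoref{lem:lfp-complete-wp}) and $\bigcirc$ is monotone, Knaster--Tarski places a \emph{fixed} point of $\bigcirc$ --- which is exactly a cartesian subcoalgebra --- below any pre-fixed point, and well-foundedness forces that fixed point to be all of $C$. The same repair is needed in your colimit argument; with it, both halves of part (1) go through as you describe.

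The second gap is that your part (2) is circular as stated. By definition of well-foundedness, the \emph{only} cartesian subcoalgebra of $(C,c)$ is $C$ itself, and $0\monoto C$ is essentially never cartesian (for $\Powf$ it would say that no state is successor-free). So your Zorn poset of pairs $(S,h_S)$ with $S$ cartesian either is empty or already contains the object you are trying to construct, and ``the largest cartesian subcoalgebra inside the equalizer'' is likewise either $C$ (begging the question) or nothing. Both arguments are salvageable by switching notions: for existence, let $S$ range over \emph{subcoalgebras} (post-fixed points $S\le\bigcirc(S)$), for which $0$ is a valid least element thanks to the simple initial object, directed joins work as you say, and a maximal $S\ne C$ extends strictly to $\bigcirc(S)$ via $a\circ Hh_S$ precomposed with the factorization of $c$ through $HS$, because $\bigcirc(S)=S$ would make $S$ cartesian and hence equal to $C$. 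For uniqueness, show directly that the equalizer $E$ of $h_1,h_2$ in $\C$ is a \emph{pre-fixed point} of $\bigcirc$ (on $c^{-1}(HE)$ both $h_i$ equal $a\circ H(h_i|_E)\circ w$ for the factorization $w$ of $c$ through $HE$), and invoke the pre-fixed-point form of well-foundedness from the previous paragraph. No preservation of equalizers by $H$ is needed anywhere.
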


\section{\Konig's Lemma for Coalgebras}\label{sec:konig}
We are prepared to work towards the core result of our paper, \Konig's lemma for coalgebras over a \lfp category (\autoref{thm:koenig}). Let us first fix the global setup:

\begin{assumption}\label{asm}
For the remainder of this paper, we fix a finitary endofunctor $H$ on a \lfplong category $\C$. 
\end{assumption}
Most of our results require the following two additional conditions on $\C$ and $H$:

\medskip\noindent (\injmono) The injections $\inl\colon A\to A+B$ and $\inr\colon B\to A+B$ of binary coproducts in $\C$ are monic.

\medskip\noindent (\hpresint) The functor $H$ preserves binary intersections (i.e.\ pullbacks of monomorphisms).

\medskip\noindent We will tag all results below with the conditions they require (on top of the global conditions given by \autoref{asm}). For example, \autoref{prop:wf-coalgebra-coproduct} requires both (\injmono) and (\hpresint).

\begin{rem}\label{rem:conditions}
The two extra conditions (\injmono) and (\hpresint) are fairly mild:
\begin{enumerate}
\item Condition (\injmono) holds in every extensive category with finite limits~\cite[Prop.~3.3]{cw93}. Recall that a category $\C$ is \emph{extensive}~\cite{cw93} if it has finite coproducts and for any pair of objects~$A$,~$B$ the functor $\C/A\times \C/B \xto{+} \C/(A+B)$ on slice categories sending $(f\colon X\to A)$, $(g\colon Y\to B)$ to $(f+g\colon X+Y\to A+B)$ is an equivalence of categories. Informally, this means that coproducts behave like disjoint unions. The categories of sets, posets, and nominal sets are extensive, as is every elementary topos~\cite{Johnstone}. There are also many non-extensive categories with (\injmono), including the categories of vector spaces or convex sets (\autoref{sec:convex}).
\item\label{rem:conditions:set-func-pres-int} Condition (\hpresint) `almost' holds for every functor $H\colon \Set\to \Set$. By a classical result due to Trnkov\'a~\cite{trnkova71} (see also~\cite[Prop.~8.1.13]{amm25}), every set functor~$H$ can be modified, by changing its action on the empty set and functions with empty domain, to a functor $\widetilde{H}$ that preserves binary intersections. Since $\Coalg(H)\cong \Coalg(\widetilde{H})$, one can thus assume that $H$ preserves binary intersections without loss of generality for coalgebraic results.
\end{enumerate} 
\end{rem}

\begin{rem}\label{rem:int-implies-mono}
\begin{enumerate}
\item Condition ($\injmono$) implies that the initial object $0$ is simple: given $X\in \C$, the unique morphism $0\to X$ is the coproduct injection $\inl\colon 0\to 0+X\cong X$, hence monic. 
\item Condition (\hpresint) implies that $H$ preserves monomorphisms. This follows from the fact that $m\colon A\to B$ is monic iff the pair $\id_A,\id_A$ is a pullback of $m,m$.
\end{enumerate}

\end{rem}

We start with a simple construction, the \emph{coproduct extension} of a coalgebra. It appeared earlier in the theory of iterative algebras~\cite{amv06,urbat17} and underlies both our proof of the coalgebraic \Konig's lemma and the initial algebra constructions presented in \autoref{sec:initial-algebras}. 

\begin{construction}[Coproduct Extension]\label{cons:cp}
The \emph{coproduct extension} of an $H$-coalgebra $(C,c)$ by a morphism $p\colon X\to HC$ is the coalgebra $(C+X,c_p)$ whose structure is given by
\[ c_p \, \equiv \, (\,C+X \xto{[c,p]} HC \xto{H\inl} H(C+X)\,).   \]
\end{construction}
Let us immediately note an expected property of coproduct extensions:

\begin{lemma}\label{lem:coproduct-inl-coalg-mor} The coproduct injection $\inl\colon (C,c)\to (C+X,c_p)$ is a coalgebra morphism.
\end{lemma}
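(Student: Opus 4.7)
The plan is to verify the defining equation of a coalgebra morphism by a direct diagram chase, since \autoref{cons:cp} has already packaged $c_p$ in exactly the form that makes this immediate. Concretely, to establish that $\inl \colon (C,c) \to (C+X, c_p)$ is a coalgebra morphism, I need to check that
\[
H\inl \circ c \;=\; c_p \circ \inl.
\]

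The right-hand side unfolds, by the definition of $c_p$ in \autoref{cons:cp}, to $H\inl \circ [c,p] \circ \inl$. The only nontrivial fact needed is the standard universal-property identity $[c,p] \circ \inl = c$ for the copairing into the coproduct $C+X$; once this is applied, both sides collapse to $H\inl \circ c$ and the equation holds on the nose.

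There is essentially no obstacle here: the lemma is a one-line verification that makes no use of $(\injmono)$, $(\hpresint)$, finitarity of $H$, or any structure beyond coproducts in $\C$. The only reason to record it as a named lemma is that the statement is used repeatedly in the sequel (in particular in the arguments leading to the Coalgebraic \Konig's Lemma and to the initial algebra constructions in \autoref{sec:initial-algebras}), so having the coproduct injection available as a canonical coalgebra morphism $(C,c) \hookrightarrow (C+X, c_p)$ streamlines later proofs.
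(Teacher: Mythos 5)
Your proof is correct and is essentially identical to the paper's: the paper's diagram encodes exactly the computation $c_p \circ \inl = H\inl \circ [c,p] \circ \inl = H\inl \circ c$, using only the copairing identity $[c,p]\circ\inl = c$. Nothing further to add.
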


\begin{proofappendix}{lem:coproduct-inl-coalg-mor}
This is shown by the commutative diagram below:
\[
\begin{tikzcd}[baseline=(HC.base)]
C \ar{dr}{c} \ar{r}{\inl} \ar{dd}[swap]{c} & C+X \ar[shiftarr={xshift=15mm}]{dd}{c_p} \ar{d}{[c,p]} \\
& HC \ar{d}{H\inl} \\
|[alias=HC]|
HC \ar{r}{H\inl} & H(C+X)
\end{tikzcd}
\tag*{\qedhere}
\]
\end{proofappendix}

Informally, a coproduct extension adds to a given coalgebra $(C,c)$ a set $X$ of new states and lets every new state transition arbitrarily into states from $C$, while leaving transitions of states in $C$ unchanged. This construction should not introduce any infinite paths to the given coalgebra, which is confirmed by the following result: 

\begin{proposition}[\injmono, \hpresint]\label{prop:wf-coalgebra-coproduct}
For every well-founded coalgebra $(C,c)$ and every morphism $p\colon X\to HC$, the coproduct extension $(C+X,c_p)$  is a well-founded coalgebra.
\end{proposition}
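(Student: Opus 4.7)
The plan is to take an arbitrary cartesian subcoalgebra $m\colon (S,s) \monoto (C+X, c_p)$ and prove that $m$ is an isomorphism by showing that both coproduct injections factor through it. The whole argument rests on two reductions: one transferring cartesian subcoalgebras from $(C+X, c_p)$ back to $(C,c)$ (which consumes \hpresint), and one using the pullback defining $m$ to push this back up to a factorization through all of $C+X$.

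First I would form the pullback $S_C$ of $m$ along $\inl \colon C \monoto C+X$ (with $\inl$ monic by \injmono), giving projections $i_C \colon S_C \monoto C$ and $i_S \colon S_C \monoto S$. Applying $H$ preserves this pullback by \hpresint, and combining it with the cartesian property of $m$ and the identity $c_p \circ \inl = H\inl \circ c$ lets me exhibit a coalgebra structure $s_C \colon S_C \to HS_C$ as the unique mediator with $Hi_S \circ s_C = s \circ i_S$ and $Hi_C \circ s_C = c \circ i_C$. A pullback-pasting argument then shows that $i_C \colon (S_C, s_C) \monoto (C, c)$ is itself a cartesian subcoalgebra. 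By well-foundedness of $(C,c)$, $i_C$ is an isomorphism, so there exists $u \colon C \to S$ with $m \circ u = \inl$.

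Next I would use this factorization to show that $\inr \colon X \to C+X$ also factors through $m$. The pair of morphisms $(Hu \circ p,\, \inr) \colon X \to HS \times (C+X)$ satisfies $Hm \circ (Hu \circ p) = H(m \circ u) \circ p = H\inl \circ p = c_p \circ \inr$, so it forms a cone over the cartesian square \eqref{eq:cartesian} for $m$. The pullback property then yields a unique $v \colon X \to S$ with $m \circ v = \inr$. The induced morphism $[u, v] \colon C+X \to S$ satisfies $m \circ [u,v] = \id_{C+X}$, and since $m$ is monic this forces $[u,v] \circ m = \id_S$ as well, so $m$ is an isomorphism. Thus $(C+X, c_p)$ admits no proper cartesian subcoalgebra.

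The main obstacle is the first reduction: proving that the intersection $S_C = S \cap C$ not only carries an $H$-coalgebra structure making $i_C$ a subcoalgebra, but that the resulting square is actually cartesian over $(C,c)$. This is where \hpresint is genuinely used, because the construction of $s_C$ and the verification that it witnesses a pullback both require $H$ to commute with the pullback of $m$ along $\inl$ and with the induced pullback of $Hm$ along $H\inl$. Once this step is in place, the move from ``$C$ lies in $S$'' to ``$X$ lies in $S$'' is a direct pullback-chase, exploiting that transitions out of $X$ under $c_p$ are forced through $H\inl$ and therefore land entirely in (the image of) $C$.
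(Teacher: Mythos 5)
Your proposal is correct and follows essentially the same route as the paper's proof: pull $m$ back along $\inl$, use (\hpresint) to equip $S\cap C$ with a coalgebra structure and show it is a cartesian subcoalgebra of $(C,c)$ (hence all of $C$ by well-foundedness), then use the cartesian square for $m$ to see that $\inr$ also factors through $m$, making $m$ a split epi and thus an isomorphism. The only cosmetic difference is that the paper also forms the intersection $S\cap X$ explicitly and shows its projection to $X$ is invertible, whereas you assemble the splitting $[u,v]$ of $m$ directly; your observation that the cartesianness of $S\cap C$ over $(C,c)$ follows from pullback pasting is a clean packaging of the paper's diagram chase.
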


\begin{proof}[Proof sketch]
Let $m\colon (S,s)\monoto (C+X,c_p)$ be a cartesian subcoalgebra. We need to prove that $m$ is an isomorphism. Form the two intersections in $\C$ shown below; note that the coproduct injections $\inl$ and $\inr$ are monic by assumption (\injmono).
\begin{equation*}
\begin{tikzcd}
S\cap C \pullbackangle{-45} \ar[tail]{d}[swap]{l_{S\cap C}} \ar[tail]{r}{r_{S\cap C}} & C \ar[tail]{d}{\inl} \\
S \ar[tail]{r}{m} & C+X
\end{tikzcd}
\qquad
\begin{tikzcd}
S\cap X \pullbackangle{-45} \ar[tail]{d}[swap]{l_{S\cap X}} \ar[tail]{r}{r_{S\cap X}} & X \ar[tail]{d}{\inr} \\
S \ar[tail]{r}{m} & C+X
\end{tikzcd}
\end{equation*}
One can show that both $r_{S\cap C}$ and  $r_{S\cap X}$ are isomorphisms. In the case of $r_{S\cap C}$, this follows from the observation that the subobject $r_{S\cap C}\colon S\cap C \monoto C$ carries a cartesian subcoalgebra of the well-founded coalgebra $(C,c)$. We conclude that $m$ is a split epimorphism:
\[ m\circ [l_{S\cap C} \circ r_{S\cap C}^{-1}, l_{S\cap X}\circ r_{S\cap X}^{-1}] = [\inl,\inr]=\id_{C+X}, \]
and since $m$ is also monic, it is an isomorphism as required. 
\end{proof}

\begin{proofappendix}{prop:wf-coalgebra-coproduct}
Let $m\colon (S,s)\monoto (C+X,c_p)$ be a cartesian subcoalgebra. We need to prove that $m$ is an isomorphism. Form the following intersections in $\C$ (the coproduct injections $\inl$ and $\inr$ are monic by assumption \injmono):
\begin{equation}\label{eq:sc-int}
\begin{tikzcd}
S\cap C \pullbackangle{-45} \ar[tail]{d}[swap]{l_{S\cap C}} \ar[tail]{r}{r_{S\cap C}} & C \ar[tail]{d}{\inl} \\
S \ar[tail]{r}{m} & C+X
\end{tikzcd}
\end{equation}
\begin{equation}\label{eq:sx-int}
\begin{tikzcd}
S\cap X \pullbackangle{-45} \ar[tail]{d}[swap]{l_{S\cap X}} \ar[tail]{r}{r_{S\cap X}} & X \ar[tail]{d}{\inr} \\
S \ar[tail]{r}{m} & C+X
\end{tikzcd}
\end{equation}
Below we shall prove the following:
\begin{enumerate}
\item\label{s1} $r_{S\cap C}$ is an isomorphism.
\item\label{s2} $r_{S\cap X}$ is an isomorphism.
\end{enumerate}
These statements imply that $m$ is a split epimorphism:
\[ m\circ [l_{S\cap C} \circ r_{S\cap C}^{-1}, l_{S\cap X}\circ r_{S\cap X}^{-1}] = [\inl,\inr]=\id_{C+X}, \]
and since $m$ is also monic, it is an isomorphism as required. 

It thus only remains to prove the above two claims.

\medskip\noindent \emph{Proof of \ref{s1}}: The following diagram commutes:
\begin{equation}\label{eq:comm-diag}
\begin{tikzcd}
S\cap C \ar[tail]{dd}[swap]{r_{S\cap C}} \ar[tail]{rr}{l_{S\cap C}}  && S \ar{d}{s} \ar[tail]{dl}[swap]{m} \\
& C+X \ar{d}{[c,p]} \ar{dr}{c_p} & HS \ar[tail]{d}{Hm} \\
C \ar{ur}{\inl} \ar{r}{c} & HC \ar{r}{H\inl} & H(C+X)
\end{tikzcd}
\end{equation}
Therefore, by \eqref{eq:sc-int} and since $H$ preserves intersections, there exists a unique morphism $s_C$ making the left-hand and upper part of the diagram below commute:
\begin{equation}\label{eq:def-ol-sc}
\begin{tikzcd}
S\cap C \ar[tail]{dd}[swap]{r_{S\cap C}} \ar[tail]{rr}{l_{S\cap C}}  \ar[dashed]{dr}{s_C} && S \ar{d}{s} \\
& H(S\cap C)  \pullbackangle{-45}\ar[tail]{r}{Hl_{S\cap C}} \ar[tail]{d}{Hr_{S\cap C}} & HS \ar[tail]{d}{Hm} \\
C \ar{r}{c} & HC \ar{r}{H\inl} & H(C+X)
\end{tikzcd}
\end{equation}
We claim that $r_{S\cap C}\colon (S\cap C, s_C) \monoto (C,c)$ is a cartesian subcoalgebra, which means that the following diagram is a pullback:
\begin{equation}\label{eq:sc-pullback}
\begin{tikzcd}
S \cap C \ar[tail]{r}{s_C} \ar[tail]{d}[swap]{r_{S\cap C}} \pullbackangle{-45} & H(S\cap C) \ar[tail]{d}{Hr_{S\cap C}}\\
C \ar{r}{c} & HC
\end{tikzcd}
\end{equation}
Thus let $f\colon Z\to H(S\cap C)$ and $g\colon Z\to C$ be morphisms with
\begin{equation}\label{eq:def-fg}
Hr_{S\cap C}\circ f = c\circ g.
\end{equation}
Then the diagram below commutes:
\begin{equation}\label{eq:comm-diag2}
\begin{tikzcd}
Z \ar{rrr}{f} \ar{dd}[swap]{g} & & & H(S\cap C) \ar[tail]{d}{Hl_{S\cap C}} \ar[tail]{ddl}[swap]{Hr_{S\cap C}}  \\
& & & HS \ar[tail]{d}{Hm} \\
  C \ar[rounded corners,to path={
    ([xshift=2mm]\tikztostart.north)
    -- ++(0,4mm)
    -- ([yshift=4mm,xshift=-1mm]\tikztotarget.north) \tikztonodes
    -- ([xshift=-1mm]\tikztotarget.north)
  }]{rr}{c} \ar{r}{\inl} & C+X \ar{r}{[c,p]} & HC \ar{r}{H\inl} & H(C+X)
\end{tikzcd}
\end{equation}
Therefore, there exists a unique morphism $j$ making the left-hand and upper part of the following diagram commute:
\begin{equation}\label{eq:def-j}
\begin{tikzcd}
Z \ar[dashed]{dr}{j} \ar{rrr}{f} \ar{dd}[swap]{g} & & & H(S\cap C) \ar[tail]{d}{Hl_{S\cap C}}  \\
& S \pullbackangle{-45} \ar[tail]{d}[swap]{m} \ar{rr}{s} & & HS \ar[tail]{d}{Hm} \\
C \ar{r}{\inl} & C+X \ar{r}{[c,p]} & HC \ar{r}{H\inl} & H(C+X)
\end{tikzcd}
\end{equation}
This in turn yields a unique morphism $k$ making the left-hand and upper part of the diagram
\begin{equation}\label{eq:def-k}
\begin{tikzcd}
Z \ar[dashed]{dr}{k} \ar[bend left=20]{drr}{g} \ar[bend right=20]{ddr}[swap]{j} & &  \\
& S\cap C \pullbackangle{-45} \ar[tail]{r}{r_{S\cap C}} \ar[tail]{d}[swap]{l_{S\cap C}} & C \ar[tail]{d}{\inl} \\
& S \ar[tail]{r}{m} & C+X
\end{tikzcd}
\end{equation}
commute. Then $k$ also makes the left-hand and upper part of the diagram below commute:
\begin{equation}\label{eq:k-pullback}
\begin{tikzcd}
Z \ar[bend left=20]{drr}{f} \ar[bend right=20]{ddr}[swap]{g} \ar{dr}{k} && \\
& S \cap C \ar[tail]{r}{s_C} \ar[tail]{d}[swap]{r_{S\cap C}} & H(S\cap C) \ar[tail]{d}{Hr_{S\cap C}}\\
& C \ar{r}{c} & HC
\end{tikzcd}
\end{equation}
Indeed, the left-hand part commutes by definition of $k$, and the upper part commutes because it does so when post-composed with the monomorphism $Hr_{S\cap C}$. Moreover, since $r_{S\cap C}$ is monic, $k$ is the unique morphism making both parts commute. This proves that \eqref{eq:sc-pullback} is a pullback, as claimed. Since $(C,c)$ is well-founded, we conclude that $r_{S\cap C}$ is an isomorphism.

\medskip\noindent \emph{Proof of \ref{s2}}:
The following diagram commutes:
\begin{equation}\label{eq:comm-diag3}
\begin{tikzcd}[column sep=30]
X \ar{rr}{p} \ar{ddd}[swap]{\inr}  && HC \ar[bend right=10, equals]{dddl} \ar{d}{Hr_{S\cap C}^{-1}} \\
&& H(S\cap C) \ar{d}{Hl_{S\cap C}} \\
&& HS \ar{d}{Hm}  \\
C+X \ar{r}{[c,p]} & HC \ar{r}{H\inl} & H(C+X)
\end{tikzcd}
\end{equation}
Therefore, there exists a unique morphism $n$ making the left-hand and upper part of the diagram below commute:
\begin{equation}\label{eq:comm-diag4}
\begin{tikzcd}[column sep=30]
X \ar[dashed]{ddr}{n} \ar{rrr}{p} \ar[bend right=10]{dddr}[swap]{\inr}  &&& HC \ar{d}{Hr_{S\cap C}^{-1}} \\
&&& H(S\cap C) \ar{d}{Hl_{S\cap C}} \\
& S \ar{rr}{s} \ar[tail]{d}[swap]{m} \pullbackangle{-45} && HS \ar{d}{Hm}  \\
&C+X \ar{r}{[c,p]} & HC \ar{r}{H\inl} & H(C+X)
\end{tikzcd}
\end{equation}
This in turn yields a unique morphism $o$ making the left-hand and upper part of the following diagram commute:
\begin{equation}
\begin{tikzcd}
X \ar[shiftarr={yshift=15}]{rr}{n} \ar[dashed]{r}{o} \ar{dr}[swap]{\id} & S\cap X \pullbackangle{-45} \ar[tail]{r}{l_{S\cap X}} \ar[tail]{d}[swap]{r_{S\cap X}} & S \ar[tail]{d}{m} \\
& X \ar{r}{\inr} & C+X 
\end{tikzcd}
\end{equation}
Thus the monomorphism $r_{S\cap X}$ is a split epimorphism (with splitting $o$), and so it is an isomorphism.
\end{proofappendix}%
\begin{figure}%
  \begin{minipage}[b]{.40\textwidth}%
\begin{tikzcd}[column sep=4mm,every cell/.append style={inner xsep=1pt}]
(C_i,c_i)
  \ar[shiftarr={yshift=8mm}, tail]{rr}{m_i} \ar[tail]{r}{n_i}
& (\bigvee_{i\in I} C_i, \ol{c}) \ar[tail]{r}{m}
& (C,c)  
\end{tikzcd}
  \caption{Join of subcoalgebras.}
  \label{fig:joinSubcoalgebras}
  \end{minipage}%
  \begin{minipage}[b]{.30\textwidth}%
\begin{tikzcd}[outer sep=0pt,row sep=6mm]
\bigvee C_i \ar{r}{\overline{c}} \pullbackangle{-45} \ar[tail]{d}[swap]{m} & H(\bigvee C_i) \ar[tail]{d}{Hm} \\
C \ar{r}{c} & HC
\end{tikzcd}
  \caption{Cartesianess.}
  \label{fig:joinCartesian}
  \end{minipage}%
  \begin{minipage}[b]{.30\textwidth}
    \centering
  \begin{tikzcd}[row sep=3mm,column sep=4mm]
      \phantom{-}1 \arrow{r} \arrow{dr}
      & \phantom{-}2 \arrow{r} \arrow{dr}
      & \phantom{-}3 \arrow{r} \arrow{dr}
      & \cdots
      \\
      -1 \arrow{r} \arrow{ur}
      & -2 \arrow{r} \arrow{ur}
      & -3 \arrow{r} \arrow{ur}
      & \cdots
    \end{tikzcd}
    \\[2mm]
    $k \mapsto (-|k|{-}1, |k|{+}1)$
  \caption{A coalgebra}
  \label{fig:coalgExample}
  \end{minipage}%
\end{figure}

Recall the classical \Konig's lemma: In a finitely branching well-founded graph (equivalently, a well-founded $\Powf$-coalgebra), every node lies in some finite subcoalgebra. The following theorem generalizes \Konig's lemma to coalgebras over \lfp categories:

\begin{theorem}[Coalgebraic \Konig's Lemma, \injmono, \hpresint]\label{thm:koenig}
Every well-founded $H$-coalgebra is the directed join of its fg-carried well-founded subcoalgebras.
\end{theorem}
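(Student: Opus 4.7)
The plan is to let $\mathcal{F}$ denote the family of fg-carried well-founded subcoalgebras of $(C,c)$, set $(C^*,c^*) := \bigvee \mathcal{F}$ as a subobject join in $\Sub(C)$, and prove that $(C^*,c^*)$ is a cartesian subcoalgebra of $(C,c)$. Well-foundedness of $(C,c)$ then forces the inclusion $m^*\colon C^* \monoto C$ to be an isomorphism, which is exactly the theorem.

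First I would verify that $\mathcal{F}$ is directed. Given $(S_1,s_1),(S_2,s_2) \in \mathcal{F}$, their coproduct in $\Coalg(H)$ is well-founded by \iref{prop:wf}{closed-colim-quotients} and fg-carried by \autoref{lem:fp-fp-fin-colimits}. The image of the cotupled morphism $[m_1,m_2]\colon (S_1,s_1)+(S_2,s_2) \to (C,c)$ is a subcoalgebra carried by the join $S_1 \vee S_2 \monoto C$ of the subobjects, and it remains fg and well-founded by the same two results (closure under strong quotients). Hence $\mathcal{F}$ is directed, so by \autoref{lem:lfp-directed-unions} its join $(C^*,c^*)$ is a filtered colimit of well-founded coalgebras and therefore a well-founded subcoalgebra of $(C,c)$.

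To show $(C^*,c^*)$ is cartesian, form the pullback $P\monoto C$ of $c$ along $Hm^*\colon HC^*\monoto HC$, with structure arrow $p\colon P\to HC^*$. Since $C^*$ is a subcoalgebra, $C^* \leq P$ in $\Sub(C)$, so what remains is $P\leq C^*$. Using that every object of a \lfplong category is the directed join of its fg subobjects, it suffices to show that every fg subobject $z\colon Z\monoto P$ satisfies $Z\leq C^*$ in $\Sub(C)$. Setting $z' := (Z\monoto P\monoto C)$, one has $c\circ z' = Hm^*\circ p\circ z$. Because $H$ is finitary and preserves monomorphisms (by (\hpresint) and \autoref{rem:int-implies-mono}), $HC^*$ is the directed join $\bigvee_{S\in\mathcal{F}} HS$ with monic injections; since $Z$ is fg, $p\circ z$ factors through some such injection, yielding $S_0\in\mathcal{F}$ and $d\colon Z\to HS_0$ with $Hm_{S_0}\circ d = c\circ z'$.

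The decisive move is now coproduct extension: the coalgebra $(S_0+Z, (s_0)_d)$ of \autoref{cons:cp} is well-founded by \autoref{prop:wf-coalgebra-coproduct} and fg-carried by \autoref{lem:fp-fp-fin-colimits}. A direct calculation shows that $\phi := [m_{S_0}, z']\colon (S_0+Z,(s_0)_d) \to (C,c)$ is a coalgebra morphism, with both $c\circ \phi$ and $H\phi \circ (s_0)_d$ reducing to $Hm_{S_0}\circ[s_0,d]$. The image factorization of $\phi$ produces an fg-carried well-founded subcoalgebra $(T,t)\monoto (C,c)$, hence $T\in\mathcal{F}$ and $T\leq C^*$. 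Since $z' = \phi\circ\inr$ factors through $T$, it factors through $C^*$, establishing $Z\leq C^*$. Combining $P\leq C^*$ with $C^* \leq P$ shows that $(C^*,c^*)$ is cartesian. The main obstacle is precisely this coproduct-extension step: \autoref{prop:wf-coalgebra-coproduct} is the indispensable ingredient that upgrades an fg subobject of $C$ whose $c$-image lies in $HC^*$ into an fg-carried well-founded subcoalgebra of $(C,c)$, and it is exactly where the assumptions (\injmono) and (\hpresint) are genuinely used.
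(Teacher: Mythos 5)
Your proof is correct and follows essentially the same route as the paper's: directedness of the family via closure of fg-carried well-founded coalgebras under finite coproducts and strong quotients, factorization through $H$ applied to the directed colimit using finitarity, and the decisive use of coproduct extension plus \autoref{prop:wf-coalgebra-coproduct} followed by image factorization to land back in the family. The only (immaterial) difference is presentational: you compare the pullback object $P$ with $C^*$ in the subobject lattice by reducing to fg subobjects of $P$, whereas the paper verifies the pullback's universal property directly against test morphisms from fg objects and then from arbitrary objects.
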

Here the join is formed in the complete lattice of subcoalgebras of the given coalgebra. As a minor subtlety, we note that subcoalgebras of well-founded coalgebras generally need not be well-founded; this would require stronger assumptions~\cite[Prop.~8.4.7]{amm25}. Therefore, it is crucial that \autoref{thm:koenig} considers \emph{well-founded} (not arbitrary) subcoalgebras.

\begin{proof}[Proof sketch]
Let $(C,c)$ be a well-founded coalgebra, and let $m_i\colon (C_i,c_i)\monoto (C,c)$ ($i\in I$) be the family of all fg-carried well-founded subcoalgebras of $(C,c)$. Under the usual subobject ordering, the family forms a directed poset; this follows easily from the closure properties of $\C_\fg$ (\autoref{lem:fp-fp-fin-colimits}) and $\Coalg_{\wf}(H)$ (\iref{prop:wf}{closed-colim-quotients}). Form the (directed) join of all $m_i$ in
in the complete lattice of subcoalgebras of $(C,c)$ (see \autoref{fig:joinSubcoalgebras}). This join is computed at the level of the subobject lattice of $C\in \C$ (\autoref{lem:colimits-coalg}). Since directed joins

\noindent of subobjects in \lfp categories are directed colimits (\autoref{lem:lfp-directed-unions}), we see that $(n_i)_{i\in I}$ is a colimit cocone. It suffices to show that $m$ is a cartesian subcoalgebra, since this implies that $m$ is an isomorphism and hence proves the theorem. 

For the proof that $m$ is cartesian (\autoref{fig:joinCartesian}), one needs to show that the commutative square below forms a pullback, that is, given $f\colon X\to H(\bigvee C_i)$ and $g\colon X\to C$ with
$Hm\circ f = c\circ g$, one
needs to construct the mediating morphism $k\colon X\to \bigvee C_i$. This is achieved by first constructing $k$ for the case $X\in \C_\fg$. Here one factorizes the morphism $f$ as $X\xto{p} HC_i \xto{Hn_i} H(\bigvee_i C_i)$ for some~$p$ and $i$, using that $(Hn_i)$ is a colimit cocone since the functor $H$ is finitary, and then analyses the coproduct extension $(C_i+X,(c_i)_p)$. The general case $X\in \C$ follows from the case 
of finitely generated objects by expressing $X$ as the colimit of its finitely generated subobjects. This is the point in the proof where we need that $\C$ is \lfp. 
\end{proof}

\begin{proofappendix}{thm:koenig}
Let $(C,c)$ be a well-founded coalgebra and let $\Sub_{\fg,\wf}(C,c)$ denote the poset of all its fg-carried well-founded subcoalgebras. Note that the poset $\Sub_{\fg,\wf}(C,c)$ is small because $\C$ is well-powered (\autoref{lem:lfp-complete-wp}). Moreover, it is directed, in fact a lattice: Given a finite family $m_i\colon (C_i,c_i)\monoto (C,c)$ ($i\in I_0$) of fg-carried well-founded subcoalgebras, its join $m\colon \bigvee_{i\in I_0} (C_i,c_i)\monoto (C,c)$ in $\Sub_{\fg,\wf}(C,c)$  is the image of $[m_i]_{i\in I_0}\colon \coprod_{i\in I_0}\colon (C_i,c_i)\to (C,c)$. Note that the coalgebra $\bigvee_{i\in I_0} (C_i,c_i)$ actually lies in $\Coalg_{\fg,\wf}(H)\hookrightarrow \Coalg(H)$ because the latter subcategory is closed under finite coproducts (\autoref{lem:fp-fp-fin-colimits}) and strong quotients (\autoref{prop:wf}). 

Now let $m_i\colon (C_i,c_i)\monoto (C,c)$ ($i\in I$) be the family of all fg-carried well-founded subcoalgebras of $(C,c)$. Form its join in the complete lattice of all subcoalgebras of $(C,c)$:
\[
\begin{tikzcd}
(C_i,c_i) \ar[shiftarr={yshift=6mm}, tail]{rr}{m_i} \ar[tail]{r}{n_i} & (\bigvee_{i\in I} C_i, \ol{c}) \ar[tail]{r}{m} & (C,c)  
\end{tikzcd}
\]
This join is directed, as argued above, and formed at the level of $\C$ (\autoref{lem:colimits-coalg}). Since directed joins of subobjects in \lfp categories are directed colimits (\autoref{lem:lfp-directed-unions}), we see that $(n_i)_{i\in I}$ is a colimit cocone. We will show that $m$ is a cartesian subcoalgebra, which implies that $m$ is an isomorphism and hence proves the theorem.

To prove that $m$ is cartesian, we need to show that the following square is a pullback:
\begin{equation}\label{eq:pullback}
\begin{tikzcd}
\bigvee C_i \ar{r}{\overline{c}} \pullbackangle{-45} \ar[tail]{d}[swap]{m} & H(\bigvee C_i) \ar[tail]{d}{Hm} \\
C \ar{r}{c} & HC
\end{tikzcd}
\end{equation}

Thus suppose that $f\colon X\to H(\bigvee C_i)$ and $g\colon X\to C$ are morphisms with
\[ Hm\circ f = c\circ g.  \]
Our task is to construct the mediating morphism $k\colon X\to \bigvee C_i$. 

\begin{enumerate}
\item Assume first that $X\in \C_\fg$. Since $H$ is finitary and preserves monos, $(Hn_i)$ is a colimit cocone of monomorphisms and so $f$ factorizes as
\[ 
\begin{tikzcd}
X \ar{r}{f} \ar{dr}[swap]{p} & H(\bigvee C_i) \\
& HC_i \ar[tail]{u}[swap]{Hn_i}
\end{tikzcd}
 \]
for some $i\in I$ and $p\colon X\to HC_i$. Form the coproduct extension $(C_i+X,(c_i)_p)$ of the coalgebra $(C_i,c_i)$ by $p$ (\autoref{cons:cp}). Then
\[ [m_i,g]\colon (C_i+X,(c_i)_p)\to (C,c)\]
is a coalgebra morphism, as the diagram below commutes:
\[
\begin{tikzcd}
C_i+X \ar{rr}{(c_i)_p} \ar{dr}{[c_i,p]} \ar{dd}[swap]{[m_i,g]} && H(C_i+X) \ar{dd}{H[m_i,g]}  \\
& HC_i \ar{ur}{H\inl} \ar{dr}{Hm_i}  & \\
C \ar{rr}{c} && HC 
\end{tikzcd}
\]
For commutativity of the second component of the lower left part, we compute
\[ Hm_i\circ p = Hm\circ Hn_i \circ p = Hm\circ f = c\circ g. \]
Form the image factorization of $[m_i,g]$. Since the coalgebra $(C_i+X, (c_i)_p)$ is well-founded (\autoref{prop:wf-coalgebra-coproduct}) and $\Coalg_{\fg,\wf}(H)$ closed under strong quotients (\autoref{lem:fp-fp-fin-colimits}, \iref{prop:wf}{closed-colim-quotients}), the image of $[m_i,g]$ is an fg-carried well-founded subcoalgebra of $(C,c)$ and thus of the form $m_j\colon (C_j,c_j)\monoto (C,c)$ for some $j\in J$:
\[  
\begin{tikzcd}
(C_i+X,(c_i)_p) \ar[shiftarr={yshift=15}]{rr}{[m_i,g]} \ar[two heads]{r}{e} & (C_j,c_j) \ar[tail]{r}{m_j} & (C,c)
\end{tikzcd}
\]
We now define $k\colon X\to \bigvee_i C_i$ to be the composite
\[ k = (\, X \xto{\inr} C_i+X \xto{e} C_j \xto{n_j} \bigvee_i C_i  \,). \]
Then $k$ is the desired mediating morphism, i.e. both the left-hand and upper part of the diagram below commute:
\begin{equation}\label{eq:pb-proof}
\begin{tikzcd}
X \ar[bend left=20]{drr}{f} \ar[bend right=20]{ddr}[swap]{g} \ar{dr}{k}  && \\
& \bigvee C_i \ar{r}{\overline{c}} \pullbackangle{-45} \ar[tail]{d}[swap]{m} & H(\bigvee C_i) \ar[tail]{d}{Hm} \\
& C \ar{r}{c} & HC
\end{tikzcd}
\end{equation}
Indeed, commutativity of the left-hand part is witnessed by the following diagram:
\[
\begin{tikzcd}
X  \ar{d}[swap]{g} \ar[shiftarr={yshift=7mm}]{rrr}{k} \ar{r}{\inr} & C_i+X \ar{d}{[m_i,g]} \ar{r}{e} & C_j \ar{d}{m_j} \ar{r}{n_i} & \bigvee_i C_i \ar{d}{m} \\
C \ar[equals]{r} & C \ar[equals]{r} & C \ar[equals]{r} & C 
\end{tikzcd}
\]
Then also the upper part of \eqref{eq:pb-proof} commutes because it does so when postcomposed with the monomorphism $Hm$. The uniqueness of $k$ follows from $m$ being monic.
\item Now let $X\in \C$ be an arbitrary object. Since $\C$ is \lfp, we can express $X$ as the directed join of the family $x_j\colon X_j\monoto X$ ($j\in J$) of its finitely generated subobjects. By part 1 above, for each $j\in J$ there exists a morphism $k_j\colon X_j\to \bigvee_i C_i$ such that
\[ m\circ k_j = g\circ x_j. \]
The morphisms $k_j$ form a cocone over the diagram of $X_j$'s, so there exists a unique $k\colon X\to \bigvee_i C_i$ such that $k\circ x_j = k_j$ for all $j\in J$. This implies
\[ m\circ k\circ x_j = m\circ k_j = g\circ x_j  \]
and so
\[ m\circ k = g \]
because the colimit injections $x_j$ are jointly epic. Thus $k$ is the desired mediating morphism witnessing that \eqref{eq:pullback} is a pullback.\qedhere
\end{enumerate}
\end{proofappendix}

As an immediate consequence, we get a property of the category of well-founded coalgebras:

\begin{corollary}[\smash{\injmono}, \hpresint]\label{cor:wfIsLFP}
If $\C_\fp=\C_\fg$, then $\Coalg_\wf(H)$ is \lfp. 
\end{corollary}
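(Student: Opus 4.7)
The plan is to verify the two defining properties of a \lfp category for $\Coalg_\wf(H)$: cocompleteness, plus the existence of a small set $\A$ of finitely presentable objects whose filtered-colimit closure is everything. Cocompleteness is immediate: $\Coalg(H)$ is cocomplete because its forgetful functor to the \lfp (hence cocomplete) category $\C$ creates colimits (\autoref{lem:colimits-coalg}), and by \iref{prop:wf}{closed-colim-quotients} the full subcategory $\Coalg_\wf(H) \hookrightarrow \Coalg(H)$ is closed under all colimits and is therefore itself cocomplete.

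For the generating set, I take $\A$ to be a skeleton of the full subcategory of fp-carried well-founded coalgebras. This is a small set because up to isomorphism there is only a set of finitely presentable objects of $\C$ (a standard consequence of $\C$ being \lfp) and, for each such object $X$, the coalgebra structures $X \to HX$ form an ordinary hom-set. Using the hypothesis $\C_\fp = \C_\fg$, the objects of $\A$ coincide with the fg-carried well-founded coalgebras. The Coalgebraic \Konig's Lemma (\autoref{thm:koenig}) then expresses any well-founded coalgebra as the directed join of its fg-carried well-founded subcoalgebras, and \autoref{lem:lfp-directed-unions} together with \autoref{lem:colimits-coalg} and \iref{prop:wf}{closed-colim-quotients} identifies this directed join with the corresponding directed colimit in $\Coalg_\wf(H)$. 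Hence every well-founded coalgebra is a filtered colimit of objects from $\A$.

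The remaining step, which I regard as the main (though mild) subtlety, is to show that each object of $\A$ is finitely presentable \emph{in $\Coalg_\wf(H)$} rather than merely in $\Coalg(H)$. I would argue via the general principle that if $\mathcal{D} \hookrightarrow \mathcal{E}$ is a full subcategory closed under filtered colimits, then any object of $\mathcal{D}$ that is finitely presentable in $\mathcal{E}$ remains so in $\mathcal{D}$: indeed, filtered colimits in $\mathcal{D}$ are computed in $\mathcal{E}$, so the hom-functor $\mathcal{D}(X,-)$ agrees with $\mathcal{E}(X,-)$ on such diagrams. Applying this to $\Coalg_\wf(H) \hookrightarrow \Coalg(H)$, which is closed under filtered colimits by \iref{prop:wf}{closed-colim-quotients}, and invoking \iref{lem:fp-fg-coalg}{1}, which tells us fp-carried coalgebras are finitely presentable in $\Coalg(H)$, yields the desired conclusion and completes the verification that $\Coalg_\wf(H)$ is \lfp.
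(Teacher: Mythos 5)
Your proposal is correct and follows essentially the same route as the paper's proof: cocompleteness from closure of $\Coalg_\wf(H)$ under colimits in $\Coalg(H)$, the generating set supplied by the Coalgebraic \Konig's Lemma via $\C_\fp=\C_\fg$, and finite presentability of fp-carried well-founded coalgebras inside $\Coalg_\wf(H)$ deduced from \autoref{lem:fp-fg-coalg} together with closure under filtered colimits. The only difference is that you spell out the smallness of the generating set and the transfer-of-presentability principle in more detail than the paper does, which is a welcome but inessential elaboration.
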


\begin{proof}
The category $\Coalg_\wf(H)$ is cocomplete because it is closed under colimits in $\Coalg(H)$ (\iref{prop:wf}{closed-colim-quotients}). All coalgebras in $\Coalg_{\fp,\wf}(H)$ 
are finitely presentable objects of $\Coalg_\wf(H)$; this follows from the fact that all coalgebras in $\Coalg_\fp(H)$ are finitely presentable objects of $\Coalg(H)$ (\autoref{lem:fp-fg-coalg}) and that $\Coalg_\wf(H)$ is closed under (filtered) colimits in $\Coalg(H)$. By \autoref{thm:koenig}, every object of $\Coalg_\wf(H)$ is a filtered colimit of objects in $\Coalg_{\fp,\wf}(H)=\Coalg_{\fg,\wf}(H)$. Thus $\Coalg_\wf(H)$ is \lfp. 
\end{proof}

Without the restrictive assumption that $\C_\fp=\C_\fg$, we get a slightly weaker result. It makes use of the following characterization of locally presentable categories (cf.\ \autoref{rem:loc-pres}):

\begin{theorem}[{Local Generation Theorem~\cite[Thm.~1.70]{adamek_rosicky_1994}}] A category is locally presentable iff it is cocomplete and, for some regular cardinal $\lambda$, there exists a set $\mathcal{A}$ of $\lambda$-generated objects such that every object is the $\lambda$-directed colimit of the diagram of its subobjects from $\mathcal{A}$. 
\end{theorem}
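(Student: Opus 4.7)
The plan is to prove both directions of the biconditional. For the forward direction, assume $\C$ is locally $\mu$-presentable for some regular cardinal $\mu$. I would take $\lambda=\mu$ and let $\mathcal{A}$ be a set of representatives of the isomorphism classes of $\mu$-presentable objects; each is in particular $\mu$-generated, and cocompleteness is built into local presentability. The remaining task is to show that every object $X$ of $\C$ is the $\mu$-directed colimit of its subobjects from $\mathcal{A}$. Since $\C$ is well-powered with (strong epi, mono)-factorizations (as in \autoref{lem:lfp-complete-wp}), and $X$ is already a $\mu$-filtered colimit of objects from $\mathcal{A}$ by definition, I would factorize each colimit injection into $X$; the resulting images lie in (representatives of) $\mathcal{A}$ because $\mu$-generated objects are closed under strong quotients. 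A cofinality argument then shows this subobject family is $\mu$-directed and that its directed join in $\Sub(X)$ coincides with $X$, using that directed joins of subobjects are computed as directed colimits (in the spirit of \autoref{lem:lfp-directed-unions}, but at level $\mu$).

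The converse is the substantive half. Given cocomplete $\C$ with a set $\mathcal{A}$ of $\lambda$-generated objects witnessing every object as a $\lambda$-directed colimit of its subobjects from $\mathcal{A}$, the strategy is to produce a regular cardinal $\mu \geq \lambda$ such that (i) every $A \in \mathcal{A}$ is actually $\mu$-presentable, and (ii) every object of $\C$ is a $\mu$-filtered colimit of $\mu$-presentable objects. Part (ii) is essentially immediate from the hypothesis once (i) is established, since $\lambda$-directed colimits are $\mu$-filtered whenever $\lambda \leq \mu$, and the subobjects from $\mathcal{A}$ provide the requisite diagram of $\mu$-presentable objects.

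The main obstacle is part (i): upgrading $\lambda$-generated to $\mu$-presentable. The catch is that $\lambda$-generatedness only guarantees factorizations through colimit cocones of \emph{monomorphisms}, whereas $\mu$-presentability demands essentially unique factorization through arbitrary $\mu$-filtered colimits. I would bridge this gap by exploiting the hypothesis itself: given a $\mu$-filtered diagram $D$ with colimit $X$, I would re-present both $X$ and each $D_i$ via the hypothesized directed colimit of subobjects from $\mathcal{A}$, thereby reducing the factorization question out of $A \in \mathcal{A}$ to the monic case already handled by $\lambda$-generatedness. Executing this cleanly requires a cardinality calculation choosing $\mu$ large enough to dominate the sizes of hom-sets and of the relevant reindexing data from $\mathcal{A}$; this cardinal arithmetic, as carried out in \cite[Thm.~1.70]{adamek_rosicky_1994}, is where I would expect the bulk of the technical work to lie, and I would simply defer to that proof.
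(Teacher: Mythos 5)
The paper does not prove this statement at all: it is quoted verbatim from Ad\'amek--Rosick\'y \cite[Thm.~1.70]{adamek_rosicky_1994} and used as a black box to deduce that $\Coalg_\wf(H)$ is locally presentable. So there is no in-paper proof to compare against; I can only assess your sketch on its own terms.

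Your outline follows the standard textbook argument, but the forward direction contains an internal inconsistency. You set $\mathcal{A}$ to be representatives of the $\mu$-\emph{presentable} objects, and then argue that the (strong epi, mono)-images of the colimit injections ``lie in (representatives of) $\mathcal{A}$ because $\mu$-generated objects are closed under strong quotients.'' That closure property only puts the images among the $\mu$-\emph{generated} objects, which form a strictly larger class in general; they need not be $\mu$-presentable, so they need not be subobjects ``from $\mathcal{A}$'' as you defined it. The fix is easy and is what the statement actually permits: take $\mathcal{A}$ to be a set of representatives of the $\mu$-generated objects (a set because these are precisely the strong quotients of $\mu$-presentable objects and the category is co-well-powered with respect to strong epis). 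With that change, the rest of your forward direction — $\mu$-directedness of the family via joins of small subfamilies, and identification of the directed join with $X$ via the $\mu$-analogue of \autoref{lem:lfp-directed-unions} — goes through. For the converse, you correctly locate the substantive difficulty (upgrading $\lambda$-generated to $\mu$-presentable for a suitably large $\mu$), but your proposal explicitly defers the entire cardinality argument to the cited source, so this half is an attribution rather than a proof. Since the paper itself only cites the theorem, that deferral is consistent with how the result is used here, but you should be aware that your text does not constitute an independent proof of the nontrivial direction.
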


Note that the `if' direction of the theorem does not claim that the category is locally $\lambda$-presentable. We now instantiate this theorem to the category $\Coalg_\wf(H)$, $\lambda=\omega$, and $\mathcal{A}=\Coalg_{\fg,\wf}(H)$. The coalgebras in $\A$ are finitely generated objects of $\Coalg_\wf(H)$ because they are finitely generated in $\Coalg(H)$ (\autoref{lem:fp-fg-coalg}) and  $\Coalg_\wf(H)$ is closed under filtered colimits in $\Coalg(H)$ (\iref{prop:wf}{closed-colim-quotients}). Every coalgebra in $\Coalg_\wf(H)$ is the directed colimit of its subobjects from $\A$ (\autoref{thm:koenig}). Thus, the Local Generation Theorem yields:

\begin{corollary}[\smash{\injmono}, \hpresint]
The category $\Coalg_\wf(H)$ is locally presentable.
\end{corollary}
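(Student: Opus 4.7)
The plan is to apply the Local Generation Theorem just cited, with the parameters $\lambda = \omega$ and $\mathcal{A} = \Coalg_{\fg,\wf}(H)$. This requires three verifications: (a) that $\Coalg_\wf(H)$ is cocomplete, (b) that $\mathcal{A}$ (which is essentially small, since $\C$ is well-powered by \autoref{lem:lfp-complete-wp}) consists of finitely generated objects of $\Coalg_\wf(H)$, and (c) that every object of $\Coalg_\wf(H)$ is the directed colimit of its subobjects from $\mathcal{A}$.

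Part (a) is immediate from \iref{prop:wf}{closed-colim-quotients}, since $\Coalg(H)$ is cocomplete and $\Coalg_\wf(H)$ is closed under colimits in it. Part (c) is precisely the content of our Coalgebraic K\"onig's Lemma (\autoref{thm:koenig}), interpreted via \autoref{lem:lfp-directed-unions} as saying that the directed join equals the directed colimit of the diagram of fg-carried well-founded subcoalgebras; note that the applicability of \autoref{thm:koenig} relies on the two standing assumptions (\injmono) and (\hpresint).

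For (b), I would first invoke \autoref{lem:fp-fg-coalg}.2 to see that every fg-carried coalgebra is a finitely generated object of the larger category $\Coalg(H)$; here $H$ preserves monomorphisms as required, which follows from (\hpresint) by \autoref{rem:int-implies-mono}. To transfer this property from $\Coalg(H)$ down to $\Coalg_\wf(H)$, I would use that $\Coalg_\wf(H)$ is closed under filtered colimits in $\Coalg(H)$ (again by \iref{prop:wf}{closed-colim-quotients}): any filtered colimit cocone of monomorphisms in $\Coalg_\wf(H)$ is computed as in $\Coalg(H)$, so the factorization property defining a finitely generated object is inherited verbatim.

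I do not expect any serious obstacle here, since all the machinery has been assembled. The one subtle point to get right is step (b), specifically the argument that being finitely generated is stable under passage to the subcategory $\Coalg_\wf(H)$; this rests entirely on the closure property of well-founded coalgebras under (filtered) colimits, so it is routine. With (a), (b), (c) in hand, the Local Generation Theorem immediately yields local presentability of $\Coalg_\wf(H)$.
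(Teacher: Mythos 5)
Your proposal is correct and follows essentially the same route as the paper: instantiate the Local Generation Theorem with $\lambda=\omega$ and $\mathcal{A}=\Coalg_{\fg,\wf}(H)$, using \iref{prop:wf}{closed-colim-quotients} for cocompleteness and for transferring finite generation from $\Coalg(H)$ (via \autoref{lem:fp-fg-coalg}) down to $\Coalg_\wf(H)$, and \autoref{thm:koenig} for the directed-colimit presentation. The extra details you supply (smallness of $\mathcal{A}$ via well-poweredness, mono-preservation of $H$ via \autoref{rem:int-implies-mono}) are correct and merely make explicit what the paper leaves implicit.
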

We conclude this section with the observation that the coalgebraic \Konig's lemma no longer holds when well-founded coalgebras are replaced with recursive ones, even in $\Set$:

\begin{example}[\Konig's Lemma Fails for Recursive Coalgebras]\label{ex:konig-lemma-recursive}
  Consider the set functor
  \[
    HX = \set{*} + \set{\,(x_1,x_2)\in X\times X \mid x_1\neq x_2\,}
  \]
  which is the quotient of $X\times X$ identifying all elements on the diagonal with a constant~$*$. Clearly, $H$ is finitary and preserves binary intersections. Let $(C,c)$ be the coalgebra with state space $C := \Z\setminus\set{0}$ and the transition structure
  $c\colon C \to HC$, $c(k) = (-|k|-1, |k|+1)$ visualized in \autoref{fig:coalgExample}.
The only finite subcoalgebra of $(C,c)$ is the empty coalgebra; thus, $(C,c)$ is not the union of its finite recursive subcoalgebras. However, $(C,c)$ is recursive: For every $H$-algebra $(A,a)$, the constant map $h\colon (C,c)\to (A,a)$ defined by $h(k) :=
  a(*)$ is the unique coalgebra-to-algebra morphism. Indeed, $h$ is such a morphism because $(Hh\circ  c)(k) = *\in HA$ and so $(a\circ Hh\circ c)(k)=a(*)=h(k)$ for all $k\in C$. To prove uniqueness, suppose that $g\colon (C,c)\to (A,a)$ is a coalgebra-to-algebra morphism, and let $k\in C$. Since $c(|k|+1) = c(-|k|-1)$,
  we necessarily have $g(|k|+1) = g(-|k|-1)$ by $g = a\circ HG\circ c$. Thus, $Hg(c(k)) = * \in HA$ and again by $g = a\circ HG\circ c$, we obtain
\[ g(k) = (a\circ Hg\circ c)(k) = a(*) = h(k).\]
\twnote[inline]{}
\end{example}
\begin{proofappendix}[Details for]{ex:konig-lemma-recursive}
  For the detailed verification of the uniqueness of $h$, consider some coalgebra-to-algebra morphism $g\colon (C,c)\to (A,a)$ and $k\in C$.
  We have $c(|k|+1) = c(-|k|-1)$ and so
    \[
      g(|k|+1)
      = (a \circ Hg \circ c)(|k|+1)
      = (a \circ Hg\circ c)(-|k|-1)
      = g(-|k|-1).
    \]
    The definition of $c$ thus implies
    $Hg(c(k)) = * \in HA$, which proves
  \[ g(k) = (a\circ Hg\circ c)(k) = a(*) = h(k).
  \tag*{\qedhere\qed}
  \] 
\end{proofappendix}

\section{Applications of the Coalgebraic \Konig's Lemma}\label{sec:applications}
We present a few selected instantiations of our coalgebraic \Konig's lemma (\autoref{thm:koenig}).

\subsection{Coalgebras for Set Functors}\label{sec:finitary-set}
For the category $\C=\Set$, we recover a recent result by Ad\'amek et al.~\cite[Prop,~9.2.19]{amm25} as a special case of \autoref{thm:koenig}:

\begin{theorem}[\Konig's Lemma for Coalgebras in $\Set$]\label{thm:koenig-coalgebra-set} If $H$ is a finitary set functor, then every state of a well-founded $H$-coalgebra lies in some finite subcoalgebra.
\end{theorem}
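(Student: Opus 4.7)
The plan is to deduce this result directly from the Coalgebraic \Konig's Lemma (\autoref{thm:koenig}) by verifying its hypotheses for $\C = \Set$ and passing from the general directed-join statement to the element-wise statement specific to sets.

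First I would dispose of the intersection-preservation assumption (\hpresint) using \iref{rem:conditions}{set-func-pres-int}: Trnkov\'a's result lets us replace the given finitary set functor $H$ by a functor $\widetilde{H}$ that preserves binary intersections and satisfies $\Coalg(H) \cong \Coalg(\widetilde{H})$, with the isomorphism respecting well-foundedness and finite carriers. Next I would verify (\injmono): $\Set$ is extensive (as recorded in \iref{rem:conditions}{set-func-pres-int}'s surrounding discussion), so coproduct injections are monic. Hence both side conditions of \autoref{thm:koenig} apply to $\widetilde{H}$.

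Now let $(C,c)$ be a well-founded $\widetilde{H}$-coalgebra. By \autoref{thm:koenig}, $(C,c)$ is the directed join of its fg-carried well-founded subcoalgebras $m_i\colon (C_i,c_i) \monoto (C,c)$, $i\in I$. In $\Set$ we have $\Set_\fp = \Set_\fg =$ finite sets (cf.\ \autoref{fig:categories}), so each $C_i$ is finite. By \autoref{lem:lfp-directed-unions} and \autoref{lem:colimits-coalg}, this directed join is computed as the directed colimit in $\Set$ of the $C_i$'s along the monic inclusions $m_i$, and the subcoalgebra inclusion $\bigvee_{i\in I} C_i \monoto C$ is an isomorphism. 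Since directed colimits of monomorphisms in $\Set$ are just directed unions, every state $x \in C$ lies in some $C_i$, which is a finite (well-founded) subcoalgebra of $(C,c)$.

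I do not foresee a significant obstacle: the work has been done in \autoref{thm:koenig}, and the remaining steps are the standard reductions (Trnkov\'a modification, collapse of $\C_\fp$ and $\C_\fg$ to finite sets, and the description of directed joins of subsets as unions). The only minor point to mention explicitly is that $(C,c)$ being well-founded for $\widetilde{H}$ corresponds to being well-founded for $H$ via the isomorphism $\Coalg(H) \cong \Coalg(\widetilde{H})$, so the conclusion transfers back to $H$-coalgebras without loss.
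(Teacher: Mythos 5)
Your proposal is correct and follows essentially the same route as the paper: the paper's proof is precisely the reduction via Trnkov\'a's modification (\iref{rem:conditions}{set-func-pres-int}) to ensure (\hpresint), the observation that $\Set$ satisfies (\injmono), and then an application of \autoref{thm:koenig}, with the passage from the directed-join formulation to the element-wise statement being immediate since fg-carried subcoalgebras in $\Set$ are exactly the finite ones and directed joins of subobjects are unions. The extra care you take in transferring well-foundedness across $\Coalg(H)\cong\Coalg(\widetilde{H})$ is a reasonable detail the paper leaves implicit.
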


Indeed, by \iref{rem:conditions}{set-func-pres-int} we can safely assume that a finitary set functor $H$ preserves binary intersections (by modifying $H$ on the empty set). Thus \autoref{thm:koenig} applies.

\subsection{Graphs in a Topos}\label{sec:topos}
While \autoref{thm:koenig-coalgebra-set} extends \Konig's lemma from graphs to coalgebras in $\Set$, we now take an orthogonal direction: we stick with graphs but interpret them in a \lfp \emph{elementary topos}~$\C$~\cite{Johnstone_short}. (We assume basic familiarity with topos theory in this subsection.) Examples of \lfp toposes include the topos of sets, nominal sets (see \autoref{sec:nom}), any presheaf topos $\Set^\D$, or any coherent Grothendieck topos~\cite[Rem.~D.3.3.12]{Johnstone}.  

Graphs in a topos are coalgebras for the covariant power object functor $\Pow\colon \C\to \C$~\cite[Sec.~A.2.3]{Johnstone}; the latter maps each object $C$ to its power object $\Pow C=\Omega^C$, where $\Omega$ is the subobject classifier. By cartesian closure, a coalgebra $c\colon C\to \Pow C=\Omega^C$ corresponds to a morphism from $C\times C $ to $\Omega$ and thus to a subobject of $C\times C$, that is, a \emph{relation} on $C$. The role of the finite power set functor (modelling finitely branching graphs) is now played by the finitary coreflection $\Powf$ of $\Pow$ (\autoref{lem:finitary-coreflection}). Note that $\Powf$ is generally different from the \emph{Kuratowski functor} $\mathcal{K}\colon \C\to \C$, which captures Kuratowski-finite subobjects~\cite[Thm.~9.16]{Johnstone_short}. 

To see that $\Powf$-coalgebras indeed model `finite' branching, let $i\colon \Powf\to \Pow$ denote the co-universal natural transformation witnessing that $\Powf$ is the finitary coreflection of $\Pow$. Every $\Powf$-coalgebra $c\colon C\to \Powf C$ can be viewed as a $\Pow$-coalgebra $i_C\circ c\colon C\to \Pow C$ and therefore as a
 relation $m_c\colon E_c\monoto C\times C$. Given $s\colon I\to C$ we let $m_{c,I}\colon E_{c,I} \monoto I\times C$ denote the preimage of $m_c$ under $s\times \id$:
\[
\begin{tikzcd}[row sep=4mm,outer sep=0pt]
E\smash{{}_{c,I}} \pullbackangle{-45} \ar{r} \ar[tail]{d}[swap]{m_{c,I}} & E\smash{{}_c} \ar[tail]{d}{m_c} \\
I\times C \ar{r}{s\times \id} & C\times C
\end{tikzcd}
\]
 We then have the following simple result. Informally, it expresses that if we consider a finitely indexed family $(s_i)_{i\in I}$ of states in $C$, then their successors can also be finitely indexed. 

\begin{lemma}\label{lem:powf-coalg-topos}
Let $\C$ be a \lfp topos, and let $c\colon C\to \Powf C$ be a coalgebra for $\Powf$. Then for every $s\colon I\to C$ ($I\in \C_\fp$) there exists $t\colon J \to C$ ($J\in \C_\fp$) such that $m_{c,I}\colon E_{c,s}\monoto I\times C$ factorizes through $\id\times t\colon I\times J\to I\times C$. 
\end{lemma}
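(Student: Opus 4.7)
The plan is to use the filtered-colimit description of $\Powf C$ together with the power-object adjunction of the topos in order to identify $m_{c,I}$ with the image of an explicit subobject of $I\times P$ for some $P\in\C_\fp$, and then read off the factorization by setting $J:=P$ and $t:=f$.

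By \autoref{lem:finitary-coreflection}, $\Powf C=\colim_{(f\colon P\to C)\in\C_\fp/C}\Pow P$ is a filtered colimit with injections $j_f\colon\Pow P\to\Powf C$, and the coreflection counit $i_C\colon\Powf C\to\Pow C$ is characterized by $i_C\circ j_f=\Pow f$ for all $f$. Since $I\in\C_\fp$, the hom-functor $\C(I,\argument)$ preserves this colimit, so $c\circ s\colon I\to\Powf C$ factors through one of the injections: $c\circ s=j_f\circ u$ with $(f\colon P\to C)\in\C_\fp/C$ and $u\colon I\to\Pow P$. By the topos's power-object adjunction $\Hom(I,\Pow P)\cong\Sub(I\times P)$, the morphism $u$ corresponds to a subobject $r\colon R\monoto I\times P$. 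Combining these ingredients, $i_C\circ c\circ s=\Pow f\circ u$ corresponds under $\Hom(I,\Pow C)\cong\Sub(I\times C)$ to the image of $R$ along $\id_I\times f\colon I\times P\to I\times C$; but by definition $m_{c,I}$ is precisely the subobject of $I\times C$ corresponding to $i_C\circ c\circ s$, so this image is $m_{c,I}$, yielding a strong-epi/mono factorization
\[
R\overset{e}{\twoheadrightarrow}E_{c,I}\overset{m_{c,I}}{\monoto}I\times C
\]
whose composite equals $R\xto{r}I\times P\xto{\id_I\times f}I\times C$.

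Setting $J:=P$ and $t:=f$, the required morphism $h\colon E_{c,I}\to I\times J$ satisfying $(\id_I\times t)\circ h=m_{c,I}$ is obtained whenever $r$ factors through the quotient $e$, say $r=h\circ e$: then $(\id_I\times t)\circ h\circ e=(\id_I\times f)\circ r=m_{c,I}\circ e$, and cancelling the epi $e$ gives the claim. The main obstacle is precisely this descent: it requires $r$ to coequalize the kernel pair of the composite $R\to I\times C$, which in general asks that $f$ be injective on $\pi_2(R)\subseteq P$. The clean way to secure it is to arrange that $f$ itself be a monomorphism, because then $\id_I\times f$ is mono, the kernel pair of $R\to I\times C$ reduces to the kernel pair of $r$, and $r$ trivially coequalizes its own kernel pair. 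Monicity of $f$ is in turn arranged at the very first step by replacing $f$ with its mono part $f'\colon f(P)\monoto C$ from the image factorization $P\twoheadrightarrow f(P)\monoto C$; under the standard assumption $\C_\fp=\C_\fg$ valid in the LFP toposes of interest, the strong quotient $f(P)$ lies in $\C_\fp$ by \autoref{lem:fp-fp-fin-colimits}, so $(f'\colon f(P)\to C)\in\C_\fp/C$, and the entire construction transports along the induced $\Pow e_f\colon\Pow P\to\Pow f(P)$ without altering $m_{c,I}$.
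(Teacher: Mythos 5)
Your proposal follows the same route as the paper's proof: present $\Powf C$ as the canonical filtered colimit $\colim_{(f\colon P\to C)\in\C_\fp/C}\Pow P$, use $I\in\C_\fp$ to factor $c\circ s$ through one injection, and transport along the counit $i_C$ to get $i_C\circ c\circ s=\Pow f\circ u$. Where you go further is in the last step: the paper simply asserts that the factorization of $i_C\circ c\circ s$ through $\Pow f$ yields the claim, whereas you correctly observe that, under the correspondence $\Hom(I,\Pow C)\cong\Sub(I\times C)$, postcomposition with the \emph{covariant} $\Pow f$ is direct image, so what one gets a priori is that $m_{c,I}$ is the \emph{image} of some $R\monoto I\times P$ under $\id\times f$ -- and extracting an actual factorization $m_{c,I}=(\id\times f)\circ h$ from this is an internal choice problem that does not solve itself in a general topos. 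Identifying this descent issue, and resolving it by passing to the image factorization of $f$ so that $\id\times f'$ is monic, is a genuine improvement in rigour over the paper's own two-line conclusion.

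The one point where your argument falls short of the statement as given is the appeal to $\C_\fp=\C_\fg$ to conclude that the image $f(P)$ is again finitely \emph{presentable}. The lemma is stated for an arbitrary \lfplong topos, and there $\C_\fp$ and $\C_\fg$ need not coincide (e.g.\ in the presheaf topos of $M$-sets for a free monoid $M$ on two generators, there are cyclic, hence finitely generated, $M$-sets that are not finitely presentable). What your construction delivers unconditionally is a factorization of $m_{c,I}$ through $\id\times t'$ for the mono part $t'\colon J'\monoto C$ of $t$, with $J'$ only finitely \emph{generated} (by \autoref{lem:fp-fp-fin-colimits}); upgrading $J'$ to a finitely presentable object without the hypothesis $\C_\fp=\C_\fg$ requires a further idea that neither your proposal nor the paper's sketch supplies. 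Since all the toposes the paper actually instantiates the lemma in ($\Set$, $\Nom$, presheaf toposes with $\C_\fp=\C_\fg$) satisfy your extra assumption, the damage is limited, but you should either add the hypothesis explicitly or weaken the conclusion to $J\in\C_\fg$.
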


\begin{proofappendix}{lem:powf-coalg-topos}
 Recall that $\Powf C$ is the colimit of the canonical diagram $D_C\colon\C_\fp/C\to \C$ mapping $(t\colon J\to C)$ to $\Pow J$ (\autoref{lem:finitary-coreflection}). We denote the colimit injection for $t$ by $e_t\colon \Pow J \to \Powf C$.

Now let $s\colon I\to C$ where $I\in \C_\fp$. Since $(e_t)$ is a filtered colimit, the morphism $c\circ s$ factorizes through $e_t$ for some $t\colon J\to C$ ($J\in \C_\fp$), that is, there exists a morphism $I\to \Pow J$ making the left-hand part of the diagram below commute.
\[
\begin{tikzcd}
&& \Pow J \ar{dr}{\Pow t} \ar{d}{e_t} & \\
I \ar[bend left=20]{urr}{\exists} \ar{r}{s} & C \ar{r}{c} & \Powf C \ar{r}{i_C} & \Pow C
\end{tikzcd}
\]
Since the right-hand triangle also commutes, we see that the morphism $i_C\circ c \circ s$ factorizes through $\Pow t$. This morphism corresponds to the subobject $m_{c,I}\colon E_{c,I}\monoto S\times C$, so the statement of the lemma follows.
\end{proofappendix}

Applying the coalgebraic \Konig's lemma to the case of $\Powf$-coalgebras yields:

\begin{theorem}[\Konig's Lemma for Graphs in a Topos]\label{thm:koenigTopos}
  Let $\C$ be a \lfp topos. Every well-founded $\Powf$-coalgebra is the join of its fg-carried well-founded subcoalgebras.  
\end{theorem}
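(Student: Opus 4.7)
The plan is to apply the Coalgebraic \Konig's Lemma (\autoref{thm:koenig}) directly to the functor $H := \Powf$ on the lfp topos $\C$, which requires checking three hypotheses: $\Powf$ is finitary, $\C$ satisfies (\injmono), and $\Powf$ satisfies (\hpresint). The first is immediate because $\Powf$ is defined as the finitary coreflection of $\Pow$ (\autoref{lem:finitary-coreflection}), hence preserves filtered colimits by construction. The second holds because every elementary topos is extensive (see \autoref{rem:conditions}), so by part~(1) of that remark its coproduct injections are monic.

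The main obstacle is establishing that $\Powf$ preserves binary intersections. I would proceed in two stages. First, I would show that the covariant power object functor $\Pow$ preserves binary intersections in any topos: for monomorphisms $m\colon A\monoto X$ and $n\colon B\monoto X$ with intersection $A\cap B$, the subobject $\Pow m\colon \Pow A\monoto \Pow X$ classifies, in the internal language of $\C$, those subobjects of $X$ that factor through $A$, and analogously for $\Pow n$; their pullback in $\Pow X$ therefore classifies subobjects of $X$ factoring through both $A$ and $B$, i.e.\ through $A\cap B$, which is precisely $\Pow(A\cap B)$.

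Second, I would transfer this property from $\Pow$ to its finitary coreflection. Using the colimit description $\Powf X = \colim_{(f\colon P\to X)\in \C_\fp/X}\Pow P$ from \autoref{lem:finitary-coreflection}, together with the observation that any finitely presentable source factoring through both $A$ and $B$ in $X$ also factors through $A\cap B$ (by the universal property of pullbacks), and the fact that filtered colimits commute with finite limits in every lfp category, one concludes that the square $\Powf(A\cap B)\to \Powf A$, $\Powf B\to \Powf X$ is a pullback. With (\hpresint) in place, all hypotheses of \autoref{thm:koenig} are satisfied and the theorem follows directly. The technically delicate point is the interchange of the filtered colimit defining $\Powf$ with the pullback computing $A\cap B$, which is where local finite presentability of $\C$ is genuinely used.
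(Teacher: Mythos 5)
Your proposal is correct and takes essentially the same route as the paper: verify (\injmono) via extensivity of toposes, note that $\Powf$ is finitary by construction, and obtain (\hpresint) by first establishing that $\Pow$ preserves binary intersections and then transferring this to the finitary coreflection using the canonical filtered-colimit description of $\Powf$ and the commutation of filtered colimits with finite limits in \lfp categories. The only difference is cosmetic: where you sketch an internal-language argument that $\Pow$ preserves intersections, the paper simply cites the literature for this fact.
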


\begin{proofappendix}{thm:koenigTopos}
We show that the category $\C$ and the functor $\Powf$ satisfy the conditions of \autoref{thm:koenig}. Every topos is an extensive category, so (\injmono) holds (\autoref{rem:conditions}). The functor $\Powf$ is finitary by definition. Moreover, the power object functor $\Pow$ preserves binary intersections~\cite[Thm.~5.9]{osius74}, and this property carries over to its finitary coreflection $\Powf$ because the latter is objectwise constructed via canonical filtered colimits (\autoref{lem:finitary-coreflection}), and filtered colimits commute with finite limits (in particular intersections) in \lfp categories~\cite[Prop.~1.59]{adamek_rosicky_1994}.
\end{proofappendix}

\subsection{Binary Trees}
An important special case of \Konig's lemma, studied in constructive mathematics~\cite{di21}, is its restriction to binary trees -- the \emph{weak \Konig's lemma}. Here a {binary} tree is one where every node has at most two children, and children are ordered, that is, binary trees form coalgebras for $HX = X\times X + X + 1$ on $\Set$. For \lfp categories with nice coproducts (cf.~\autoref{rem:conditions}), the generalized \Konig's lemma also has a weak version:

\begin{theorem}[Coalgebraic Weak \Konig's Lemma]\label{cor:weak-koenigs-lemma}
Let $\C$ be a \lfp and extensive category. Then every well-founded coalgebra for the functor $HX = X\times X + X + 1$ on $\C$ is the join of its fg-carried well-founded subcoalgebras.  
\end{theorem}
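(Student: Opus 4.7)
My plan is to derive this weak König's lemma as a direct instance of the Coalgebraic König's Lemma (\autoref{thm:koenig}) applied to the polynomial functor $HX = X\times X + X + 1$. Since $\C$ is already assumed to be \lfp, I only need to verify the two additional hypotheses (\injmono) and (\hpresint), together with the global requirement that $H$ be finitary.

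The condition (\injmono) is immediate from \iref{rem:conditions}{}, since every extensive category has monic coproduct injections by the cited result of Carboni--Lack--Walters. Finitariness of $H$ is also straightforward: the identity functor is finitary, and in a \lfp category both finite products and finite coproducts of finitary functors are finitary (filtered colimits commute with finite limits in any \lfp category, and coproducts obviously preserve filtered colimits). Since $H$ is built from the identity by binary products, binary coproducts, and constants, it is finitary.

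The only point requiring a bit of thought is (\hpresint), that $H$ preserves binary intersections. I would argue this compositionally: (a) The identity preserves intersections trivially, so $X\mapsto X\times X$ preserves them because binary products preserve all limits in each argument, and pullbacks commute with products; explicitly, for $m\colon A\monoto Z$, $n\colon B\monoto Z$ with intersection $A\cap B$, we have $(A\cap B)\times(A\cap B) = (A\times A)\cap(B\times B)$ as subobjects of $Z\times Z$. (b) In any extensive category, coproducts of monos are monos and binary intersections are computed coordinate-wise along the coproduct decomposition; consequently, if functors $F,G$ preserve binary intersections, then so does $F+G$. (c) Constant functors preserve all limits. Combining (a)--(c) shows that $HX = X\times X + X + 1$ preserves binary intersections.

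With all three assumptions of \autoref{thm:koenig} verified, the conclusion follows at once: every well-founded $H$-coalgebra is the directed join of its fg-carried well-founded subcoalgebras. The only potentially subtle point is item (b) in the verification of (\hpresint); however this is a standard consequence of extensivity (coproduct decompositions are stable under pullback, so a monomorphism into $A+B$ decomposes as a coproduct of monomorphisms into $A$ and $B$, and binary intersections respect this decomposition), so no genuine obstacle arises.
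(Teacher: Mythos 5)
Your proposal is correct and follows essentially the same route as the paper: both reduce to \autoref{thm:koenig} by checking (\injmono) via extensivity, finitariness via commutation of filtered colimits with finite limits in \lfp categories and with coproducts, and (\hpresint) via commutation of pullbacks with products (always) and with coproducts (in extensive categories, which the paper handles by citing a lemma where you sketch the argument directly). No gaps.
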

This theorem extends to polynomial functors $HX = \coprod_{i\in I} X^{n_i}$ where $I$ is finite and $n_i\in \Nat$.

\begin{proofappendix}{cor:weak-koenigs-lemma}
Again, we only need to verify that $\C$ and $H$ satisfy the conditions of \autoref{thm:koenig}. By \autoref{rem:conditions}, every extensive category satisfies (\injmono). The functor $H$ is finitary because filtered colimits commute with colimits (in particular coproducts) in all categories, and with finite limits (in particular products) in \lfp categories~\cite[Prop.~1.59]{adamek_rosicky_1994}. The functor $H$ satisfies (\hpresint) because pullbacks commute with limits (in particular products) in all categories, and with coproducts in extensive categories~\cite[Lem.~1.4]{ls05}.
\end{proofappendix}

\subsection{Nominal Transition Systems}
\label{sec:nom}
Nominal sets~\cite{pitts13} are a categorical framework to reason about variable names and
binding (like in the $\lambda$-calculus) and have also served as a basis for the 
study of automata and transition systems over infinite alphabets, whose elements represent \emph{data values}~\cite{BojanczykEA14}.

Parametric in a fixed countably infinite set of \emph{atoms} $\At$ -- intuitively the variable
names or alphabet -- a \emph{nominal set} consists of a set $X$ with
a group action $\textnormal{`\ensuremath{\cdot}'}\colon \Perm(\At)\times X\to X$
for the group $\Perm(\At)$ of finite permutations $\pi\colon \At\to \At$, satisfying the following
\emph{finite support} property: for each $x\in X$,
there is a finite set $S\subseteq \At$ such that
$
  \text{if }\pi(a) = a \text{ for all }a\in S
  \text{ then }\pi\cdot x = x.
$

Intuitively, elements of nominal sets are syntactic objects holding atoms from $\At$,
which can be renamed using finite permutations. The
finite support property means that each element in a nominal set can
hold only finitely many atoms. For example, the set $\At^*$ of finite words over~$\At$ forms a nominal set with group action $\pi\cdot (a_1\cdots a_n) = \pi(a_1)\cdots \pi(a_n)$. Similarly, the terms of the $\lambda$-calculus (modulo $\alpha$-equivalence) with variables from $\At$ form a nominal set whose group action is capture-avoiding renaming of free variables, e.g.\ $(a\, b)\cdot (\lambda a.\,a\, b) = \lambda c.\, c\, a$.

Nominal sets form a full subcategory $\Nom$ of the category of $\Perm(\At)$-actions and equivariant maps (that is, maps preserving
the group action).
The category $\Nom$ is a \lfp boolean topos with the subobject
classifier $\Omega = 2$ being the two-element set (with trivial action).
Finitely presentable and finitely generated objects are precisely the
\emph{orbit-finite} nominal sets~\cite[Thm.~5.16]{pitts13}, that is, those nominal sets having
only finitely many elements up to renaming. For example, $\At^*$ is not orbit-finite, but the nominal set~$\At^n$ of words of fixed length $n$ is. We call a subset $M\subseteq X$ of a nominal set \emph{orbit-finite}
if $\set[\big]{\set{\pi\cdot x\mid \pi \in \Perm(\At)}\mid x\in M}$ is a finite set, and \emph{finitely supported} if it satisfies the finite support property w.r.t.\ the group action $\pi\cdot M = \set{\pi\cdot x\mid x\in M}$. The power object functor~$\Pow$ on $\Nom$ is given by $\Pow X = \set{M\seq X \mid \text{$M$ finitely supported}}$, and its finitary coreflection by $\Powf X =\set{ M\seq X \mid \text{$M$ finitely supported and orbit-finite}}$. 

Applying \autoref{thm:koenig} to the category of nominal sets gives the following general result:
\begin{theorem}[\Konig's Lemma for Coalgebras in $\Nom$]\label{thm:koenigNom}
  If $H$ is a finitary functor on $\Nom$ preserving binary intersections, then every state of a well-founded $H$-coalgebra lies in some orbit-finite
  subcoalgebra.
\end{theorem}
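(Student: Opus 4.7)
The plan is to deduce the theorem as a direct application of the Coalgebraic \Konig's Lemma (\autoref{thm:koenig}) to the base category $\C = \Nom$, once we verify that both extra side conditions (\injmono) and (\hpresint) are met and then translate the join decomposition into a pointwise statement about states.

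First, I would check that the global \autoref{asm} together with the two side conditions hold. The category $\Nom$ is \lfp as recalled above, and $H$ is finitary by hypothesis, so \autoref{asm} is in place. Condition (\hpresint) is precisely the preservation of binary intersections, which is part of our hypothesis on $H$. Finally, $\Nom$ is a topos and hence extensive, so coproduct injections are monic and (\injmono) holds by \autoref{rem:conditions}.

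Given a well-founded $H$-coalgebra $(C,c)$, \autoref{thm:koenig} then yields that $(C,c)$ is the directed join of its fg-carried well-founded subcoalgebras $m_i\colon (C_i,c_i)\monoto (C,c)$. Since in $\Nom$ the finitely generated objects coincide with the orbit-finite nominal sets, each $C_i$ is orbit-finite. It remains to turn this decomposition into the pointwise statement about states. Given a state $x\in C$, consider its orbit $\overline{x}=\{\pi\cdot x\mid \pi\in \Perm(\At)\}\seq C$, which is a one-orbit, hence orbit-finite, and hence finitely generated subobject of $C$. By \autoref{lem:lfp-directed-unions}, the join $\bigvee_i C_i$ is the directed colimit of the monomorphisms $m_i$. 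Since $\overline{x}$ is finitely generated, the inclusion $\overline{x}\hookrightarrow C$ must factor through some $m_i$, which places $x$ inside the orbit-finite subcoalgebra $(C_i,c_i)$.

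I do not foresee any real obstacle in this argument: every step is forced once the hypotheses of \autoref{thm:koenig} are checked, and the only minor subtlety is the routine observation that the orbit of a single state already forms a finitely generated subobject of $C$ through which its inclusion factors.
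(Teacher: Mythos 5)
Your proof is correct and follows essentially the same route as the paper: verify that $\Nom$ is a \lfplong topos (hence extensive, giving (\injmono)) and that the hypotheses supply finitarity and (\hpresint), then apply \autoref{thm:koenig}. The only difference is that you explicitly spell out the final translation from the directed-join decomposition to the pointwise statement via the orbit of a state, a step the paper leaves implicit; your handling of it (the orbit is a finitely generated subobject, so its inclusion factors through some $m_i$ by \autoref{lem:lfp-directed-unions}) is exactly right.
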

\begin{proofappendix}{thm:koenigNom}
  All conditions of \autoref{thm:koenig} are met because $\Nom$ is a locally finitely presentable topos, in particular extensive (cf.\ \autoref{rem:conditions})
\end{proofappendix}
Coalgebras for finitary functors on $\Nom$ admit possibly
infinite branching as long as it is finite up to renaming (cf.\ \cite[Sec.\ 6.2]{msw16}). For illustration, let us consider coalgebras for the functor $(\Pow_\omega)^\At$ on $\Nom$. (Here $(-)^\At$ denotes the exponential.) They admit a simple concrete description: A \emph{nominal labelled transition system} (\emph{NLTS}) is given by a nominal set $C$ of \emph{states} and a family of relations $\mathord{\xto{a}} \seq C\times C$  ($a\in \At$) such that $x\xto{a} y$ implies $\pi\cdot x \xto{\pi(a)} \pi\cdot y$ for all $\pi \in \Perm(\At)$. It is \emph{image-orbit-finite} if for all $x\in C$ and $a\in \At$ the set $\set{y\in C\mid x\xto{a} y}$ is orbit-finite.

An NLTS can be viewed as an ordinary LTS by forgetting the group action.

\begin{lemma}\label{lem:powfnom}
  Coalgebras for $\smash{(\Pow_\omega)^\At}$ are in bijective correspondence with image-orbit-finite NLTS. Moreover, a coalgebra is well-founded iff the
  corresponding LTS has no infinite paths.
\end{lemma}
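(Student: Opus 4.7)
The plan has two parts: establishing the bijective correspondence and characterizing well-foundedness.

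First, for the correspondence, I invoke the cartesian closure of $\Nom$. An equivariant map $c\colon C\to (\Powf)^\At C$ is in bijection with an equivariant map $\hat c\colon C\times \At\to \Powf C$, and setting $x\xto{a} y$ iff $y\in \hat c(x,a)$ yields the transition relations of an NLTS. Equivariance of $\hat c$ translates directly into the required compatibility $x\xto{a} y \Longrightarrow \pi\cdot x\xto{\pi(a)}\pi\cdot y$, and image-orbit-finiteness is automatic because the codomain $\Powf C$ consists of finitely supported orbit-finite subsets. The inverse direction is routine: for an image-orbit-finite NLTS, each successor set $\hat c(x,a)$ is finitely supported (by $\supp(x)\cup\{a\}$) and orbit-finite by hypothesis, hence belongs to $\Powf C$.

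Second, for the well-foundedness characterization I adapt the reasoning of \autoref{ex:wfCoalg}. Subobjects in $\Nom$ are precisely equivariant subsets, and limits in $\Nom$ are created by the forgetful functor to $\Set$, so unfolding commutativity and the pullback property of \eqref{eq:cartesian} after currying shows that an equivariant subset $m\colon S\monoto C$ carries a cartesian subcoalgebra iff for every $x\in C$:
\begin{equation*}
x\in S \iff \bigl(\forall a\in \At,\, y\in C\colon x\xto{a} y \implies y\in S\bigr). \qquad (\star)
\end{equation*}
If the NLTS contains an infinite path, then $S:=\{x\in C\mid x \text{ lies on no infinite path}\}$ is equivariant (since the group action permutes paths), satisfies $(\star)$, and is a proper subset, hence a proper cartesian subcoalgebra, so $(C,c)$ is not well-founded. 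Conversely, given a proper cartesian subcoalgebra $m\colon S\monoto C$ and any $x_0\in C\smin S$, condition $(\star)$ produces some $a_1$ and $x_1\in C\smin S$ with $x_0\xto{a_1} x_1$; iterating yields an infinite path.

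I expect no essential obstacle: both steps reduce cleanly to elementwise reasoning in $\Set$ because $\Nom$ is a Grothendieck topos whose forgetful functor to $\Set$ creates limits, and because $\Powf$ on $\Nom$ is the evident nominal restriction of the set-theoretic power set construction. The only bit requiring a little care is the derivation of $(\star)$ from the topos-level definition of cartesianness, but this is exactly analogous to the derivation of \eqref{eq:cartesian-pow} in the classical graph case.
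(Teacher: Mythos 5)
Your proposal is correct and follows essentially the same route as the paper's proof: the correspondence is obtained by uncurrying through the cartesian closed structure of $\Nom$ (the paper phrases this via $(\Pow C)^\At\cong 2^{\At\times C}$ and equivariant subsets of $C\times\At\times C$, which is the same manipulation), and the well-foundedness claim is reduced to the argument of \autoref{ex:wfCoalg} by observing that the set of states lying on no infinite path is an equivariant subset, which both you and the paper justify by noting that permutations map infinite paths to infinite paths.
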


\begin{proofappendix}{lem:powfnom}
Since $\Pow(X) = 2^X$, we have for all nominal sets $C$ and $X$  the bijective correspondence
  \[
    (C\to \Pow (X))
    \Leftrightarrow
    (C\to 2^X)
    \Leftrightarrow
    (C\times X\to 2) \Leftrightarrow \text{equivariant subsets of $C\times X$}.
  \]
Thus a coalgebra $c\colon C\to (\Pow C)^\At \cong (2^C)^\At \cong 2^{\At\times C}$ corresponds to an equivariant subset of $C\times \At\times C$. This means precisely that the corresponding family of relations $\xto{a}\,\seq C\times C$ ($a\in \At$) satifies the defining property of an NLTS. Moreover, $c$ factorizes through $(\Powf C)^\At$ iff for each $x\in C$ and $a\in \At$ the set $\set{y\in C\mid x\xto{a} y}$ is orbit-finite (by definition of $\Powf$). This means precisely that the corresponding NLTS is image-orbit-finite.

Regarding the well-foundedness statement, we can reuse the argument of \autoref{ex:wfCoalg}. We only need to note that given a coalgebra $c\colon C\to (\Powf C)^\At$, the set $S\seq C$ of all states that do not lie on any infinite path forms an equivariant subset of $C$. Since complements of equivariant subsets are equivariant, it suffices to show that $C\setminus S$ is equivariant. But this is easy to see: if $x\in C\setminus S$, there exists an infinite path $x\xto{a_1} x_1 \xto{a_2} x_2 \xto{a_3}\cdots$ starting in $x$. By equivariance of transitions, we get an infinite path $\pi\cdot x\xto{\pi(a_1)} \pi\cdot x_1 \xto{\pi(a_2)} \pi\cdot x_2 \xto{\pi(a_3)}\cdots$ for all $\pi\in\Perm(\At)$, and so $\pi\cdot x\in C\setminus S$.
\end{proofappendix}
Note that an image-orbit-finite NLTS is generally not finitely branching as an LTS, both due to the infinity of the label set $\At$ and due to the fact that each state $x$ has a possibly infinite set of $a$-successors for a given label $a$. Thus, the classical \Konig's lemma does not apply here. However, the orbit-finite branching and equivariance of transitions is enough to get following result, which is an instance of \autoref{thm:koenigNom}:

\begin{theorem}[\Konig's Lemma for Nominal LTS]\label{thm:koenigNomGraph}
  Every state of a well-founded image-orbit-finite NLTS lies in some orbit-finite subcoalgebra.
\end{theorem}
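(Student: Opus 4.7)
The plan is to deduce this theorem as a direct instantiation of \autoref{thm:koenigNom} for the particular functor $H = (\Powf)^\At$ on $\Nom$, using the bijective correspondence of \autoref{lem:powfnom} to shuttle between the two formulations. Concretely, given a well-founded image-orbit-finite NLTS $(C, (\xto{a})_{a\in\At})$, \autoref{lem:powfnom} provides a well-founded $H$-coalgebra $c\colon C\to (\Powf C)^\At$ with the same state space, and subcoalgebras of $(C,c)$ correspond exactly to equivariant subsets of $C$ closed under successors, i.e.\ subsystems of the NLTS. Hence once \autoref{thm:koenigNom} is applied, the conclusion transports back verbatim.

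To invoke \autoref{thm:koenigNom} I would verify that $H = (\Powf)^\At$ is finitary and preserves binary intersections. Finitariness of $\Powf$ holds by construction (\autoref{lem:finitary-coreflection}). The exponential endofunctor $(-)^\At$ is finitary because $\At$ is a single-orbit, hence orbit-finite, nominal set, and in the cartesian closed \lfp category $\Nom$ we have $\Nom_\fp = \Nom_\fg$ consisting exactly of orbit-finite nominal sets~\cite[Thm.~5.16]{pitts13}; since $\At \in \Nom_\fp$, the functor $(-)^\At$ preserves filtered colimits. Composing two finitary functors yields a finitary functor, so $H$ is finitary. For binary intersections, $\Powf$ preserves them (as already argued in the proof of \autoref{thm:koenigTopos}: the power object functor $\Pow$ preserves them, and filtered colimits commute with finite limits in \lfp categories), while $(-)^\At$ is a right adjoint and thus preserves all limits. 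The composite $H$ therefore preserves binary intersections.

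With both conditions in hand, \autoref{thm:koenigNom} yields that every state of the well-founded $H$-coalgebra $(C,c)$ lies in some orbit-finite subcoalgebra; translating back through \autoref{lem:powfnom} gives the claim. The only point requiring attention is the verification that $(-)^\At$ is finitary on $\Nom$; the other hypotheses are essentially immediate once one observes that $\At$ itself is finitely presentable in $\Nom$.
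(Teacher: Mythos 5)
Your overall route coincides with the paper's: reduce to \autoref{thm:koenigNom} by checking that $(\Powf)^\At$ is finitary and preserves binary intersections, and translate between NLTS and $(\Powf)^\At$-coalgebras via \autoref{lem:powfnom}. The intersection argument and the treatment of $\Powf$ are fine. The genuine gap is in your justification that $(-)^\At$ is finitary. The principle you invoke --- that in a cartesian closed \lfp category the functor $(-)^A$ is finitary whenever $A$ is finitely presentable --- is not valid in general. Since $\Hom(X,(-)^A)\cong\Hom(X\times A,-)$, finitariness of $(-)^A$ is equivalent (by \autoref{lem:filtered-colimits}) to $X\times A$ being finitely presentable for every finitely presentable $X$, and finitely presentable objects of an \lfp category need not be closed under binary products. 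For a concrete failure, take the presheaf topos of right $M$-sets for $M$ the free monoid on two generators: the representable $M$ is finitely presentable, but $M\times M$ with the diagonal action has infinitely many orbits and is not even finitely generated, so $(-)^M$ is not finitary even though $M$ is finitely presentable. Hence ``$\At\in\Nom_\fp$'' alone does not yield the conclusion.

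What makes the statement true in $\Nom$ is the specific fact that orbit-finite nominal sets are closed under finite products, so $X\times\At$ is orbit-finite (hence finitely presentable) whenever $X$ is. This is precisely the ingredient the paper uses: its proof verifies finitariness of $(-)^\At$ via the criterion of \autoref{lem:finitary-criterion}, showing that every orbit-finite equivariant subset $S\seq X^\At$ is contained in $T^\At$ for some orbit-finite equivariant $T\seq X$, by observing that the image of the corresponding equivariant map $S\times\At\to X$ is orbit-finite because $S\times\At$ is orbit-finite and equivariant maps send orbits to orbits. Your argument is repairable by citing closure of orbit-finiteness under finite products (Pitts) and deducing that $X\times\At$ is finitely presentable for finitely presentable $X$; but as written, the finitariness step rests on a false general claim rather than on this $\Nom$-specific fact.
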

\begin{proofappendix}{thm:koenigNomGraph}
Since the functor $(\Powf)^\At$ is the composite of $\Powf$ and $(-)^\At$, is suffices to show that these two functors are finitary and preserve binary intersections (as preservation properties are maintained by composition of functors).

The functor $\Powf$ is finitary by definition and preserves binary intersections (see the proof of \autoref{thm:koenigTopos}). To see that $(-)^\At$ is finitary, we use the criterion of \autoref{lem:finitary-criterion}. We need to show that every orbit-finite equivariant subset $S\seq X^\At$ lies in $T^\At$ for some orbit-finite equivariant subset $T\seq X$. This is equivalent to saying that the corresponding equivariant map $f\colon S\times \At \to X$ factorizes through some orbit-finite equivariant subset $T\subto X$, i.e.\ its image its orbit finite. But this is clear because $S\times \At$ is orbit-finite (using that orbit-finiteness is preserved by finite products~\cite{pitts13}) and orbits are mapped to orbits by equivariant maps. Finally, the functor $(-)^\At$ preserves intersections because it is a right adjoint, thus preserves all limits.   
\end{proofappendix}

\subsection{Convex Transition Systems}
\label{sec:convex}
The category of {convex sets}~\cite{stone-49} provides a natural setting for studying probabilistic automata and transition systems, in particular systems that combine probabilistic choice with non-determinism~\cite{bss17,mv20,msv21,bsv22}. Sets of probability distributions are modelled as objects in the category, while non-determinism is captured by a convex version of the power set functor. 

A \emph{convex set} is a set \(X\) equipped with a family of binary operations \( +_{r}  \colon X \times X \rightarrow X\) (\(r \in [0, 1]\)) subject to the following equations, where \(s' = r + s -rs \ne 0\) and \(r' = \frac{r}{s'}\): 
  \begin{equation}\label{eq:convex-set-ax} x +_{r} x = x,
    \quad
    x +_{0} y = y,
    \quad
    x +_{r} y = y +_{1-r} x,
    \quad
    x +_{r} (y +_{s} z) = (x +_{r'} y ) +_{s'} z.
  \end{equation}
Convex sets are an algebraic model of structures that admit convex combinations of elements. The prototypical example of convex sets are convex subsets $X\subseteq \mathbb{R}^n$ with the operations $\vec x +_r \vec y := r\cdot \vec x
+ (1-r)\cdot \vec y$. Another example is the set $\Dist X$ of finitely supported probability distributions on a set $X$ (i.e.\ maps $\varphi\colon X\to [0,1]$ such $\varphi(x)=0$ for all but finitely many $x\in X$ and $\sum_{x\in X} \varphi(x) = 1$) with convex structure given by $\varphi+_r \psi = \big(x \mapsto r \cdot \varphi(x) + (1-r) \cdot \psi(x)\big)$.

 A map \(f \colon X \rightarrow Y\) between convex sets  %
  is \emph{affine} if \(f(x +_{r} x') = f(x) +_{r} f(x')\) for \(x, x' \in X\) and \(r \in [0, 1]\). We let $\Conv$ denote the category of convex sets and affine maps. It is \lfp (as is every algebraic category given by a finitary equational theory~\cite{adamek_rosicky_1994}), and finitely presentable convex sets coincide with finitely generated ones~\cite[Cor.~5.5]{sw15}. Note that in algebraic categories, finitely generated objects are the algebras with a finite set of generators~\cite[Ex.~1.68]{adamek_rosicky_1994}. Thus a convex set $X$ is finitely generated iff there exists a finite subset $X_0\seq X$ such that every $x\in X$ is an (iterated) convex combination of elements of $X_0$. 

The convex analogue of the finite power set functor $\Pow_\omega$ is the \emph{convex power set functor}~$\CPowf$ on $\Conv$. It maps a convex set $X$ to the convex set $\CPowf X$ of all finitely generated (not finite!) convex subsets of~$X$. The operation $+_r$ on $\CPowf X$ is given by $S+_r T = \set{x+_r y \mid x\in S,\, y\in T}$ for $r\in (0,1)$, and determined by the second and third axiom of \eqref{eq:convex-set-ax} for $r\in \{0,1\}$ (e.g.\ $\emptyset +_0 S = S = S+_1 \emptyset$). The functor~$\CPowf$ is the composite of the \emph{non-empty} convex power set functor and the \emph{black-hole extension} functor $(-)+1$ from~\cite[Sec.~5]{bss17}.

A coalgebra $c\colon C\to \CPowf C$ is an \emph{fg-branching convex graph}, that is, a graph where (i) the set $C$ of nodes carries the structure of a convex set, (ii) for every node $x$ the set $c(x)$ of successors forms a finitely generated convex subset of $C$, and (iii) $c(x+_r y)=c(x)+_r c(y)$ for all $x,y\in C$ and $r\in [0,1]$. An fg-branching convex graph can be viewed as an ordinary graph by forgetting the convex structure on $C$. Well-foundedness is then the expected concept:

\begin{lemma}\label{lem:convex-well-founded}
A coalgebra $c\colon C\to \CPowf C$ is well-founded iff it has no infinite paths.
\end{lemma}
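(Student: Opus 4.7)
The plan is to follow the proof given for $\Pow$-coalgebras in $\Set$ (\autoref{ex:wfCoalg}), adapted to the convex setting. The first step is to verify that cartesian subcoalgebras of $(C,c)$ admit the same concrete description as in the $\Set$-case: monomorphisms in $\Conv$ are precisely the injective affine maps, and pullbacks in $\Conv$ are computed on underlying sets with inherited convex structure, so a cartesian subcoalgebra is represented by a convex subset $S \subseteq C$ satisfying the biconditional $x \in S \iff c(x) \subseteq S$.

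For the forward direction, I would define $S := \{x \in C : x \text{ lies on no infinite path}\}$ and show that $S$ is a cartesian subcoalgebra; well-foundedness then forces $S = C$, proving the claim. The crucial new ingredient compared to $\Set$ is convexity of $S$. Since $c$ is affine and the operation $+_r$ on $\CPowf C$ satisfies $U +_r V = \{u +_r v : u \in U,\, v \in V\}$ for $r \in (0,1)$, we have $c(x +_r y) = \{u +_r v : u \in c(x),\, v \in c(y)\}$ (the cases $r \in \{0,1\}$ being immediate from the convex set axioms \eqref{eq:convex-set-ax}). Hence, if $x, y \in S$ but $x +_r y$ admitted an infinite path $x +_r y \to z_1 \to z_2 \to \cdots$, one could inductively decompose $z_k = u_k +_r v_k$ with $u_{k+1} \in c(u_k)$ and $v_{k+1} \in c(v_k)$, producing infinite paths from both $x$ and $y$, a contradiction. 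Verifying the cartesian biconditional for this $S$ then proceeds exactly as in \autoref{ex:wfCoalg}.

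The backward direction transfers verbatim from the $\Pow$-case: if $(C,c)$ is not well-founded, a proper cartesian subcoalgebra $S \subsetneq C$ together with any $x_0 \in C \setminus S$ yields, by the biconditional, successors $x_{k+1} \in c(x_k) \setminus S$, which assemble into an infinite path.

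I expect the main obstacle to be the convexity of $S$, since the decomposition $z_k = u_k +_r v_k$ of each successor is not canonical; constructing coherent sequences $(u_k)$ and $(v_k)$ requires successive consistent choices, justified by dependent choice. This is the only point in the argument where genuinely $\Conv$-specific reasoning enters.
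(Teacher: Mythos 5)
Your proposal is correct and follows essentially the same route as the paper: both reduce to the argument of \autoref{ex:wfCoalg} and identify the convexity of the set $S$ of nodes on no infinite path as the only genuinely new point, established by the same inductive decomposition $z_k = u_k +_r v_k$ using $c(x+_r y)=c(x)+_r c(y)$ to extract infinite paths from $x$ and $y$. Your additional remarks on the concrete description of cartesian subcoalgebras in $\Conv$ and the (implicit) use of dependent choice are accurate but do not change the substance of the argument.
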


\begin{proofappendix}{lem:convex-well-founded}
The argument is analogous to the case of graphs (\autoref{ex:wfCoalg}). The only thing we need to additionally observe is that given a coalgebra $c\colon C\to \CPowf C$, the set $S\seq C$ of all nodes that lie on no infinite path forms a convex subset of $C$. Thus let $x,y\in S$ and $z=x+_r y$ for some $r\in [0,1]$. We need to prove $z\in S$. If $r\in \{0,1\}$ then $z\in \{x,y\}$ and so $z\in S$. Now let $r\in (0,1)$, and suppose towards a contradiction that $z\not\in S$, that is, there is an infinite path $z\to z_1 \to z_2\to\ldots$ starting with $z$. Then $z_1\in c(z)=c(x+_r y) = c(x)+_r c(y)$. This means that there exist $x\to x_1$ and $y\to y_1$ such $z_1=x_1+_r y_1$. Repeating this argument with $x_1,y_1,z_1$ instead of $x,y,z$ yields $x_1\to x_2$ and $y_1\to y_2$ such that $z_2=x_2+_r y_2$ etc. In this way, we obtain an infinite path $x\to x_1\to x_2\to \cdots$, in  contradiction to $x\not\in S$.    
\end{proofappendix}

When regarded as an ordinary graph, an fg-branching convex graph is generally not finitely branching. However, the fg-branching property is enough to get a \Konig's lemma. Indeed, applying \autoref{thm:koenig} to the category $\Conv$ and the functor $\CPowf$ yields:

\begin{theorem}[\Konig's Lemma for Convex Graphs]\label{thm:koenig-convex}
Every node of a well-founded fg-branching convex graph lies in some fg-carried $\CPowf$-subcoalgebra.
\end{theorem}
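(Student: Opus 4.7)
The plan is to derive this theorem directly from the Coalgebraic \Konig's Lemma (\autoref{thm:koenig}) instantiated to $\C = \Conv$ and $H = \CPowf$. I need to verify the four hypotheses: that $\Conv$ is \lfp, that $\Conv$ satisfies $(\injmono)$, and that $\CPowf$ is finitary and preserves binary intersections.

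The first two conditions are already recorded in the preliminaries: $\Conv$ is \lfp with $\C_\fp = \C_\fg$ (\autoref{fig:categories}), and $\Conv$ is explicitly listed in \autoref{rem:conditions} as a non-extensive category that nevertheless satisfies $(\injmono)$. For finitarity of $\CPowf$, I would invoke the criterion of \autoref{lem:finitary-criterion}, which is available because $\C_\fp = \C_\fg$ in $\Conv$ and $\CPowf$ preserves monomorphisms (clear from its definition). Concretely, given a finitely generated convex subobject $m\colon S \monoto \CPowf X$, each element of $S$ is a finitely generated convex subset of $X$; expressing $S$ via finitely many convex generators, each with finitely many generators in $X$, one obtains a finite subset $X_0 \seq X$ whose generated convex subobject $M \monoto X$ is finitely generated and satisfies $S \seq \CPowf M$, providing the required factorization.

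The remaining and most delicate step is showing that $\CPowf$ preserves binary intersections, which I expect to be the main obstacle. Given monomorphisms $A, B \monoto X$ with intersection $A \cap B \monoto X$, I need to establish $\CPowf(A \cap B) = \CPowf A \cap \CPowf B$ as subobjects of $\CPowf X$. The inclusion $\CPowf(A \cap B) \seq \CPowf A \cap \CPowf B$ is automatic from mono-preservation. For the converse, a finitely generated convex subset $T \seq X$ that lies inside both $A$ and $B$ is necessarily contained in the convex subobject $A \cap B$, and since finite generation is a purely algebraic condition that transfers along the monomorphism $A \cap B \monoto X$, the set $T$ remains a finitely generated convex subset of $A \cap B$.

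With all four hypotheses in place, \autoref{thm:koenig} yields that every well-founded $\CPowf$-coalgebra is the directed join of its fg-carried well-founded subcoalgebras. By \autoref{lem:lfp-directed-unions} and \autoref{lem:colimits-coalg}, this join is formed as a directed colimit on underlying convex sets, so every node of the coalgebra lies in some fg-carried well-founded subcoalgebra, which is in particular an fg-carried $\CPowf$-subcoalgebra as required.
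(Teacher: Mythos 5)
Your proposal is correct and follows essentially the same route as the paper: verify the hypotheses of \autoref{thm:koenig} for $\Conv$ and $\CPowf$, using \autoref{lem:finitary-criterion} with the same generator-of-generators argument for finitarity, and noting that intersections of finitely generated convex subsets behave as in $\Set$. The only cosmetic difference is that the paper justifies $(\injmono)$ by the concrete description of coproducts in $\Conv$ from the literature rather than by pointing back to \autoref{rem:conditions}, and it dispatches preservation of binary intersections with less detail than you give.
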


\begin{proofappendix}{thm:koenig-convex}
As usual, we only need to show that the conditions of \autoref{thm:koenig} are satisfied. The condition (\injmono) follows from the description of coproducts in $\Conv$ given in~\cite[Prop.~5]{jww15}. Clearly, the functor $\CPowf$ preserves binary intersections. (Note that since $\Conv$ is an algebraic category, limits in $\Conv$ are formed like in $\Set$.) Finally, to check that $\CPowf$ is finitary, we use the criterion of \autoref{lem:finitary-criterion}. Thus, we need to show that every finitely generated convex set $\mathcal{S}\seq \CPowf C$ is a subset of $\CPowf M$ for some finitely generated convex set $M\seq C$. But this is easy: Let $\S_0$ be a finite set of generators of $\S$. Then each $S\in \S_0$ is a finitely generated convex subset of $C$ and thus has a finite set of generators $S_0\seq S$. Let $M\seq X$ be the convex subset generated by the finite set $\bigcup_{S\in \S_0} S_0$. Then $S\seq M$ for every $S\in \S_0$, and so $S\seq M$ for every $S\in \S$ because $\S$ is generated by $\S_0$.
\end{proofappendix}

\section{Initial Algebras From Finite Well-Founded and Recursive Coalgebras}\label{sec:initial-algebras}
In this section we establish our second fundamental result about well-founded coalgebras: the initial algebra for $H$ can be constructed as the colimit of all fp-carried well-founded coalgebras. Note that, in contrast, \autoref{thm:koenig} involves \emph{fg-carried} well-founded coalgebras. The proof is another intricate application of coproduct extension (\autoref{cons:cp}).

\begin{construction}\label{not:T}
Let $(T,t)$ be the colimit of all coalgebras in $\fwf$, i.e.\ the colimit of the embedding
$D\colon \Coalg_{\fp,\wf}(H)\subto \Coalg(H)$.
We denote the colimit injections by
\[ c^\#\colon (C,c)\to (T,t) \qquad \text{where} \qquad (C,c)\in \fwf. \]
This colimit is formed at the level of the underlying category $\C$ (\autoref{lem:colimits-coalg}). Moreover, if $\C$  has a simple initial object, e.g.~if (\injmono) holds (\autoref{rem:conditions}), the colimit is filtered because the category $\fwf$ is finitely cocomplete. Indeed, both $\C_\fp\hookrightarrow \C$ and $\Coalg_\wf(H)\hookrightarrow \Coalg(H)$ are closed under finite colimits (\autoref{lem:fp-fg-coalg}, \autoref{prop:wf}).
\end{construction}
We then have the following characterization of the initial algebra for $H$:

\begin{theorem}[First Initial Algebra Theorem,\,\smash{\injmono},\,\hpresint]\label{thm:wf-initial-algebra}
The initial algebra for $H$ is the colimit of all fp-carried well-founded coalgebras, that is, the coalgebra structure $t\colon T\to HT$ is an isomorphism and the algebra $(T,t^{-1})$ is initial for $H$.
\end{theorem}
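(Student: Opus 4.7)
The plan is to first show that $t\colon T\to HT$ is an isomorphism, and then to derive initiality of $(T,t^{-1})$ as a routine consequence. For the initiality step, each coalgebra in $\fwf$ is recursive (\iref{prop:wf}{wf-to-rec}), so for any algebra $(A,a)$ the unique coalgebra-to-algebra morphisms $h_C\colon (C,c)\to(A,a)$ form a cocone over the diagram $D$ defining $T$ (apply uniqueness to each morphism of $\fwf$), inducing a unique $h\colon T\to A$; using joint epi-ness of the colimit injections $c^\#$, one verifies that $h$ is the unique algebra morphism $(T,t^{-1})\to(A,a)$.

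To prove the invertibility of $t$, I will establish the following \emph{lifting lemma}: for every $p\colon X\to HT$ with $X\in\C_\fp$ there exists a unique $\hat p\colon X\to T$ satisfying $t\circ\hat p=p$. For existence, observe that $T=\colim D$ is filtered (under assumption (\injmono), $\fwf$ is finitely cocomplete by \autoref{lem:fp-fp-fin-colimits} and \iref{prop:wf}{closed-colim-quotients}), so by finitariness of $H$ one has $HT=\colim HC$ along $\fwf$ with injections $Hc^\#$; hence $p$ factorizes as $p=Hc^\#\circ q$ for some $(C,c)\in\fwf$ and $q\colon X\to HC$. The coproduct extension $(C+X,c_q)$ lies in $\fwf$ by \autoref{prop:wf-coalgebra-coproduct} together with closure of $\C_\fp$ under finite coproducts. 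Denoting its colimit injection by $\alpha\colon C+X\to T$ and setting $\hat p:=\alpha\circ\inr$, a short computation using \autoref{lem:coproduct-inl-coalg-mor} (which gives $\alpha\circ\inl=c^\#$) yields $t\circ\hat p=H\alpha\circ c_q\circ\inr=H(\alpha\circ\inl)\circ q=Hc^\#\circ q=p$.

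For uniqueness, given $\hat p_1,\hat p_2$ with $t\circ\hat p_1=t\circ\hat p_2$, since $X\in\C_\fp$ and $T$ is a filtered colimit both $\hat p_i$ factor through a common $(D,d)\in\fwf$ as $\hat p_i=d^\#\circ r_i$. Applying essentially unique factorization once more to $Hd^\#\circ d\circ r_1=t\circ\hat p_1=t\circ\hat p_2=Hd^\#\circ d\circ r_2$ in the filtered colimit $HT=\colim HC$ allows us, after enlarging $D$ along a morphism of $\fwf$ if necessary, to assume $d\circ r_1=d\circ r_2=:\bar p$. The key step is now to form $(D+X,d_{\bar p})\in\fwf$ and verify that $[\id_D,r_i]\colon(D+X,d_{\bar p})\to(D,d)$ is a coalgebra morphism for both $i=1,2$ (a routine check: both sides of the coalgebra morphism equation reduce to $[d,\bar p]$). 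Since both morphisms live in $\fwf$, the cocone property forces $d^\#\circ[\id_D,r_i]=\beta$, where $\beta\colon D+X\to T$ is the colimit injection of $(D+X,d_{\bar p})$; precomposing with $\inr$ gives $\hat p_1=d^\#\circ r_1=\beta\circ\inr=d^\#\circ r_2=\hat p_2$. I expect this uniqueness argument to be the main obstacle, as it relies on the non-obvious idea of first equalizing $d\circ r_1=d\circ r_2$ inside $HD$ and then exploiting the two coalgebra morphisms $[\id_D,r_i]$ out of a single coproduct extension.

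With the lifting lemma in hand, the inverse $s\colon HT\to T$ is assembled from the $\hat p$'s: uniqueness makes the family a cocone over the canonical diagram $\C_\fp/HT\to\C$, which has colimit $HT$ since $\C$ is lfp (\autoref{lem:filtered-colimits}), and the induced $s$ satisfies $t\circ s=\id_{HT}$ by construction. To verify $s\circ t=\id_T$, I will test on each colimit injection $c^\#\colon C\to T$ with $(C,c)\in\fwf$: one has $s\circ t\circ c^\#=s\circ Hc^\#\circ c=\widehat{Hc^\#\circ c}$, but $c^\#$ itself satisfies $t\circ c^\#=Hc^\#\circ c$ (as $c^\#$ is a coalgebra morphism), so uniqueness of liftings forces $\widehat{Hc^\#\circ c}=c^\#$. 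Joint epi-ness of the colimit injections $(c^\#)_{(C,c)\in\fwf}$ then yields $s\circ t=\id_T$, concluding the proof of invertibility and hence of the theorem.
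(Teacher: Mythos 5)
Your proposal is correct and follows essentially the same route as the paper: the heart of both arguments is that coproduct extension preserves well-foundedness (\autoref{prop:wf-coalgebra-coproduct}), used once to factor a finitely presentable point of $HT$ through the cocone $(Hc^\#\circ c)$ and once --- via the two coalgebra morphisms $[\id,r_1],[\id,r_2]$ out of a single coproduct extension --- to equalize two such factorizations. The only differences are organizational: the paper packages your lifting lemma as the statement that $(Hc^\#\circ c)$ is itself a colimit cocone (so $t$ is invertible by uniqueness of colimits via \autoref{lem:filtered-colimits}, rather than by assembling $s$ by hand over $\C_\fp/HT$), and it derives initiality by observing that $(T,t)$ is well-founded, hence recursive (\autoref{prop:wf}), instead of constructing the cocone of coalgebra-to-algebra morphisms directly.
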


The crucial part of this theorem is the `Lambek lemma' for the coalgebra $(T,t)$:
\begin{lemma}[\smash{\injmono}, \hpresint]\label{lem:t-iso-wf}
The coalgebra structure $t\colon T\to HT$ is an isomorphism.
\end{lemma}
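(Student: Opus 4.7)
My plan is to construct an inverse $t^{-1}\colon HT\to T$ directly via coproduct extension (\autoref{cons:cp}), and then verify both composites are identities. The crucial observation is that $(T,t)$ is the \emph{filtered} colimit over $\fwf$ (\autoref{not:T}), so by finitariness of $H$ the object $HT$ is also a filtered colimit with injections $Hc^\#\colon HC\to HT$. By \autoref{lem:filtered-colimits}, defining a morphism $HT\to T$ therefore amounts to giving, for every $X\in \C_\fp$ and every $p\colon X\to HT$, an essentially unique lift $\tilde p\colon X\to T$ with $t\circ\tilde p=p$.

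For a given such $p$, I would factor $p=Hc^\#\circ p'$ through some $(C,c)\in\fwf$ and $p'\colon X\to HC$ (possible since $X$ is fp and $HT$ is filtered), and then form the coproduct extension $(C+X,c_{p'})$. This is well-founded by \autoref{prop:wf-coalgebra-coproduct}, and $C+X\in\C_\fp$ by \autoref{lem:fp-fp-fin-colimits}, so $(C+X,c_{p'})\in\fwf$. Setting $\tilde p:=(C+X)^\#\circ\inr$, the identity $t\circ\tilde p=p$ follows from $(C+X)^\#$ being a coalgebra morphism with $(C+X)^\#\circ\inl=c^\#$ (\autoref{lem:coproduct-inl-coalg-mor}):
\[ t\circ\tilde p=H(C+X)^\#\circ c_{p'}\circ\inr=H(C+X)^\#\circ H\inl\circ p'=Hc^\#\circ p'=p. \]

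The main obstacle will be to verify that $\tilde p$ is independent of the factorization, which gives the essential uniqueness needed to package the lifts into $t^{-1}$. Given two factorizations of $p$, I would use filteredness of $\fwf$ and $X\in\C_\fp$ to land the two choices in a common coalgebra in $\fwf$, and then exploit naturality of coproduct extension: every coalgebra morphism $h\colon(C,c)\to(D,d)$ in $\fwf$ with $Hh\circ p'=q'$ lifts to a coalgebra morphism $h+\id_X\colon(C+X,c_{p'})\to(D+X,d_{q'})$ in $\fwf$, and the colimit injections of $T$ identify $(C+X)^\#$ with $(D+X)^\#\circ(h+\id_X)$. This step reduces to routine filtered-colimit bookkeeping.

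To conclude $t^{-1}\circ t=\id_T$, it suffices by the colimit property of $T$ to verify the equality after pre-composition with every $c^\#$. Taking $X:=C$ and $p:=t\circ c^\#=Hc^\#\circ c$, the relevant coproduct extension is $(C+C,c_c)$, and the codiagonal $[\id,\id]\colon C+C\to C$ is readily seen to be a morphism $(C+C,c_c)\to(C,c)$ in $\fwf$. Hence $(C+C)^\#=c^\#\circ[\id,\id]$ by the colimit, which gives $t^{-1}\circ t\circ c^\#=\tilde p=(C+C)^\#\circ\inr=c^\#$, as required.
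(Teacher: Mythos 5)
Your proposal is correct and rests on exactly the same ingredients as the paper's proof: factoring a morphism $X\to HT$ (for $X\in\C_\fp$) through some $Hc^\#$ via finitarity of $H$, forming the coproduct extension $(C+X,c_p)$, and invoking \autoref{prop:wf-coalgebra-coproduct} together with \autoref{lem:fp-fp-fin-colimits} to stay inside $\fwf$. The only difference is packaging: the paper uses \autoref{lem:filtered-colimits} to show that $(Hc^\#\circ c)$ is itself a colimit cocone of $UD$ (so $t$ is an isomorphism by uniqueness of colimits), whereas you assemble the inverse explicitly from the canonical presentation of $HT$; your well-definedness check via $h+\id_X$ plays precisely the role of the paper's essential-uniqueness step, which is carried out there by merging $[\id,f]$ and $[\id,f']$ in the filtered category $\fwf$.
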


\begin{proof}[Proof sketch]
Let $U\colon \Coalg(H)\to \C$ be the forgetful functor. Then $(c^\#)$ is a colimit cocone over $UD$ and $(Hc^\#\circ c)$ is a cocone over $UD$, and $t$ is the unique mediating morphism such that the diagram below commutes for every $(C,c)\in \fwf$. It suffices to prove that
  $(Hc^\#\circ c)$ forms a colimit cocone over $UD$; then $t$ is an isomorphism by uniqueness of colimits. To this end, we verify the criterion of \autoref{lem:filtered-colimits}: for $X\in \C_\fp$, every morphism $q\colon X \to HT$ factorizes essentially uniquely through the cocone $(Hc^\# \circ c)$.
\[  
\begin{tikzcd}[row sep=3mm,outer sep=0pt]
T \ar[dashed]{r}{t} & HT \\
C \ar{u}[outer sep=5pt]{c^\#} \ar{r}{c} & HC \ar{u}[swap,outer sep=5pt]{Hc^\#}
\end{tikzcd}
\]
To construct the factorization, we first observe that since $(c^\#)$ is a filtered colimit and $H$ is finitary, $(Hc^\#)$ is a filtered colimit. Therefore, since $X\in \C_\fp$, the morphism $q$ factorizes as $q=Hc^\# \circ p$ for some $(C,c)\in \Coalg_{\fp,\wf}(H)$ and $p\colon X\to HC$. Consider the coproduct extension $(C+X,c_p)$ of $(C,c)$ by $p$ (\autoref{cons:cp}). Note that $(C+X,c_p)\in \Coalg_{\fp,\wf}(H)$ because coproduct extension preserves well-foundedness (\autoref{prop:wf-coalgebra-coproduct}) and $\C_\fp\subto \C$ is closed under finite coproducts (\autoref{lem:fp-fp-fin-colimits}). It is not difficult to see that $q\colon X\to HT$ satisfies $q=(\,X \flatxto{\inr} C+X \flatxto{c_p}  H(C+X)\flatxto{Hc_p^\#}  HT\,)$; hence $q$ factorizes through $(Hc^\# \circ c)$.
\end{proof}

\begin{proofappendix}{lem:t-iso-wf}
Let $U\colon \Coalg(H)\to \C$ denote the forgetful functor. Then $(c^\#)$ is a colimit cocone over $UD$ and $(Hc^\#\circ c)$ is a cocone over $UD$, and $t$ is the unique mediating morphism such that
\[  
\begin{tikzcd}
T \ar[dashed]{r}{t} & HT \\
C \ar{u}{c^\#} \ar{r}{c} & HC \ar{u}[swap]{Hc^\#}
\end{tikzcd}
\]
commutes for every $(C,c)\in \fwf$. (This observation underlies the proof that $U$ creates colimits, see \autoref{lem:colimits-coalg}.) It suffices to show that $(Hc^\#\circ c)$ forms a colimit cocone over $UD$; then $t$ is an isomorphism by the uniqueness of colimits. To this end, we verify the criterion of \autoref{lem:filtered-colimits}, i.e.\ we show that for $X\in \C_\fp$, every morphism from $X$ to $HT$ factorizes essentially uniquely through the cocone $(Hc^\# \circ c)$.
\begin{enumerate}
\item Let $q\colon X\to HT$ be a morphism. Since $H$ is finitary, $(Hc^\#)$ is a colimit cocone over $HUD$. Hence there exists $(C,c)\in \fwf$  and $p\colon X\to HC$ such that
\[ q=Hc^\# \circ p. \]
Consider the coproduct extension $(C+X,c_p)$ of $(C,c)$ by $p$ (\autoref{cons:cp}). By \autoref{prop:wf-coalgebra-coproduct} and \autoref{lem:fp-fp-fin-colimits}, we have $(C+X,c_p)\in \fwf$. Moreover, the following diagram commutes:
\[
\begin{tikzcd}
X \ar{dd}[swap]{\inr} \ar{rr}{q} \ar{dr}{p} & & HT \\
& HC \ar{ur}{Hc^\#} \ar{dr}{H\inl} & \\
C+X \ar{ur}{[c,p]} \ar{rr}{c_p} & & H(C+X) \ar{uu}[swap]{Hc_p^\#}
\end{tikzcd}
\]
Indeed, the right-hand triangle commutes because $\inl\colon (C,c)\to (C+X,c_p)$ is a coalgebra morphism (\autoref{lem:coproduct-inl-coalg-mor}) and $(-)^\#$ is a cocone. The remaining parts clearly commute. Hence the outside commutes, which proves that $q$ factorizes through $Hc_p^\# \circ c_p$ via $\inr$.
\item To prove essential uniqueness of factorizations, let $(C,c)\in \fwf$ and let $f,f'\colon X\to C$ be morphisms such that 
\[ q:= Hc^\# \circ c \circ f = Hc^\# \circ c \circ f'.\]
Our task is to produce a coalgebra morphism in $\fwf$ that merges $f$ and $f'$. Put
\[ p:=c \circ f\qqand p':=c\circ f'.  \]
 Since $(Hc^\#)$ is a colimit cocone over $HUD$ and $X\in \C_\fp$, there exists $g\colon (C,c)\to (\ol{C},\ol{c})$ in $\fwf$ such that  
\[ Hg\circ p = Hg\circ p'.\]
Therefore, after replacing $f,f'$ with $g\circ f, g\circ f'$, we may assume w.l.o.g.\ that $p=p'$.
\[
\begin{tikzcd}[row sep=3em, column sep=5em]
X \ar{r}{q} \ar[shift right=1]{dr}[swap]{p} \ar[shift left=1]{dr}{p'} \ar[shift right=1, bend right=10]{ddr}[swap]{f} \ar[shift left=1, bend right=10]{ddr}{f'} & HT & \\
& HC \ar{r}{Hg} \ar{u}[swap]{Hc^\#} & H\ol{C} \ar{ul}[swap]{H\ol{c}^\#} \\
& C \ar{u}[swap]{c} \ar{r}{g} & \ol{C} \ar{u}[swap]{\ol{c}}
\end{tikzcd}
\]
Then $[\id,f],[\id,f']\colon (C+X,c_p)\to (C,c)$ are coalgebra morphisms; this is shown by the commutative diagram below for $[\id,f]$, and analogously for $[\id,f']$, using that $p=p'$.
\[
\begin{tikzcd}
C+X \ar{r}{[\id,f]}  \ar{d}[swap]{[c,p]} \ar[shiftarr={xshift=-14mm}]{dd}[swap]{c_p} & C \ar{dd}{c} \\
HC \ar{d}[swap]{H\inl} & \\
H(C+X) \ar{r}{H[\id,f]} & HC
\end{tikzcd}
\]
Since $\fwf$ is filtered, there exists a morphism  $h\colon (C,c)\to (C',c')$ in $\fwf$ merging $[\id,f]$ and $[\id,f']$:
\[  
\begin{tikzcd}
(C+X,c_p) \ar[shift left=1]{r}{[\id,f]} \ar[shift right=1]{r}[swap]{[\id,f']} & (C,c) \ar{r}{h} & (C',c').
\end{tikzcd}
\]
In particular, we have $h\circ f= h\circ f'$, as required.\qedhere
\end{enumerate}
\end{proofappendix}

From the `Lambek lemma', the First Initial Algebra Theorem is immediate:

\begin{proof}[Proof of \autoref{thm:wf-initial-algebra}]
The coalgebra $(T,t)$ is well-founded because it is a colimit of well-founded coalgebras (\iref{prop:wf}{closed-colim-quotients}), and so it is recursive (\iref{prop:wf}{wf-to-rec}). Clearly, every recursive coalgebra whose structure is an isomorphism is initial as an algebra.
\end{proof}

Wißmann and Milius~\cite{wm24} have recently shown an analogue of \autoref{thm:wf-initial-algebra} for \emph{recursive} instead of well-founded coalgebras. Consider the following modification of \autoref{not:T}:

\begin{construction}\label{not:barT}
Let $(\ol{T},\ol{t})$ be the colimit of all coalgebras in $\frc$, i.e.\ the colimit of the embedding $\ol{D}\colon \frc\subto \Coalg(H)$. 
\end{construction}
This colimit is filtered because $\Coalg_{\fp,\rec}(H)$ is finitely cocomplete, using that both $\C_\fp\subto \C$ and $\Coalg_\rec(H)\subto \Coalg(H)$ are closed under finite colimits (\autoref{lem:fp-fp-fin-colimits}, \autoref{lem:rec-coalg-colimits}). It turns out that \autoref{not:barT} also yields the initial algebra:

\begin{theorem}[Second Initial Algebra Theorem~\cite{wm24}]\label{thm:rec-initial-algebras}
The initial algebra for $H$ is the colimit of all fp-carried recursive coalgebras, that is, the coalgebra structure $\ol{t}\colon \ol{T}\to H\ol{T}$ is an isomorphism and the algebra $(\ol{T},\ol{t}^{-1})$ is initial for $H$.
\end{theorem}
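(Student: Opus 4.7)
The plan is to mirror the proof of the First Initial Algebra Theorem (\autoref{thm:wf-initial-algebra}), replacing well-foundedness with recursiveness throughout. Concretely, I would first establish a ``Lambek lemma'' for $(\ol{T},\ol{t})$, namely that $\ol{t}\colon \ol{T}\to H\ol{T}$ is an isomorphism, and then conclude by the fact that $\Coalg_\rec(H)$ is closed under colimits (\autoref{lem:rec-coalg-colimits}): $(\ol{T},\ol{t})$ is a colimit of recursive coalgebras hence recursive, so its inverse structure exhibits it as the final recursive coalgebra, which is equivalent to being the initial algebra.

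The heart of the argument is to show that $\ol{t}$ is an isomorphism. As in \autoref{lem:t-iso-wf}, $\ol{t}$ is the unique mediator from the colimit cocone $(\ol{c}^\#)$ over $U\ol{D}$ to the cocone $(H\ol{c}^\#\circ \ol{c})$, so it suffices to prove that $(H\ol{c}^\#\circ \ol{c})$ is itself a colimit cocone. Using that the colimit defining $\ol{T}$ is filtered (\autoref{not:barT}) and $H$ is finitary, the family $(H\ol{c}^\#)$ is a filtered colimit; I would then verify the criterion of \autoref{lem:filtered-colimits} for objects $X\in \C_\fp$, that is, show essentially unique factorization of every $q\colon X\to H\ol{T}$ through $(H\ol{c}^\#\circ \ol{c})$.

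For the existence of the factorization, write $q = H\ol{c}^\# \circ p$ with $p\colon X\to HC$ for some $(C,c)\in \Coalg_{\fp,\rec}(H)$, form the coproduct extension $(C+X,c_p)$ from \autoref{cons:cp}, and observe that $q$ then factorizes as $q = Hc_p^\# \circ c_p \circ \inr$ by the same commutative diagram used in the proof of \autoref{lem:t-iso-wf} (the right-hand triangle commutes since $\inl\colon (C,c)\to (C+X,c_p)$ is a coalgebra morphism). Essential uniqueness is obtained via the same merging argument, exploiting that $\Coalg_{\fp,\rec}(H)$ is a filtered category (it is finitely cocomplete, using \autoref{lem:fp-fp-fin-colimits} and \autoref{lem:rec-coalg-colimits}).

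The main obstacle, and the only substantive difference from the well-founded case, is to show that coproduct extension preserves recursiveness, i.e.\ that $(C+X,c_p)\in \Coalg_{\fp,\rec}(H)$ whenever $(C,c)\in \Coalg_{\fp,\rec}(H)$ and $X\in \C_\fp$. Closure of $\C_\fp$ under finite coproducts is immediate (\autoref{lem:fp-fp-fin-colimits}), so only recursiveness needs work. Here I would argue directly: given an algebra $(A,a)$, any coalgebra-to-algebra morphism $h\colon (C+X,c_p)\to (A,a)$ splits as $h=[h_C,h_X]$, and the defining equation for $h$, combined with $c_p = H\inl\circ [c,p]$, forces $h_C = a\circ Hh_C\circ c$ (so $h_C$ is a coalgebra-to-algebra morphism $(C,c)\to (A,a)$, hence unique by recursiveness of $(C,c)$) and $h_X = a\circ Hh_C\circ p$ (hence uniquely determined by $h_C$). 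Conversely, taking $h_C$ as the unique coalgebra-to-algebra morphism from $(C,c)$ and defining $h_X$ by this formula yields an existence witness. This recursive analogue of \autoref{prop:wf-coalgebra-coproduct} is the key new ingredient; everything else is a direct transcription of the proof of \autoref{thm:wf-initial-algebra}.
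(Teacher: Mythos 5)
Your proposal is correct and follows essentially the same route as the paper: the key new ingredient you identify (coproduct extension preserves recursiveness, with existence and uniqueness of coalgebra-to-algebra morphisms argued exactly via $h_C = a\circ Hh_C\circ c$ and $h_X = a\circ Hh_C\circ p$) is precisely the paper's \autoref{prop:rec-coalgebra-coproduct}, and the rest is the same transcription of the well-founded argument that the paper performs in \autoref{lem:t-iso}.
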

We stress that \autoref{thm:rec-initial-algebras} does not need any conditions on $\C$ and $H$ beyond those of \autoref{asm}. However, in the case where the additional conditions (\injmono) and (\hpresint) are satisfied, the first initial algebra theorem can be regarded as an improvement of the second one, as it constructs the initial algebra from a smaller diagram. Indeed, under these conditions well-founded coalgebras form a subclass of recursive ones (\autoref{prop:wf}), and this subclass is generally proper, as witnessed by the coalgebra in \autoref{ex:konig-lemma-recursive}.

The idea underlying our proof of \autoref{thm:wf-initial-algebra} carries over easily to the recursive case and leads to a new proof of \autoref{thm:rec-initial-algebras} that is substantially shorter and simpler than the original one~\cite{wm24}. Essentially, the only thing we need to observe is that coproduct extension not only preserves well-foundedness (\autoref{prop:wf-coalgebra-coproduct}), but also recursivity:

\begin{proposition}\label{prop:rec-coalgebra-coproduct}
For every recursive coalgebra $(C,c)$ and every morphism $p\colon X\to HC$, the coproduct extension $(C+X,c_p)$ is a recursive coalgebra.
\end{proposition}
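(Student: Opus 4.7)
The plan is to exploit the universal property of the coproduct $C+X$ to reduce the recursivity of $(C+X, c_p)$ to the recursivity of $(C,c)$. Fix an algebra $(A,a)$ for $H$; I must show that there is a unique $h\colon C+X \to A$ with $h = a\circ Hh \circ c_p$.

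First, I would decompose any such $h$ as $h = [h_1, h_2]$ with $h_1 = h\circ \inl$ and $h_2 = h\circ \inr$. Precomposing the fixpoint equation with $\inl$ and $\inr$ and using $c_p\circ\inl = H\inl\circ c$ together with $c_p\circ \inr = H\inl\circ p$, the functoriality of $H$ yields
\begin{align*}
h_1 &= a\circ Hh \circ H\inl \circ c \;=\; a\circ H h_1 \circ c, \\
h_2 &= a\circ Hh \circ H\inl \circ p \;=\; a\circ H h_1 \circ p.
\end{align*}
The first equation states exactly that $h_1\colon (C,c)\to(A,a)$ is a coalgebra-to-algebra morphism; since $(C,c)$ is recursive, there is a unique such $h_1$. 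The second equation then determines $h_2$ uniquely from $h_1$. This establishes at most one candidate.

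For existence, I would define $h_1$ to be the unique coalgebra-to-algebra morphism $(C,c)\to(A,a)$ guaranteed by recursivity, set $h_2 := a\circ Hh_1 \circ p$, and let $h := [h_1, h_2]$. A short calculation then verifies $h = a \circ Hh \circ c_p$ by checking the two coproduct components separately: on the $\inl$ side it reduces to the defining equation of $h_1$, and on the $\inr$ side to the definition of $h_2$.

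There is no real obstacle here: the argument is a direct application of the coproduct universal property and the definition of $c_p$. The only subtlety worth stating explicitly is that the expression $Hh_1$ appearing in both component equations means that $h_2$ plays a purely ``passive'' role — it depends on $h_1$ but does not feed back into it — which is precisely what makes coproduct extension transparent for recursivity.
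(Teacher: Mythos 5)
Your proof is correct and coincides with the paper's own argument: the paper also splits a candidate morphism into its two coproduct components, identifies the $C$-component as the unique coalgebra-to-algebra morphism out of $(C,c)$, and sets the $X$-component to $a\circ Hh_1\circ p$ (its existence part is literally the composite $[\id,a]\circ(h+Hh)\circ(\id+p)$, which is your $[h_1,h_2]$). No differences worth noting.
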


\begin{proofappendix}{prop:rec-coalgebra-coproduct}
Suppose that $(C,c)$ is a recursive coalgebra, and let $(A,a)$ be an $H$-algebra. We need to show that there exists a unique coalgebra-to-algebra morphism from $(C+X,c_p)$ to $(A,a)$.

\medskip\noindent\emph{Existence.} Since $(C,c)$ is recursive, there exists a unique coalgebra-to-algebra morphism $h\colon (C,c)\to (A,a)$. We claim that the composite
\[ C+X \xto{\id+p} C+HC \xto{h+Hh} A+HA \xto{[\id,a]} A \]
is a coalgebra-to-algebra morphism from $(C+X, c_p)$ to $(A,a)$, which means that the outside of the diagram below commutes:
\[
\begin{tikzcd}[column sep=35]
C+X \ar[shiftarr={xshift=-14mm}]{dd}[swap]{c_p} \ar{r}{\id+p} \ar{d}[swap]{[c,p]} & C+HC \ar{r}{h+Hh} & A+HA \ar{r}{[\id,a]} & A \\
HC \ar{rrr}{Hh} \ar{d}[swap]{H\inl} & & & HA \ar{u}[swap]{a} \\
H(C+X) \ar{r}{H(\id+p)} & H(C+HC) \ar{r}{H(h+Hh)} & H(A+HA) \ar{r}{H([\id,a])} & HA \ar[equals]{u}
\end{tikzcd}
\]
Indeed, the lower part commutes trivially, and the left-hand part by definition of $c_p$. For the upper part consider its precomposition with the coproduct injections of $C+X$. The precomposition with $\inl$ yields the diagram expressing that $h$ is a coalgebra-to-algebra morphism, and the precomposition with $\inr$ commutes trivially.

\medskip\noindent\emph{Uniqueness.} Let $[h,k]\colon (C+X,c_p)\to (A,a)$ be a coalgebra-to-algebra morphism. Then the outside of the diagram
\[
\begin{tikzcd}
C+X \ar[shiftarr={xshift=-18mm}]{dd}[swap]{c_p} \ar{r}{[h,k]} \ar{d}[swap]{[c,p]} & A \\
HC \ar{r}{Hh} \ar{d}[swap]{H\inl} & HA \ar{u}[swap]{a} \\
H(C+X) \ar{r}{H[h,k]} & HA \ar[equals]{u}
\end{tikzcd}
\]
commutes, as do its lower and left-hand parts. Hence the upper part commutes. Precomposing it with $\inl$ shows that $h\c (C,c)\to (A,a)$ is a coalgebra-to-algebra morphism; thus $h$ is uniquely determined because $(C,c)$ is recursive. Precomposition with $\inr$ shows that $k=a\circ Hh\circ p$, so $k$ is determined by $h$. \qedhere 
\end{proofappendix}
Using this result, we can now proceed exactly like in the proof of \autoref{thm:wf-initial-algebra}. First, we establish the `Lambek lemma' for the coalgebra $(\ol{T},\ol{t})$:

\begin{lemma}\label{lem:t-iso}
The coalgebra structure $\ol{t}\colon \ol{T}\to H\ol{T}$ is an isomorphism.
\end{lemma}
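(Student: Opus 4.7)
The plan is to mirror the proof of \autoref{lem:t-iso-wf} step by step, simply replacing each appeal to \autoref{prop:wf-coalgebra-coproduct} by the corresponding statement for recursive coalgebras (\autoref{prop:rec-coalgebra-coproduct}), and each appeal to closure properties of $\Coalg_\wf(H)$ by the analogous closure properties of $\Coalg_\rec(H)$ (\autoref{lem:rec-coalg-colimits}). Concretely, let $U\colon \Coalg(H)\to \C$ be the forgetful functor. Then $(\ol{c}^\#\colon C\to \ol{T})_{(C,c)\in\frc}$ is a filtered colimit cocone over $U\ol{D}$ (this holds in $\Coalg(H)$ by construction, and is transferred to $\C$ via \autoref{lem:colimits-coalg}), while $(H\ol{c}^\#\circ c)$ is a cocone over $U\ol{D}$ with $\ol{t}$ as its unique mediating morphism. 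Hence it suffices to show that $(H\ol{c}^\#\circ c)$ is itself a colimit cocone in $\C$.

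To this end I would invoke \autoref{lem:filtered-colimits}: it is enough to check that every morphism $q\colon X\to H\ol{T}$ with $X\in \C_\fp$ factorizes essentially uniquely through $(H\ol{c}^\#\circ c)$. Since $H$ is finitary, $(H\ol{c}^\#)$ is again a filtered colimit, so we obtain $(C,c)\in\frc$ and $p\colon X\to HC$ with $q=H\ol{c}^\#\circ p$. I then form the coproduct extension $(C+X,c_p)$ from \autoref{cons:cp}. By \autoref{prop:rec-coalgebra-coproduct} this coproduct extension is recursive, and because $\C_\fp$ is closed under finite colimits (\autoref{lem:fp-fp-fin-colimits}), its carrier $C+X$ is finitely presentable; hence $(C+X,c_p)\in\frc$. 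Using that $\inl\colon (C,c)\to (C+X,c_p)$ is a coalgebra morphism (\autoref{lem:coproduct-inl-coalg-mor}) and that $(-)^\#$ is a cocone, one verifies as in \autoref{lem:t-iso-wf} that $q=H(c_p)^\#\circ c_p\circ \inr$, giving the desired factorization.

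For essential uniqueness, suppose $(C,c)\in\frc$ and $f,f'\colon X\to C$ satisfy $H\ol{c}^\#\circ c\circ f=H\ol{c}^\#\circ c\circ f'$. Setting $p:=c\circ f$ and $p':=c\circ f'$, the finitary colimit $(H\ol{c}^\#)$ at the finitely presentable object $X$ produces a morphism $g\colon (C,c)\to (\ol{C},\ol{c})$ in $\frc$ with $Hg\circ p=Hg\circ p'$; after replacing $(C,c)$ by $(\ol{C},\ol{c})$ we may assume $p=p'$. Then $[\id,f],[\id,f']\colon (C+X,c_p)\to (C,c)$ are parallel coalgebra morphisms (again thanks to \autoref{prop:rec-coalgebra-coproduct} ensuring the domain lies in $\frc$), and since $\frc$ is filtered (\autoref{not:barT}), a merging morphism in $\frc$ exists, whose restriction along $\inr$ yields $h\colon (C,c)\to (C',c')$ in $\frc$ with $h\circ f=h\circ f'$.

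The main obstacle, compared with the well-founded case, is purely bookkeeping: one must be careful that all the auxiliary coalgebras produced (the coproduct extension, the merging coalgebra) indeed land in $\frc$ and not merely in $\Coalg_\rec(H)$. Preservation of recursivity under coproduct extension is handled by \autoref{prop:rec-coalgebra-coproduct}, and preservation of finite presentability of carriers by \autoref{lem:fp-fp-fin-colimits}; once these are in hand, the argument goes through verbatim as in \autoref{lem:t-iso-wf}.
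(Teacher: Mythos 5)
Your proposal is correct and follows exactly the route the paper takes: the paper's proof of this lemma is literally stated as being identical to that of \autoref{lem:t-iso-wf}, with $\frc$ and $\ol{D}$ in place of $\fwf$ and $D$, and with \autoref{prop:rec-coalgebra-coproduct} replacing \autoref{prop:wf-coalgebra-coproduct} to show $(C+X,c_p)\in\frc$. Your bookkeeping of where filteredness of $\frc$ and closure of $\C_\fp$ under finite coproducts enter is exactly what is needed, so nothing further is required.
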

The proof is identical to that of \autoref{lem:t-iso-wf}, except than one considers $\frc$ and $\ol{D}$ instead of $\fwf$ and $D$, and uses \autoref{prop:rec-coalgebra-coproduct} to show $(C+X,c_p)\in \frc$. 

From the `Lambek lemma', the Second Initial Algebra Theorem is now again immediate:

\begin{proof}[Proof of \autoref{thm:rec-initial-algebras}]
The coalgebra $(\ol{T},\ol{t})$ is recursive because it is a colimit of recursive coalgebras (\autoref{lem:rec-coalg-colimits}). Therefore $(\ol{T},\ol{t}^{-1})$ is an initial algebra.
\end{proof}

\begin{rem}
It is instructive to compare \autoref{thm:wf-initial-algebra} and \ref{thm:rec-initial-algebras} with the well-known iterative construction of initial algebras due to  Ad\'amek~\cite{adamek74}:
For every finitary endofunctor~$H$ on a cocomplete category $\C$, the initial algebra is the colimit of the $\omega$-chain
\[
(0,\initial)\xto{\smash{\initial}}
(H0,H\initial)\xto{\smash{H\initial}}
(HH0,HH\initial) \xto{\smash{HH\initial}}
\cdots \xto{\smash{H^{n-1}\initial}} 
(H^n0,H^n\initial)\xto{\smash{H^n\initial}} \cdots 
\]
in $\Coalg(H)$, where $0$ is the initial object of $\C$ and $\initial\colon 0\to H0$ is the unique morphism. Similar to \autoref{thm:wf-initial-algebra} and \ref{thm:rec-initial-algebras}, this construction approximates the initial algebra from below: intuitively, one thinks of the initial algebra as the set of all finite trees with branching type $H$, and of $H^n 0$ as the set of trees of height less than $n$. The approximating coalgebras $(H^n0,H^n\initial)$ are recursive~\cite[Prop.~6]{cuv06}, but in contrast to \autoref{thm:wf-initial-algebra} and \ref{thm:rec-initial-algebras}, they are generally neither well-founded nor fp-carried.
\end{rem}

\section{Conclusion and Future Work}
We have established two novel results in the theory of well-founded coalgebras over locally finitely presentable categories. Our first contribution is a coalgebraic version of \Konig's lemma, lifting a previous result for coalgebras over sets~\cite{amm25} to full categorical generality. We have illustrated the scope of our coalgebraic \Konig's lemma by exploring systems with state spaces beyond sets, such as graphs in a topos, and nominal and convex transition systems. As our second contribution, we devised a new construction of the initial algebra for a finitary functor as the colimit of all well-founded coalgebras with finitely presentable state space. 

One key application of well-foundedness is termination analysis of programs~\cite{cpr11}, which amounts to associating cleverly chosen well-founded relations to the state graph of a program. We aim to explore a coalgebraic perspective on termination analysis techniques, in particular in relation with the recent advances in modelling stateful languages as coalgebras~\cite{gmstu25}. 

A further interesting route is studying connections between well-founded coalgebras and \emph{well-structured transition systems}~\cite{fs01}, which involve a well-quasi-order on states and provide a uniform framework for a class of decidability results in verification, e.g.~for Petri nets.

\label{maintextend}
\bibliography{refs}

\clearpage
\appendix
\ifthenelse{\boolean{proofsinappendix}}{%
\section{Omitted Proofs and Further Details}
\closeoutputstream{proofstream}
\input{\jobname-proofs.out}
}{}

\end{document}